\newenvironment{marked}{}{}
\enforceshort\newcommand{\ONLYFULL}[1]{}\newcommand{\ONLYSHORT}[1]{#1}\else\newcommand{\ONLYFULL}[1]{#1}\newcommand{\ONLYSHORT}[1]{}\fi
\declaretheorem[numberwithin=section]{theorem}
\declaretheorem[sibling=theorem]{property}
\declaretheorem[sibling=theorem]{lemma}
\declaretheorem[sibling=theorem]{claim}
\pgfplotsset{compat=newest}
\newcounter{nummer}
\newcommand{\rot}{{\ensuremath{\textup{rot}}}}
\newcommand{\Tr}{{\ensuremath{\textup{tr}}}}
\newcommand{\sym}{{\ensuremath{\textup{sym}}}}
\newcommand{\nphi}{{\ensuremath{N}}}
\newcommand{\mphi}{{\ensuremath{M}}}
\newcommand{\ddfr}{\ensuremath{d_{\textup{dF}}}}
\newcommand{\dfr}{\ensuremath{d_{\textup{F}}}}
\newcommand{\Fr}{Fr\'echet }
\newcommand{\sat}{\ensuremath{\mathsf{sat}}}
\newcommand{\kSAT}{\ensuremath{\mathsf{k\text{-}SAT}}}
\newcommand{\threeSAT}{\ensuremath{3\mathsf{\text{-}SAT}}}
\newcommand{\CNFSAT}{\ensuremath{\mathsf{CNF\text{-}SAT}}}
\newcommand{\CNFSETH}{\ensuremath{\mathsf{SETH'}}}
\newcommand{\SETH}{\ensuremath{\mathsf{SETH}}}
\newcommand{\ETH}{\ensuremath{\mathsf{ETH}}}
\newcommand{\true}{\ensuremath{\mathsf{T}}}
\newcommand{\false}{\ensuremath{\mathsf{F}}}
\newcommand{\ORTHOG}{\ensuremath{\mathsf{Orthog}}}
\newcommand{\ORTHOGHYPO}{\ensuremath{\mathsf{OrthogHypothesis}}}
\newcommand{\R}{\mathbb{R}}
\newcommand{\N}{\mathbb{N}}
\newcommand{\Oh}{\mathcal{O}}
\newcommand{\IGNORE}[1]{}
\def\mod{\operatorname{mod}}
\def\min{\operatorname{min}}
\def\max{\operatorname{max}}
\newcommand{\propref}[1]{Property~\ref{prop:#1}}
\newcommand{\thmref}[1]{Theorem~\ref{thm:#1}}
\newcommand{\thmrefss}[3]{Theorems~\ref{thm:#1}, \ref{thm:#2}, and~\ref{thm:#3}}
\newcommand{\lemref}[1]{Lemma~\ref{lem:#1}}
\newcommand{\lemrefs}[2]{Lemmas~\ref{lem:#1} and~\ref{lem:#2}}
\newcommand{\secref}[1]{Section~\ref{sec:#1}}
\renewcommand{\epsilon}{\ensuremath{\varepsilon}}
\newcommand{\eps}{\ensuremath{\varepsilon}}
\newcommand{\temporary}[1]{}
\let\oldsqrt\sqrt
\def\hksqrt{\mathpalette\DHLhksqrt}
\def\DHLhksqrt#1#2{\setbox0=\hbox{$#1\oldsqrt{#2\,}$}\dimen0=\ht0
   \advance\dimen0-0.2\ht0
   \setbox2=\hbox{\vrule height\ht0 depth -\dimen0}%
   {\box0\lower0.4pt\box2}}
\renewcommand{\sqrt}{\hksqrt}
\renewcommand{\le}{\leqslant}
\renewcommand{\ge}{\geqslant}
\def\nphantom{\v@true\h@true\nph@nt}
\def\nvphantom{\v@true\h@false\nph@nt}
\def\nhphantom{\v@false\h@true\nph@nt}
\def\nph@nt{\ifmmode\def\next{\mathpalette\nmathph@nt}%
  \else\let\next\nmakeph@nt\fi\next}
\def\nmakeph@nt#1{\setbox\z@\hbox{#1}\nfinph@nt}
\def\nmathph@nt#1#2{\setbox\z@\hbox{$\m@th#1{#2}$}\nfinph@nt}
\def\nfinph@nt{\setbox\tw@\null
  \ifv@ \ht\tw@\ht\z@ \dp\tw@\dp\z@\fi
  \ifh@ \wd\tw@-\wd\z@\fi \box\tw@}
\def\now{\minute=\time \hour=\time \divide \hour by 60 \hourMins=\hour \multiply\hourMins by 60
  \advance\minute by -\hourMins \zeroPadTwo{\the\hour}:\zeroPadTwo{\the\minute}}
\def\today{\the\year-\zeroPadTwo{\the\month}-\zeroPadTwo{\the\day}}
\def\zeroPadTwo#1{\ifnum #1<10 0\fi #1}
\newcommand{\myrho}{0.70710678}
\newcommand{\myd}{0.04}
\newcommand{\defpts}{%
  \coordinate (rTwo) at (-1/3+1/100,-1/2);
  \coordinate (cTwoTrueZero) at (0,-1/2+1/50);
  \coordinate (cTwoFalseZero) at (0,-1/2-1/50);
  \coordinate (cTwoTrueOne) at (1/3-1/100,-1/2+1/50);
  \coordinate (cTwoFalseOne) at (1/3-1/100,-1/2-1/50);
  \coordinate (sTwo) at (-1/3-1/50,0);
  \coordinate (tTwo) at (1/3+1/50,0);
  \coordinate (sTwoStar) at (-1/3-1/100,-4/5);
  \coordinate (tTwoStar) at (1/3+1/100,-4/5);
  \coordinate (rOne) at (-1/3,1/2);
  \coordinate (cOneTrueZero) at (0,1/2-1/50);
  \coordinate (cOneFalseZero) at (0,1/2+1/50);
  \coordinate (cOneTrueOne) at (1/3,1/2-1/50);
  \coordinate (cOneFalseOne) at (1/3,1/2+1/50);
  \coordinate (sOne) at (-1/3-1/50,1/5);
  \coordinate (tOne) at (1/3+1/50,1/5);
  \coordinate (UL1) at (-\myd,0);
  \coordinate (UL2) at (-\myrho+\myd,\myrho);
  \coordinate (UL3) at (-\myrho+\myd,3*\myrho);
  \coordinate (UL4) at (-\myrho+2*\myd,2*\myrho);
  \coordinate (UL5) at (-\myrho+2*\myd,\myrho+1*\myd);
  \coordinate (UR1) at (\myrho-2*\myd,\myrho+1*\myd);
  \coordinate (UR2) at (\myrho-2*\myd,2*\myrho);
  \coordinate (UR3) at (\myrho-\myd,3*\myrho);
  \coordinate (UR4) at (\myrho-\myd,\myrho);
  \coordinate (UR5) at (\myd,0);
}
\newcommand{\drawPOne}{
  \defpts;
  \node at (rOne) {};
  \node at (cOneTrueZero) {};
  \node at (cOneFalseZero) {};
  \node at (cOneTrueOne) {};
  \node at (cOneFalseOne) {};
  \node at (sOne) {};
  \node at (tOne) {};
  \path (sOne) edge (rOne)
        (rOne) edge (cOneFalseZero)
        (cOneFalseZero) edge (cOneTrueOne)
        (cOneTrueOne) edge (tOne)
        (tOne) edge (sOne)
        (sOne) edge (rOne)
        (rOne) edge (cOneTrueZero)
        (cOneTrueZero) edge (cOneFalseOne)
        (cOneFalseOne) edge (tOne);
}
\newcommand{\drawPTwo}{
  \defpts;
  \node at (rTwo) {};
  \node at (cTwoTrueZero) {};
  \node at (cTwoFalseZero) {};
  \node at (cTwoTrueOne) {};
  \node at (cTwoFalseOne) {};
  \node at (sTwo) {};
  \node at (tTwo) {};
  \node at (sTwoStar) {};
  \node at (tTwoStar) {};
  \path (sTwo) edge (sTwoStar)
        (sTwoStar) edge (rTwo)
        (rTwo) edge (cTwoTrueZero)
        (cTwoTrueZero) edge (cTwoFalseOne)
        (cTwoFalseOne) edge (rTwo)
        (rTwo) edge (cTwoFalseZero)
        (cTwoFalseZero) edge (cTwoTrueOne)
        (cTwoTrueOne) edge (tTwoStar)
        (tTwoStar) edge (tTwo);
}
\newcommand{\drawUL}{
  \defpts;
  \node at (UL1) {};
  \node at (UL2) {};
  \node at (UL3) {};
  \node at (UL4) {};
  \node at (UL5) {};
  \draw (UL1) -- (UL2) -- (UL3) -- (UL4) -- (UL5);
}
\newcommand{\drawUR}{
  \defpts;
  \node at (UR1) {};
  \node at (UR2) {};
  \node at (UR3) {};
  \node at (UR4) {};
  \node at (UR5) {};
  \draw (UR1) -- (UR2) -- (UR3) -- (UR4) -- (UR5);
}
\newcommand{\drawU}{
  \defpts;
  \drawUL;
  \drawUR;
  \draw (UL5) -- (UR1);
}
\title{
Why walking the dog takes time: \Fr distance has no strongly subquadratic algorithms unless SETH fails
%\thanks{Karl Bringmann is a recipient of the \emph{Google Europe Fellowship in Randomized Algorithms}, and this research is supported in part by this Google Fellowship.  %%% MOVED TO LAST SECTION
}
\author{Karl Bringmann\thanks{Max Planck Institute for Informatics, Campus E1 4, 66123 Saarbr\"ucken, Germany; \texttt{karl.bringmann@mpi-inf.mpg.de}. Karl Bringmann is a recipient of the \emph{Google Europe Fellowship in Randomized Algorithms}, and this research is supported in part by this Google Fellowship.}
%\and Tobias Friedrich$^2$
%\and Anton Krohmer$^2$
}
\begin{document}

\maketitle
%\begin{center}
%\small
%$^1$ 
%Max Planck Institute for Informatics, Campus E1 4, 66123 Saarbr\"ucken, Germany;\\
%e-mail: \texttt{karl.bringmann@mpi-inf.mpg.de}\\
%$^2$ 
%Friedrich-Schiller-Universit\"at Jena, Ernst-Abbe-Platz 2, 07743 Jena, Germany;\\
% e-mail:  \texttt{\{friedrich,anton.krohmer\}@uni-jena.de}
%\end{center}
%
%\ONLYSHORT{
%%\thispagestyle{empty}
%%\setcounter{page}{0}
%}
%\ONLYFULL{
%\setcounter{page}{13}
%}

\medskip

\begin{abstract}
The \Fr distance is a well-studied and very popular measure of similarity of two curves. Many variants and extensions have been studied since Alt and Godau introduced this measure to computational geometry in 1991. Their original algorithm to compute the \Fr distance of two polygonal curves with $n$ vertices has a runtime of $\Oh(n^2 \log n)$. More than 20 years later, the state of the art algorithms for most variants still take time more than $\Oh(n^2 / \log n)$, but no matching lower bounds are known, not even under reasonable complexity theoretic assumptions. %The problem has been conjectured to be 3SUM-hard, but recently this was shown to be unlikely [Buchin et al., SODA'14]. 

To obtain a conditional lower bound, in this paper we assume the Strong Exponential Time Hypothesis or, more precisely, that there is no $\Oh^*((2-\delta)^N)$ algorithm for \CNFSAT\ for any $\delta > 0$. Under this assumption we show that the \Fr distance cannot be computed in strongly subquadratic time, i.e., in time $\Oh(n^{2-\delta})$ for any $\delta > 0$. This means that finding faster algorithms for the \Fr distance is as hard as finding faster \CNFSAT\ algorithms, and the existence of a strongly subquadratic algorithm can be considered unlikely.

Our result holds for both the continuous and the discrete \Fr distance. 
We extend the main result in various directions. Based on the same assumption we (1) show non-existence of a strongly subquadratic 1.001-approximation, (2) present tight lower bounds in case the numbers of vertices of the two curves are imbalanced, and (3) examine realistic input assumptions ($c$-packed curves).

%The \Fr distance has been intensely studied with respect to realistic inputs, e.g., there is a $(1+\eps)$-approximation in near-linear time for so-called $c$-packed curves. We show that the runtime of this algorithm is essentially tight for some constant $\eps$, and partially resolve the dependence on varying $\eps$.

%We present extensions to approximation algorithms, different numbers of vertices, as well as realistic inputs ($c$-packed curves).

%Our result holds for the original as well as the discrete \Fr distance. We extend this result to show that there is no strongly subquadratic 1.001-approximation, and to the case where the curves have different numbers of vertices. Moreover, the \Fr distance has been intensely studied with respect to realistic inputs, e.g., there is a $(1+\eps)$-approximation in near-linear time for so-called $c$-packed curves. We show that the runtime of this algorithm is essentially tight for some constant $\eps$, and partially resolve the dependence on varying $\eps$.
\end{abstract}

\newpage

\section{Introduction} \label{sec:intro}

Intuitively, the (continuous) \Fr distance of two curves $P,Q$ is the minimal length of a leash required to connect a dog to its owner, as they walk along $P$ or $Q$, respectively, without backtracking. 
The \Fr distance is a very popular measure of similarity of two given curves. In contrast to distance notions such as the Hausdorff distance, it takes into account the order of the points along the curve, and thus better captures the similarity as perceived by human observers~\cite{Alt09}. 

Alt and Godau introduced the \Fr distance to computational geometry in 1991~\cite{AltG95,Godau91}. For polygonal curves $P$ and $Q$ with $n$ and $m$ vertices\footnote{We always assume that $m \le n$.}, respectively, they presented an $\Oh(n m \log (nm))$ algorithm. 
Since Alt and Godau's seminal paper, \Fr distance has become a rich field of research, with various directions
such as generalizations to surfaces (see, e.g.,~\cite{AltB10}), approximation algorithms for realistic input curves (\cite{AronovHPKW06,AltKW04,DriemelHPW12}), the geodesic and homotopic \Fr distance (see, e.g.,~\cite{ChambersETAL10,Wenk2010geodesic}), and many more variants (see, e.g.,~\cite{BuchinBW09,DriemelHP13,MaheshwariSSZ11,Indyk02}).
Being a natural measure for curve similarity, the \Fr distance has found applications in various areas such as signature verification (see, e.g.,~\cite{MunichP99}), map-matching tracking data (see, e.g.,~\cite{BrakatsoulasPSW05}), and moving objects analysis (see, e.g.,~\cite{BuchinBGLL11}).

A particular variant that we will also discuss in this paper is the \emph{discrete} \Fr distance. Here, intuitively the dog and its owner are replaced by two frogs, and in each time step each frog can jump to the next vertex along its curve or stay at its current vertex. Defined in~\cite{EiterM94}, the original algorithm for the discrete \Fr distance has runtime $\Oh(nm)$. 

Recently, improved algorithms have been found for some variants. Agarwal et al.~\cite{AgarwalBAKS13} showed how to compute the discrete \Fr distance in (mildly) subquadratic time $\Oh\big(nm \tfrac{\log \log n}{\log n}\big)$. Buchin et al.~\cite{BuchinBMM14} gave algorithms for the continuous \Fr distance that runs in time $\Oh(n^2 \sqrt{\log n} (\log \log n)^{3/2})$ on the Real RAM and $\Oh(n^2 (\log \log n)^2)$ on the Word RAM. However, the problem remains open whether there is a \emph{strongly subquadratic}\footnote{We use the term \emph{strongly subquadratic} to differentiate between this runtime and the \emph{(mildly) subquadratic} $\Oh(n^2 \log \log n / \log n)$ algorithm from~\cite{AgarwalBAKS13}.} algorithm for the \Fr distance, i.e., an algorithm with runtime $\Oh(n^{2-\delta})$ for any $\delta > 0$.
For a particular variant, the discrete \Fr distance with shortcuts, strongly subquadratic algorithms have been found recently~\cite{AvrahamFKKS13}, however, this seems to have no implications for the classical continuous or discrete \Fr distance.

The only known lower bound shows that the \Fr distance takes time $\Omega(n \log n)$ (in the algebraic decision tree model)~\cite{BuchinBKRW07}. 
The typical way of proving (conditional) quadratic lower bounds for geometric problems is via 3SUM~\cite{GajentaanO95}, in fact,
Alt conjectured that the \Fr distance is 3SUM-hard. Buchin et al.~\cite{BuchinBMM14} argued that the \Fr distance is unlikely to be 3SUM-hard, because it has strongly subquadratic decision trees. However, their argument breaks down in light of a recent result showing strongly subquadratic decision trees also for 3SUM~\cite{GronlundP14}. Hence, it is completely open whether the \Fr distance is 3SUM-hard.

\paragraph{Strong Exponential Time Hypothesis}
The Exponential Time Hypothesis (\ETH) and the Strong Exponential Time Hypothesis (\SETH), both introduced by Impagliazzo, Paturi, and Zane~\cite{ImpagliazzoPZ01,ImpagliazzoP01}, provide alternative ways of proving conditional lower bounds. ETH asserts that \threeSAT\ has no $2^{o(N)}$ algorithm, where $N$ is the number of variables, and can be used to prove matching lower bounds for a wealth of problems, see~\cite{LokshtanovMS11} for a survey. However, since this hypothesis does not specify the exact exponent, it is not suited for proving polynomial time lower bounds, where the exponent is important.

\begin{marked}
The stronger hypothesis \SETH\ asserts that there is no $\delta > 0$ such that \kSAT\ has an $\Oh((2-\delta)^N)$ algorithm for all~$k$. 
In this paper, we will use the following weaker variant, which has also been used in~\cite{PatrascuW10,RodittyVW13}.

\paragraph{Hypothesis \CNFSETH:} \emph{There is no $\Oh^*((2-\delta)^N)$ algorithm for \CNFSAT\ for any $\delta > 0$. Here, $\Oh^*$ hides polynomial factors in the number of variables $N$ and the number of clauses~$M$.}

\paragraph{}
While \SETH\ deals with formulas of width $k$, \CNFSETH\ deals with \CNFSAT, i.e., unbounded width clauses. Thus, it is a weaker assumption and more likely to be true. 
Note that exhaustive search takes time $\Oh^*(2^N)$, and the fastest known algorithms for \CNFSAT\ are only slighly faster than that, namely of the form $\Oh^*(2^{\nphi(1-C/\log(\mphi/\nphi))})$ for some positive constant $C$~\cite{DantsinH09,CalabroIP06}. Thus, \CNFSETH\ is a reasonable assumption that can be considered unlikely to fail.
It has been observed that one can use \SETH\ and \CNFSETH\ to prove lower bounds for polynomial time problems such as $k$-Dominating Set and others~\cite{PatrascuW10}, the diameter of sparse graphs~\cite{RodittyVW13}, and dynamic connectivity problems~\cite{AbboudVW14}. However, it seems to be applicable only for few problems, e.g., it seems to be a wide open problem to prove that 3SUM has no strongly subquadratic algorithms unless \SETH\ fails, similarly for matching, maximum flow, edit distance, and other classic problems.
\end{marked}

\paragraph{Main result}

Our main theorem gives strong evidence that the \Fr distance may have no strongly subquadratic algorithms by relating it to the Strong Exponential Time Hypothesis.

\begin{theorem} \label{thm:main}
  There is no $\Oh(n^{2-\delta})$ algorithm for the (continuous or discrete) \Fr distance for any $\delta > 0$, unless \CNFSETH\ fails.
\end{theorem}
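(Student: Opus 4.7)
The plan is to reduce \CNFSAT\ to deciding whether the \Fr distance of two polygonal curves is below a threshold, using the classical \emph{split-and-list} paradigm used for \SETH-based polynomial lower bounds. Given a formula with $N$ variables and $M$ clauses, I would split the variable set into two halves $V_1,V_2$ of size $N/2$ each, enumerate all $2^{N/2}$ partial assignments $a$ of $V_1$ and $b$ of $V_2$, and build two curves $P$ (of length $\Theta(2^{N/2} \cdot \mathrm{poly}(N,M))$) and $Q$ (similar) such that $\dfr(P,Q) \le 1$ iff some pair $(a,b)$ yields a satisfying assignment. Since $n = \Theta(2^{N/2}\cdot\mathrm{poly})$, any $\Oh(n^{2-\delta})$ algorithm would decide \CNFSAT\ in time $\Oh^*(2^{(1-\delta/2)N})$, refuting \CNFSETH.

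The construction will be modular: first I would design a small \emph{assignment gadget} $\mathrm{AS}_1(a)$, for each assignment $a$ of $V_1$, that in some precise sense ``encodes'' the truth values assigned by $a$ as a short polygonal curve, and symmetrically a gadget $\mathrm{AS}_2(b)$ for $V_2$. Next I would design \emph{clause gadgets} $\mathrm{CL}_1^c$, $\mathrm{CL}_2^c$ (one per clause $c$) such that the \Fr distance between $\mathrm{AS}_1(a) \circ \mathrm{CL}_1^c \circ \cdots$ and $\mathrm{AS}_2(b)\circ\mathrm{CL}_2^c\circ\cdots$ is small iff clause $c$ is satisfied by $a\cup b$. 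These clause gadgets have to be \emph{composable in series}, implementing a logical AND over clauses, which works naturally for \Fr distance because the parameterizations traverse gadgets left-to-right monotonically. Finally, I need an \emph{OR gadget} that assembles all the $2^{N/2}$ candidate pairs: the outer curve $P$ is the concatenation of the $\mathrm{AS}_1(a)$-based gadgets over all $a$, and $Q$ is a single copy of the $\mathrm{AS}_2(b)$-based gadget that is allowed to be matched against \emph{some} $a$-block of $P$. Making this ``choose a block to match'' behavior robust requires carefully designed \emph{entry/exit regions} between blocks that let the traversal of $Q$ either stay at the block boundary on $P$ (skipping) or enter the block (matching), while forbidding splitting a single $Q$-block across two $P$-blocks.

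The main technical obstacle will be engineering the OR-gadget together with the clause gadgets so that the reparameterization discipline of the \Fr distance (a continuous, monotone matching of the two parameter intervals) exactly simulates the desired Boolean semantics. In particular, I must ensure that (i) inside a block, the only way to keep the leash short is to match $\mathrm{AS}_2(b)$ with $\mathrm{AS}_1(a)$ in a way that checks every clause; (ii) between blocks, cheap ``waiting'' positions on $P$ allow $Q$ to make no progress while $P$ skips a block, but no cheap shortcut allows two partial assignments of $V_1$ to jointly satisfy the formula together with one $b$. Concretely, I would place the vertices on a fine grid in $\mathbb{R}^2$ and pick the threshold $1$ so that only specific geometric configurations (corresponding to a satisfied clause) can be traversed within distance $1$. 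I would carry this out first for the discrete \Fr distance, where the combinatorics are cleaner (jumps between vertex pairs on an $n\times m$ grid), then transfer to the continuous case by showing that the gadgets have ``no cheap shortcuts along edges''—i.e., that the optimal continuous reparameterization still snaps to the same vertex matching.

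Once the reduction is in place, the lower bound is immediate: a hypothetical algorithm with runtime $\Oh(n^{2-\delta})$ applied to curves of size $n=\Theta(2^{N/2}\cdot \mathrm{poly}(N,M))$ solves \CNFSAT\ in time $\Oh^*\!\big(2^{(2-\delta)N/2}\big) = \Oh^*\!\big((2-\delta')^N\big)$ for some $\delta'>0$, contradicting \CNFSETH. The same curves serve both the continuous and discrete versions, giving \thmref{main} in one shot.
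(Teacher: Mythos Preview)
Your overall plan is the paper's: split-and-list, assignment gadgets built from clause gadgets, AND via serial composition, an OR gadget on top, and discrete-first-then-continuous. The runtime arithmetic in your last paragraph is also correct.

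However, your OR-gadget description contains a real gap (and an internal inconsistency). You write that $P$ is the concatenation over all $a\in A_1$ while ``$Q$ is a single copy of the $\mathrm{AS}_2(b)$-based gadget.'' Taken literally this only implements an OR over $a$ for a \emph{fixed} $b$; to cover all $b$ you would then have to run $2^{N/2}$ separate \Fr computations, and $2^{N/2}\cdot \Oh(n^{2-\delta})$ with $n=\Theta(2^{N/2}M)$ is $\Oh^*(2^{(3-\delta)N/2})$, which does not beat $2^N$ and yields no contradiction. (You also write ``$2^{N/2}$ candidate pairs,'' but there are $2^N$ pairs.) You seem to sense something is off, since a few lines later you speak of ``a single $Q$-block'' as if $Q$ had several.

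The paper resolves this by making the OR gadget \emph{two-sided}: $P_1$ is the concatenation of all assignment gadgets for $A_1$ \emph{and} $P_2$ is the concatenation of all assignment gadgets for $A_2$. A handful of control points ($s_1,t_1$ interleaved between the $A_1$-gadgets, and $s_2,s_2^*,t_2^*,t_2$ framing the entire $A_2$-concatenation) create asymmetric ``waiting positions'' so that a width-$1$ traversal can fast-forward through $P_1$ while parked at $s_2$, then fast-forward through $P_2$ while parked at $s_1$, then traverse exactly one pair $(AG(a_1),AG(a_2))$ in lockstep, then exit symmetrically via $t_1,t_2$. The extra point $s_2^*$ (far from every $P_1$-vertex except $s_1$) is what forces the traversal to commit to a single $a_1$; the commitment to a single $a_2$ then comes from the fact that $r_1$ is close only to $r_2$ among the interior $P_2$-vertices. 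Once you make the OR two-sided in this way, the rest of your outline goes through exactly as in the paper.
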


\begin{marked}
Since \SETH\ and its weaker variant \CNFSETH\ are reasonable hypotheses, by this theorem one can consider it unlikely that the \Fr distance has strongly subquadratic algorithms. In particular, any strongly subquadratic algorithm for the \Fr distance would not only give improved algorithms for \CNFSAT\ that are much faster than exhaustive search, but also for various other problems such as Hitting Set, Set Splitting, and NAE-SAT via the reductions in~\cite{CyganETAL12}.
Alternatively, in the spirit of~\cite{PatrascuW10}, one can view the above theorem as a possible attack on \CNFSAT, as algorithms for the \Fr distance now could provide a route to faster \CNFSAT\ algorithms.
In any case, anyone trying to find strongly subquadratic algorithms for the \Fr distance should be aware that this is as hard as finding improved \CNFSAT\ algorithms, which might be impossible.
\end{marked}

We remark that all our lower bounds (unless stated otherwise) hold in the Euclidean plane, and thus also in $\R^d$ for any $d \ge 2$.

\paragraph{Extensions}
%We explore how far our main lower bound can be pushed. 
We extend our main result in two important directions: We show approximation hardness and we prove that the lower bound still holds for restricted classes of curves.

First, it would be desirable to have good approximation algorithms in strongly subquadratic time, say a near-linear time approximation scheme. We exclude such algorithms by proving that there is no 1.001-approximation for the \Fr distance in strongly subquadratic time unless \CNFSETH\ fails. 
Hence, within $n^{o(1)}$-factors any 1.001-approximation takes as much time as an exact algorithm. We did not try to optimize the constant 1.001, but only to find the asymptotically largest possible approximation ratio, which seems to be a constant.
We leave it as an open problem whether there is a strongly subquadratic $\Oh(1)$-approximation. The literature so far contains no strongly subquadratic approximation algorithms for general curves at all.

Second, it might be conceivable that if one curve has much fewer vertices than the other, i.e., $m \ll n$, then after some polynomial preprocessing on the smaller curve we can compute the \Fr distance of the two curves quickly, e.g., in total time $\Oh((n + m^3) \log n)$. 
Note that such a runtime is not ruled out by the trivial argument that any algorithm needs time $\Omega(n+m)$ for reading the input, and is also not ruled out by \thmref{main}, since the runtime is not subquadratic for $n=m$. We rule out such runtimes by proving that there is no $\Oh((nm)^{1-\delta})$ algorithm ``for any $m$'', unless \CNFSETH\ fails. More precisely, we prove this lower bound for the ``special case'' $m \approx n^\gamma$ for any constant $0 \le \gamma \le 1$. To make this formal, for any input parameter~$\alpha$ and constants $\gamma_0 < \gamma_1$ in $\R \cup \{-\infty,\infty\}$, we
say that a statement holds \emph{for any polynomial restriction of $n^{\gamma_0} \le \alpha \le n^{\gamma_1}$} if it holds restricted to instances with $n^{\gamma-\delta} \le \alpha \le n^{\gamma+\delta}$ for any constants $\delta > 0$ and $\gamma_0 + \delta \le \gamma \le \gamma_1 - \delta$. We obtain the following extension of the main result \thmref{main}, which yields tight lower bounds for any behaviour of $m$ and any $(1+\eps)$-approximation with $0 \le \eps \le 0.001$.

\begin{theorem} \label{thm:const}
  There is no $1.001$-approximation with runtime $\Oh((nm)^{1-\delta})$ for the (continuous or discrete) \Fr distance for any $\delta > 0$, unless \CNFSETH\ fails. 
  This holds for any polynomial restriction of $1 \le m \le n$.
\end{theorem}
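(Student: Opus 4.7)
The plan is to generalize the reduction that proves \thmref{main} by splitting the variables of the source \CNFSAT\ instance asymmetrically rather than into two equal halves, and to verify that the reduction already produces a constant-factor separation between \textsc{Yes}- and \textsc{No}-instances, which supplies the 1.001-approximation hardness for free once the constants in the gadgets are tuned.

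First I would revisit the proof of \thmref{main}. SETH-based quadratic lower bounds of this flavor proceed by partitioning the $N$ variables of a \CNFSAT\ instance into two groups $V_1,V_2$ of sizes $N_1,N_2$, and constructing two polygonal curves $P,Q$, where $P$ consists of one ``assignment gadget'' per partial assignment to $V_1$ concatenated in some canonical order, and likewise $Q$ for $V_2$. The gadgets are designed so that the \Fr distance of $P$ and $Q$ is at most some threshold $d_{\text{yes}}$ iff there exist partial assignments to $V_1$ and $V_2$ that jointly satisfy every clause, and otherwise the \Fr distance jumps to at least $d_{\text{no}} > d_{\text{yes}}$. In the balanced setting $N_1=N_2=N/2$ this gives $n,m = \Theta(2^{N/2})$ and an $\Oh((nm)^{1-\delta})$ algorithm solves \CNFSAT\ in $\Oh^*(2^{N(1-\delta)})$ time, contradicting \CNFSETH.

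To obtain the tight lower bound for imbalanced $m$ and $n$, I would instead split the variables with sizes $N_1 := \lfloor N/(1+\gamma)\rfloor$ and $N_2 := N - N_1$, so that $m = \Theta(2^{N_2})$ and $n = \Theta(2^{N_1})$, hence $m = \Theta(n^{\gamma})$. An $\Oh((nm)^{1-\delta})$ algorithm would then yield a \CNFSAT\ algorithm with running time $\Oh^*(2^{(N_1+N_2)(1-\delta)}) = \Oh^*(2^{N(1-\delta)})$, again contradicting \CNFSETH. To cover the full polynomial restriction $n^{\gamma-\delta}\le m \le n^{\gamma+\delta}$ (including the edge cases $\gamma\in\{0,1\}$), I would pad $P$ or $Q$ with ``neutral'' vertex repetitions that do not affect the \Fr distance between the two curves; this only requires showing that inserting copies of a vertex cannot decrease or increase the distance beyond the construction's tolerance, which is a standard local check on the free-space diagram.

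For the $1.001$-approximation hardness, the crucial observation is that the gadgets in the reduction for \thmref{main} already separate the \textsc{Yes} and \textsc{No} cases by a multiplicative constant, not merely additively: the geometric distances inside a single clause/assignment gadget differ by a constant factor depending only on the fixed gadget geometry (and not on $N$ or $n$). After rescaling this factor to be at least $1.001$, any $1.001$-approximation distinguishes the two cases, so the same lower bound applies. The main obstacle is this final verification: one must check that the asymmetric splitting preserves the gap, i.e., that when $N_1 \gg N_2$ (or vice versa), no ``shortcut'' in the \Fr traversal allows the long curve to match the short curve more cheaply than intended. This amounts to reexamining the free-space diagram of the reduction and confirming that each assignment gadget on the longer side must still be paired with some assignment gadget on the shorter side, so that the local clause-checking geometry --- and hence the constant-factor gap --- survives regardless of how $N_1,N_2$ are chosen.
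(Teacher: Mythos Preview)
Your approach is essentially the paper's: split the variables asymmetrically with $N_1 \approx N/(1+\gamma)$, reuse the gadget construction verbatim, and observe that an $\Oh((nm)^{1-\delta})$ algorithm yields an $\Oh^*(2^{(1-\delta)N})$ \CNFSAT\ algorithm. Two points where you diverge from the paper are worth flagging. First, the ``main obstacle'' you worry about is a non-issue: the correctness lemmas (\lemrefs{constIf}{constOnlyIf} and their continuous analogues \lemrefs{constIfnd}{constOnlyIfnd}) are already stated and proved for arbitrary sets $A_1,A_2$ of partial assignments, so nothing in the gap argument depends on $|A_1|=|A_2|$; no re-examination of the free-space diagram is needed, and no constants need ``tuning'' since the construction already uses $\eps=1/1000$. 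Second, to land inside the polynomial restriction $n^{\gamma-\delta}\le m \le n^{\gamma+\delta}$, the paper does not pad the curves but instead invokes \lemref{seth} to assume $M \le 2^{\delta N/4}$, which controls the $\Theta(M)$ factor hidden in $n = \Theta(M\cdot 2^{N_1})$ and $m = \Theta(M\cdot 2^{N_2})$ --- a factor your sketch omits. Your padding idea would also work, but bounding the clause count is cleaner and avoids having to argue that vertex repetitions are benign for both the discrete and continuous distance.
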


%We refrain from computing explicit values for $\alpha$, as our proof yields only a small constant. However, for some variants we are able to improve the above inapproximability result to a factor of $3 - \eps$. We remark that there are structural reasons that disallow our kind of reduction to prove any inapproximability ratio above $3$.
%
%\begin{theorem} \label{thm:three}
%  For the following variants of the \Fr distance there is no $(3-\eps)$-approximation in time $\Oh(n^{2-\eps})$ for any $\eps>0$ unless \CNFSETH\ fails:
%  \begin{compactenum}[(i)]
%  %\begin{enumerate}
%    \item discrete non-weak \Fr distance in $(\Rd,L_\infty)$ for sufficiently large $d$,
%    \item weak \Fr distance.
%  %\end{enumerate}
%  \end{compactenum}
%\end{theorem}

\paragraph{Realistic input curves}
In attempts to capture the properties of realistic input curves,
strongly subquadratic algorithms have been devised for restricted classes of inputs such as backbone curves~\cite{AronovHPKW06}, $\kappa$-bounded and $\kappa$-straight~\cite{AltKW04}, and $\phi$-low density curves~\cite{DriemelHPW12}.
The most popular model are \emph{$c$-packed curves}, which have been used for various generalizations of the \Fr distance~\cite{ChenDGNW11,HarPeledR11,DriemelHP13}.
Driemel et al.~\cite{DriemelHPW12} introduced this model and presented a $(1+\eps)$-approximation for the continuous \Fr distance that runs in time $\Oh(cn/\eps + cn \log n)$, which works in any $\R^d$, $d \ge 2$.

%\paragraph{New results for realistic input curves}
While the algorithm of~\cite{DriemelHPW12} is near-linear for small $c$ and $1/\eps$, is is not clear whether its dependence on $c$ and $1/\eps$ is optimal for $c$ and $1/\eps$ that grow with $n$.
We give strong evidence that the algorithm of~\cite{DriemelHPW12} has optimal dependence on $c$ for any constant $0 < \eps \le 0.001$.

\begin{theorem} \label{thm:cpacked}
  There is no $1.001$-approximation with runtime $\Oh((cn)^{1-\delta})$ for the (continuous or discrete) \Fr distance on $c$-packed curves for any $\delta>0$, unless \CNFSETH\ fails. 
  This holds for any polynomial restriction of $1 \le c \le n$.
\end{theorem}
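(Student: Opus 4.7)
The goal is to reduce $\CNFSAT$ with $N$ variables to computing the Fr\'echet distance on a pair of $c$-packed polygonal curves with $n$ vertices in total, such that $cn = \Theta(2^N)$ up to polynomial factors. Any $(cn)^{1-\delta}$-time $1.001$-approximation would then yield an $\Oh^*(2^{N(1-\delta)})$-time algorithm for $\CNFSAT$ and contradict $\CNFSETH$. For a target polynomial restriction $c \approx n^{\gamma}$ with $\gamma \in [0,1]$, this amounts to realizing an instance with $n \approx 2^{N/(1+\gamma)}$ vertices and packing constant $c \approx 2^{N\gamma/(1+\gamma)}$.

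\textbf{From \thmref{const} to $c$-packed curves.} I would start from the imbalanced construction used to prove \thmref{const}, which already produces curves $(P,Q)$ with $|P|=n$, $|Q|=m \approx n^\gamma$ and $nm \approx 2^N$, together with a constant-factor multiplicative gap in $\dfr(P,Q)$ that rules out $1.001$-approximations and handles the continuous and discrete variants uniformly. What remains is a geometric rearrangement making both curves $\Oh(m)$-packed while preserving the encoding; the identity $cn \approx mn \approx 2^N$ then delivers the desired conditional lower bound.

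\textbf{Forcing the packing constant.} To push the packing constant up to $\Theta(m)$, I would bundle the $n/m$ consecutive batches of $m$ gadgets in $P$ into blocks of constant geometric diameter arranged as zig-zag (sawtooth) paths, and string the resulting $n/m$ blocks along a horizontal spine at unit spacing; the $m$ gadgets of $Q$ fit inside one such block. A ball of radius $r \ge 1$ then meets $\Oh(r)$ blocks, each of total length $\Oh(m)$, giving total curve length $\Oh(mr) = \Oh(cr)$, while for $r \le 1$ the zig-zag inside a single block contributes at most $\Oh(mr)$ length in any sub-ball of radius $r$. Both curves are therefore $\Oh(c)$-packed.

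\textbf{Main obstacle.} The delicate step is verifying that contracting each batch of $m$ gadgets into an $\Oh(1)$-diameter block does not let the Fr\'echet parametrization ``cheat'' by silently jumping between unrelated gadgets within a block, which would destroy the satisfiability encoding. I would prevent this by (i) assigning each gadget inside a block a distinct vertical offset, so that the partner curve's parametrization is pinned to a single gadget per block, and (ii) inserting long horizontal separators of length much greater than the Fr\'echet threshold between consecutive blocks, forcing forward progress of any near-optimal coupling. Once this is established, the correctness proof of \thmref{const} transfers verbatim, and the arithmetic $cn \approx 2^N$ yields the claimed $\CNFSETH$-hardness uniformly for every polynomial restriction $1 \le c \le n$. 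I expect the packing analysis itself to be routine; the main technical hurdle lies in maintaining the reduction's correctness under this geometric compression.
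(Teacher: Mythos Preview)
Your plan has a structural gap: if $P$ is spread over $n/m$ disjoint blocks at unit horizontal spacing while $Q$ lives entirely in a single block, then the \Fr distance of $(P,Q)$ is automatically $\Omega(n/m)$ regardless of satisfiability. Both curves must be traversed from start to end, so when $P$ sits in its last block (at $x$-coordinate $\approx n/m$) the current point of $Q$ is still at $x$-coordinate $\Oh(1)$, and the width of any traversal is at least $n/m - \Oh(1)$. The ``long horizontal separators'' you propose between blocks of $P$ do not help, because $Q$ has no matching spine to walk along. More fundamentally, the OR-over-all-pairs in the base construction works only because every assignment gadget of $P_1$ sits in the \emph{same} constant-diameter region as the anchor point $s_2^*$ of $P_2$; once you move most of $P_1$'s gadgets away, $s_2^*$ can be paired only with an $s_1$-copy from one block, so the reduction tests only one batch of $A_1$ against $A_2$ and no longer encodes satisfiability. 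The obstacle you flag (cheating \emph{within} a block) is secondary; the real difficulty is implementing the OR \emph{across} spatially separated blocks.

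The paper's route is different in kind. It does not start from the imbalanced instance of \thmref{const}. Instead it partitions \emph{both} $A_1$ and $A_2$ into $\ell$ groups each, producing $\ell^2$ balanced sub-instances $(P_1^j,P_2^j)$, each of size and packing constant $\Theta(\mphi\cdot 2^{\nphi/2}/\ell)$. It then introduces a new, spatially spread OR-gadget: $R_1$ consists of $\ell^2$ ``U-shapes'' placed side by side along the $x$-axis with $P_1^j$ sitting inside the $j$-th U-shape, while $R_2$ consists of $\ell^2+1$ U-shapes with $P_2^j$ placed between the $j$-th and $(j{+}1)$-th. Because $R_2$ has one more U-shape than $R_1$, any small-width traversal must at some index $j$ cover two consecutive U-shapes of $R_2$ against a single U-shape of $R_1$, and the geometry forces this to be possible iff $\dfr(P_1^j,P_2^j)\le 1$. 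Both $R_1$ and $R_2$ then have the same length $n=\Theta(\ell\,\mphi\,2^{\nphi/2})$ and the same packing constant $c=\Theta(\mphi\,2^{\nphi/2}/\ell)$, so $cn=\Theta(\mphi^2\, 2^\nphi)$, and choosing $\ell$ realizes any polynomial restriction $c\approx n^\gamma$. The takeaway is that the $c$-packed reduction needs a distributed OR-gadget in which \emph{both} curves carry matching block structure; stretching one curve while leaving the other compact cannot preserve the encoding.
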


Since we prove this claim for any polynomial restriction $c \approx n^\gamma$, the above result excludes $1.001$-approximations with runtime, say, $\Oh(c^2 + n)$.

Regarding the dependence on $\eps$, in any dimension $d \ge 5$ we can prove a conditional lower bound that matches the dependency on $\eps$ of~\cite{DriemelHPW12} up to a polynomial.

\begin{theorem} \label{thm:cpackedFive}
  There is no $(1+\eps)$-approximation for the (continuous or discrete) \Fr distance on $c$-packed curves in $\R^d$, $d \ge 5$, with runtime $\Oh(\min\{cn/\sqrt{\eps},n^2\}^{1-\delta})$ for any $\delta>0$, unless \CNFSETH\ fails. 
  This holds for sufficiently small $\eps > 0$ and any polynomial restriction of $1 \le c \le n$ and $\eps \le 1$.
\end{theorem}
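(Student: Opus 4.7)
The plan is to extend the reduction underlying \thmref{cpacked} with an $\eps$-dependent geometric amplification in the three extra coordinates available in $\R^d$ for $d \ge 5$, shrinking the YES/NO Fr\'echet-distance ratio from a constant to $1+\eps$ while only modestly increasing the packedness. Concretely, given a \CNFSAT\ instance on $N$ variables, I would first produce the base planar $c$-packed curves from \thmref{cpacked}, satisfying $cn \ge 2^{N - o(N)}$ with a constant YES/NO Fr\'echet-distance gap $1+\eta_0$, and embed them into $\R^5$ with the last three coordinates set to zero.

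Next comes the amplification. Before adding anything, I would rescale the base construction and insert matched pairs of parallel dummy segments (which do not affect the Fr\'echet distance) into both curves so that the vertex spacing is at least $2L$, where $L = \Theta(D_Y/\sqrt{\eps})$ and $D_Y$ denotes the base YES-instance Fr\'echet distance. Then I would attach to each vertex an orthogonal ``tether'' excursion of length $L$ into the span of $e_3, e_4, e_5$, with the tether direction $u_P$ of one curve chosen orthogonal to the tether direction $u_Q$ of the other (which three extra dimensions easily accommodate). Any Fr\'echet alignment is forced, by the dominant $\Omega(L)$ cost of desynchronized tether traversal, to traverse each matched pair of tethers in lockstep, so that the distance at the tether tops is $\sqrt{\text{base-distance}^2 + 2L^2}$. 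The YES/NO distances thus transform to $\sqrt{D_Y^2 + 2L^2}$ and $\sqrt{D_Y^2(1+\eta_0)^2 + 2L^2}$, yielding a ratio $1 + \Theta(\eta_0 D_Y^2/L^2) = 1+\Theta(\eps)$ by the choice of $L$.

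For the parameter accounting, because the spacing exceeds $2L$ the tethers do not intersect one another; a direct calculation shows that in any ball of radius $r$ the total curve length contributed by tethers is $\Oh(c \cdot r)$, so the packedness stays $c' = \Oh(c)$, and the vertex count is $n' = \Oh(n)$. Hence $c'n'/\sqrt{\eps} = \Oh(cn/\sqrt{\eps}) \le \Oh(2^N/\sqrt{\eps})$, and any $(1+\eps)$-approximation running in time $(c'n'/\sqrt{\eps})^{1-\delta}$ would solve \CNFSAT\ in time $\Oh(2^{N(1-\delta)} \cdot \poly)$ for polynomially bounded $\eps$, contradicting \CNFSETH. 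The regime $cn/\sqrt{\eps} > n^2$ captured by the $\min$ in the statement is handled separately by falling back to the \thmref{main} construction, since the packedness constraint is vacuous there.

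The main obstacle will be rigorously enforcing synchronous tether traversal: a naive construction would allow one curve to linger near the plane while the other is at a tether top, leaving the Fr\'echet value uncontrolled. I would resolve this by anchoring each tether with ``gatekeeper'' vertices placed on both sides of its base, positioned so that any asynchronous entry or exit incurs a penalty of $\Omega(L)$ that dominates the base distance $D_Y$. The need for three extra coordinates arises because the two tether directions $u_P, u_Q$ must be mutually orthogonal and the gatekeepers must sit in a third independent direction to keep the packedness analysis clean; this is precisely why the theorem requires $d \ge 5$.
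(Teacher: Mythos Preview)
Your amplification idea has a quantitative gap that prevents it from reaching the stated bound. In your construction the packedness $c'$ stays $\Theta(c)$ independently of $\eps$, so the product $c'n'/\sqrt{\eps}$ equals $\Theta(2^{N}/\sqrt{\eps})$ rather than $\Theta(2^{N})$. When $\eps$ is polynomially small in $n$ (which is exactly what ``polynomial restriction of $\eps \le 1$'' means --- e.g.\ $\eps \approx n^{-\gamma'}$, hence $1/\sqrt{\eps} = 2^{\Theta(N)}$), an algorithm running in time $(c'n'/\sqrt{\eps})^{1-\delta}$ takes time $2^{N(1+\Theta(1))(1-\delta)}$, which for small $\delta$ exceeds $2^{N}$ and contradicts nothing. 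Your fallback to \thmref{main} does not close this hole either: the interesting regime $cn/\sqrt{\eps} \le n^{2}$ is non-empty for polynomially small $\eps$ whenever $c \ll n$, and in that regime your tether construction must do the work but cannot.

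The paper avoids this by a fundamentally different mechanism: it does not amplify a planar instance but \emph{re-embeds the clause and assignment gadgets themselves} so that the packedness already scales with $\sqrt{\eps}$. Concretely, the clause gadgets of $P_{1}$ are placed along a quarter-circle of radius $1/\sqrt{2}$ in the $(1,2)$-plane and those of $P_{2}$ along an orthogonal quarter-circle in the $(3,4)$-plane, so that every cross-pair is at distance exactly~$1$; the true/false distinction is a radial perturbation of order $\eps$ and the parity bit is an offset of order $\sqrt{\eps}$ in the fifth coordinate. The resulting segments have length $\Theta(\sqrt{\eps})$ and are spread over an arc of constant length, so the packedness of each $P_{k}^{j}$ is $c = \Theta(\sqrt{\eps}\cdot M|A_{k}^{j}|)$. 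This makes $cn/\sqrt{\eps} = \Theta(M^{2}2^{N})$ \emph{independent of $\eps$}, which is precisely what the lower bound needs. Your tethers leave the planar gadgets (and hence their packedness) untouched, so they cannot create this $\sqrt{\eps}$-scaling.

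A secondary concern: your claim that the Fr\'echet distance transforms to $\sqrt{D^{2}+2L^{2}}$, where $D$ is the \emph{original} Fr\'echet distance, is not justified. The tether tops are at distance $\sqrt{d_{ij}^{2}+2L^{2}}$ where $d_{ij}$ is the distance between the matched \emph{base vertices}, and it is far from clear that the resulting min--max over monotone couplings equals $\sqrt{D^{2}+2L^{2}}$; the gatekeeper sketch does not address how the continuous traversal is prevented from mixing tethers with non-corresponding base segments. The paper sidesteps all such synchronization issues because its $\sqrt{\eps}$ and $\eps$ perturbations are built directly into the clause-gadget distances, so \lemref{CG} and \lemref{AG} carry over verbatim after recomputing the pairwise distances.
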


\begin{marked}
\paragraph{Outline of the main result}
To prove the main result we present a reduction from \CNFSAT\ to the \Fr distance. Given a \CNFSAT\ instance $\varphi$, we partition its variables into sets $V_1,V_2$ of equal size. In order to find a satisfying assignment of $\varphi$ we have to choose (partial) assignments $a_1$ of $V_1$ and $a_2$ of $V_2$. 
We will construct curves $P_1,P_2$ where $P_k$ is responsible for choosing $a_k$. To this end, $P_k$ consists of \emph{assignment gadgets}, one for each assignment of~$V_k$. Assignment gadgets are built of \emph{clause gadgets}, one for each clause. 
The assignment gadgets of assignments $a_1$ of $V_1$ and $a_2$ of $V_2$ are constructed such that they have \Fr distance at most 1 if and only if $(a_1,a_2)$ forms a satisfying assignment of $\varphi$. 
In $P_1$ and $P_2$ we connect these assignment gadgets with some additional curves to implement an OR-gadget, which forces any traversal of $(P_1,P_2)$ to walk along two assignment gadgets in parallel. If $\varphi$ is not satisfiable, then any pair of assignment gadgets has \Fr distance larger than 1, so that $P_1,P_2$ have \Fr distance larger than 1. If, on the other hand, a satisfying assignment $(a_1,a_2)$ of $\varphi$ exists, then we ensure that there is a traversal of $P_1,P_2$ that essentially only traverses the assignment gadgets of $a_1$ and $a_2$ in parallel, so that it always stays in distance 1. 

To argue about the runtime, since $P_k$ contains an assignment gadget for every assignment of one half of the variables, and every assignment gadget has polynomial size in $\mphi$, there are $n = \Oh^*(2^{\nphi/2})$ vertices on each curve. Thus, any $\Oh(n^{2-\delta})$ algorithm for the \Fr distance would yield an $\Oh^*(2^{(1-\delta/2) \nphi})$ algorithm for \CNFSAT, contradicting \CNFSETH.
\end{marked}

\paragraph{Remark: Orthogonal Vectors}
Let \ORTHOG\ be the problem of `finding a pair of orthogonal vectors'': given two sets $S_1,S_2 \subseteq \{0,1\}^d$ of $n$ vectors each, determine if there are $u \in S_1$ and $v \in S_2$ with $\langle u,v \rangle = \sum_{i=1}^d u_i v_i = 0$, where the sum is computed over the integers, see~\cite{Williams04,WilliamsY14}. Clearly, \ORTHOG\ can be solved in time $\Oh(n^2 d)$. However, \ORTHOG\ has no strongly subquadratic algorithms unless \CNFSETH\ fails. More precisely, in~\cite{Williams04} it was shown that \CNFSETH\ implies the following statement.

\paragraph{\ORTHOGHYPO:} \emph{There is no algorithm for \ORTHOG\ with runtime $\Oh(n^{2-\delta} d^{\Oh(1)})$ for any $\delta > 0$.}

\paragraph{}
All known conditional lower bounds based on \CNFSETH\ implicitly go through \ORTHOG\ or some variant of this problem. In fact, this is also the case for our results, as is easily seen by going through the proof in~\cite{Williams04} and noting that we use the same tricks. Specifically, given a \CNFSAT\ instance $\phi$ on variables $x_1,\ldots,x_N$ and clauses $C_1,\ldots,C_M$ we split the variables into two halves $V_1,V_2$ of equal size and enumerate all assignments $A_k$ of true and false to $V_k$. Then every clause $C_i$ specifies sets $B_k^i \subseteq A_k$ of partial assignments that do not make $C_i$ become true. Clearly, a satisfying assignment $(a_1,a_2) \in A_1 \times A_2$ has to evade $B_1^i \times B_2^i$ for all $i$. This problem is equivalent to an instance of \ORTHOG\ with $d=M$ and $n=2^{N/2}$, where $S_k$ contains a vector for every partial assignment $a_k \in A_k$ and the $i$-th position of this vector is 1 or 0, depending on whether $a_k \in B_k^i$ or not. In our proof, we could replace this instance by an arbitrary instance of \ORTHOG, yielding a reduction from \ORTHOG\ to the \Fr distance.

Hence, in \thmrefss{main}{cpacked}{cpackedFive} we could replace the assumption ``unless \CNFSETH\ fails'' by the weaker assumption ``unless \ORTHOGHYPO\ fails''. This is a stronger statement, since there is only more reason to believe that \ORTHOG\ has no strongly subquadratic algorithms than that there is for believing that \CNFSAT\ takes time $2^{N-o(N)}$. Moreover, it shows a relation between two polynomial time problems, \ORTHOG\ and the \Fr distance.

For \thmref{const} we would need an imbalanced version of the \ORTHOGHYPO, where the two sets $S_1,S_2$ have different sizes $n_1,n_2$. Then unless \CNFSETH\ fails there is no $\Oh((n_1 n_2)^{1-\delta} d^{\Oh(1)})$ algorithm for any $\delta > 0$, and this holds for any polynomial restriction of $1 \le n_1 \le n_2$, which follows from a slight generalization of~\cite{Williams04}. If we state this implication of \CNFSETH\ as a hypothesis \ORTHOGHYPO$^*$, then in \thmref{const} we could replace ``unless \CNFSETH\ fails'' by the weaker assumption ``unless \ORTHOGHYPO$^*$ fails''.

\paragraph{Organization}
We start by defining the variants of the \Fr distance, $c$-packedness, and other basic notions in \secref{preliminaries}.
\secref{general} deals with general curves. We prove the main result for the discrete \Fr distance in less than 3 pages in \secref{discrete}. This construction also already proves inapproximability. We generalize the proof to the continuous \Fr distance in \secref{nondiscrete} (which is more tedious than in the discrete case) and to $m \ll n$ in \secref{nm} (which is an easy trick).
\secref{cpacked} deals with $c$-packed curves. In \secref{cpackedconst} we present a new OR-gadget that generates less packed curves; plugging in the curves constructed in the main result proves \thmref{cpacked}. In \secref{cpackedFive} we make use of the fact that in $\ge4$ dimensions there are point sets $Q_1,Q_2$ of arbitrary size with each pair of points $(q_1,q_2)$ having distance exactly~1. This allows to construct less packed curves that we plug into the OR-gadget from the preceding section to prove \thmref{cpackedFive}.

\section{Preliminaries} \label{sec:preliminaries}

For $N \in \N$ we let $[N] := \{1,\ldots,N\}$.
A (polygonal) curve $P$ is defined by its vertices $p_1,\ldots,p_n$. We view $P$ as a continuous function $P \colon [0,n] \to \R^d$ with $P(i+\lambda) = (1-\lambda)p_i + \lambda p_{i+1}$ for $i \in [n-1]$, $\lambda \in [0,1]$. We write $|P| = n$ for the number of vertices of $P$.
For two curves $P_1,P_2$ we let $P_1 \circ P_2$ be the curve on $|P_1|+|P_2|$ vertices that first follows $P_1$, then walks along the segment from $P_1(|P_1|)$ to $P_2(0)$, and then follows $P_2$. In particular, for two points $p,q \in \R^d$ the curve $p \circ q$ is the segment from $p$ to $q$, and any curve $P$ on vertices $p_1,\ldots,p_n$ can be written as $P = p_1\circ \ldots \circ p_n$. 

Consider a curve $P$ and two points $p_1 = P(\lambda_1)$, $p_2 = P(\lambda_2)$ with $\lambda_1, \lambda_2 \in [0,n]$. We say that \emph{$p_1$ is within distance $D$ of $p_2$ along $P$} if the length of the subcurve of $P$ between $P(\lambda_1)$ and $P(\lambda_2)$ is at most $D$.

%We remark that most of the above results work in any Euclidean space $(\R^d,\|.\|)$, $d \ge 2$, but for many the presentation is restricted to $\R^2$.

\paragraph{Variants of the \Fr distance}
Let $\Phi_n$ be the set of all continuous and non-decreasing functions $\phi$ from $[0,1]$ onto $[0,n]$. The \emph{continuous \Fr distance} between two curves $P_1,P_2$ with $|P_1|=n$, $|P_2|=m$ is defined as
$$ \dfr(P_1,P_2) := \inf_{\substack{\phi_1 \in \Phi_n \\\phi_2 \in \Phi_m}} \max_{t \in [0,1]} \|P_1(\phi_1(t)) - P_2(\phi_2(t))\|, $$
where $\|.\|$ denotes the Euclidean distance. We call $(\phi_1,\phi_2)$ a (continuous) \emph{traversal} of $(P_1,P_2)$, and say that it has \emph{width} $D$ if $\max_{t \in [0,1]} \|P_1(\phi_1(t)) - P_2(\phi_2(t))\| \le D$.

In the discrete case, we let $\Delta_n$ be the set of all non-decreasing functions $\phi$ from $[0,1]$ onto~$[n]$. The \emph{discrete \Fr distance} between two curves $P_1,P_2$ with $|P_1|=n$, $|P_2|=m$ is then defined as
$$ \ddfr(P_1,P_2) := \inf_{\substack{\phi_1 \in \Delta_n \\\phi_2 \in \Delta_m}} \max_{t \in [0,1]} \|P_1(\phi_1(t)) - P_2(\phi_2(t))\|. $$
We obtain an analogous notion of a (discrete) \emph{traversal} and its \emph{width}. Note that any $\phi \in \Delta_n$ is a staircase function attaining all values in $[n]$. Hence, $(\phi_1(t),\phi_2(t))$ changes only at finitely many points in time $t$. At any such \emph{time step} we jump to the next vertex in $P_1$ or $P_2$ or both.

It is known that for any curves $P_1,P_2$ we have $\dfr(P_1,P_2) \le \ddfr(P_1,P_2)$~\cite{EiterM94}.

\paragraph{Realistic input curves}
As an example of input restrictions that resemble practical input curves we consider the model of~\cite{DriemelHPW12}. A curve $P$ is \emph{$c$-packed} if for any point $q \in \R^d$ and any radius $r>0$ the total length of $P$ inside the ball $B(q,r)$ is at most $c r$. Here, $B(q,r)$ is the ball of radius $r$ around $q$. In this paper, we say that a curve $P$ is \emph{$\Theta(c)$-packed}, if there are constants $\alpha > \beta > 0$ such that $P$ is $\alpha c$-packed but not $\beta c$-packed.

This model is well motivated from a practical point of view. Examples of classes of $c$-packed curves are boundaries of convex polygons and $\gamma$-fat shapes as well as algebraic curves of bounded maximal degree (see~\cite{DriemelHPW12}).

\begin{marked}
\paragraph{Satisfiability} 
In \CNFSAT\ we are given a formula $\varphi$ on variables $x_1,\ldots,x_\nphi$ and clauses $C_1,\ldots,C_\mphi$ in conjunctive normal form with unbounded clause width.  
Let $V$ be any subset of the variables of $\varphi$. Let $a$ be any assignment of $\true$ (true) or $\false$ (false) to the variables of $V$. We call $a$ a \emph{partial assignment} and say that $a$ \emph{satisfies} a clause $C = \bigvee_{i \in I} x_i \vee \bigvee_{i \in J} \neg x_i$ if for some $i \in I \cap V$ we have $a(x_i) = \true$ or for some $i \in J \cap V$ we have $a(x_i) = \false$. We denote by $\sat(a,C)$ whether partial assignment $a$ satisfies clause $C$. Note that assignments $a$ of $V$ and $a'$ of the remaining variables $V'$ form a satisfying assignment $(a,a')$ of $\varphi$ if and only if we have $\sat(a,C_i) \vee \sat(a',C_i) = \true$ for all $i \in \{1,\ldots,\mphi\}$.

All bounds that we prove in this paper assume the hypothesis \CNFSETH\ (see \secref{intro}), which asserts that \CNFSAT\ has no $\Oh^*((2-\delta)^N)$ algorithm for any $\delta > 0$. Here, $\Oh^*$ hides polynomials factors in $\nphi$ and $\mphi$. The following is an easy corollary of \CNFSETH.

\begin{lemma} \label{lem:seth}
  There is no $\Oh^*((2-\delta)^\nphi)$ algorithm for \CNFSAT\ restricted to formulas with $\nphi$ variables and $\mphi \le 2^{\delta' \nphi}$ clauses for any $\delta, \delta' > 0$, unless \CNFSETH\ fails.
\end{lemma}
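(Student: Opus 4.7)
The plan is to derive a contradiction with \CNFSETH\ by showing that any $\Oh^*((2-\delta)^\nphi)$ algorithm for the restricted class of formulas yields an $\Oh^*((2-\delta)^\nphi)$ algorithm for general \CNFSAT. So I would assume a hypothetical algorithm $A$ for the restricted problem, running in time $\Oh^*((2-\delta)^\nphi)$ on instances with $\mphi \le 2^{\delta' \nphi}$, and design a reduction that handles arbitrary \CNFSAT\ instances by case analysis on $\mphi$.

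First I would handle the easy case: if $\mphi \le 2^{\delta' \nphi}$, the input already lies in the restricted class, so invoke $A$ directly and obtain runtime $\Oh^*((2-\delta)^\nphi)$. Otherwise $\mphi > 2^{\delta' \nphi}$, and here I would simply fall back to brute-force search, enumerating all $2^\nphi$ truth assignments and evaluating every clause in $\Oh(\nphi)$ time per assignment, giving total runtime $\Oh(2^\nphi \cdot \mphi \cdot \nphi)$.

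The crux of the second case is the observation that $\mphi > 2^{\delta' \nphi}$ implies $2^\nphi < \mphi^{1/\delta'}$, so the brute force bound becomes $\Oh(\nphi \cdot \mphi^{1 + 1/\delta'})$, which is polynomial in $\nphi$ and $\mphi$ and therefore is $\Oh^*(1) \subseteq \Oh^*((2-\delta)^\nphi)$, since $\Oh^*$ by definition hides polynomial factors in both $\nphi$ and $\mphi$. Combining both cases yields an $\Oh^*((2-\delta)^\nphi)$ algorithm for general \CNFSAT, contradicting \CNFSETH. The argument has no real technical obstacle; the only mildly delicate point is recognizing that in the large-$\mphi$ regime the input itself is so large that brute force is effectively free, once one remembers that the polynomial slack hidden by $\Oh^*$ is with respect to $\mphi$ as well as $\nphi$.
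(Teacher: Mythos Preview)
Your proposal is correct and follows essentially the same approach as the paper: case-split on whether $\mphi \le 2^{\delta' \nphi}$, run the assumed algorithm in the first case, and observe that in the second case brute force takes time $\Oh(\mphi \cdot 2^\nphi) = \Oh(\mphi^{1+1/\delta'}) = \Oh^*(1)$. The paper's proof is nearly identical, differing only in that it omits the harmless extra $\nphi$ factor you carry through.
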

\begin{proof}
  Any such algorithm would imply an $\Oh^*((2-\delta)^\nphi)$ algorithm for \CNFSAT\ (with no restrictions on the input), since for $M \le 2^{\delta' \nphi}$ we can run the given algorithm, while for $M > 2^{\delta' \nphi}$ we can decide satisfiability in time $\Oh(M 2^\nphi) = \Oh(M^{1+1/\delta'}) = \Oh^*(1)$.
\end{proof}
\end{marked}

\section{General curves} \label{sec:general}

We first present a reduction from \CNFSAT\ to the \Fr distance and show that it proves \thmref{main} for the discrete \Fr distance. In \secref{nondiscrete} we then show that the same construction also works for the continuous \Fr distance. Finally, in \secref{nm} we generalize these results to curves with imbalanced numbers of vertices $n,m$ to show \thmref{const}.

\subsection{The basic reduction, discrete case} \label{sec:discrete}

Let $\varphi$ be a given \CNFSAT\ instance with variables $x_1,\ldots,x_{\nphi}$ and clauses $C_1,\ldots,C_{\mphi}$. We split the variables into two halves $V_1 := \{x_1,\ldots,x_{\nphi/2}\}$ and $V_2 := \{x_{\nphi/2+1},\ldots,x_{\nphi}\}$. For $k \in \{1,2\}$ let $A_k$ be all assignments\footnote{In later sections we will replace $V_1,V_2$ by different partitionings and $A_1,A_2$ by subsets of all assignments. The lemmas in this section are proven in a generality that allows this extension.} of \true\ or \false\ to the variables in $V_k$, so that $|A_k| = 2^{\nphi/2}$. In the whole section we let $\eps := 1/1000$.

We will construct two curves $P_1,P_2$ such that $\ddfr(P_1,P_2) \le 1$ if and only if $\varphi$ is satisfiable. In the construction we will use gadgets as follows.

\begin{wrapfigure}{r}{0.2\textwidth}
\vspace{-10pt}
\begin{center}
\begin{tikzpicture}[every node/.style={fill, circle, inner sep = 1pt},scale=3]
  \node (rTwo) at (-1/3,-1/2) {};
  \node (cTwoTrueZero) at (0,-1/2+1/50) {};
  \node (cTwoFalseZero) at (0,-1/2-1/50) {};
  \node (cTwoTrueOne) at (1/3,-1/2+1/50) {};
  \node (cTwoFalseOne) at (1/3,-1/2-1/50) {};
  
  \node (rOne) at (-1/3,1/2) {};
  \node (cOneTrueZero) at (0,1/2-1/50) {};
  \node (cOneFalseZero) at (0,1/2+1/50) {};
  \node (cOneTrueOne) at (1/3,1/2-1/50) {};
  \node (cOneFalseOne) at (1/3,1/2+1/50) {};
  
  \node[shape=coordinate,label=below:$c_{2,\false}^1$] at (cTwoFalseOne) {};
  \node[shape=coordinate,label=above:$c_{2,\true}^1$] at (cTwoTrueOne) {};
  \node[shape=coordinate,label=below:$c_{2,\false}^0$] at (cTwoFalseZero) {};
  \node[shape=coordinate,label=above:$c_{2,\true}^0$] at (cTwoTrueZero) {};
  \node[shape=coordinate,label=above:$r_2$] at (rTwo) {};
  
  \node[shape=coordinate,label=above:$c_{1,\false}^1$] at (cOneFalseOne) {};
  \node[shape=coordinate,label=below:$c_{1,\true}^1$] at (cOneTrueOne) {};
  \node[shape=coordinate,label=above:$c_{1,\false}^0$] at (cOneFalseZero) {};
  \node[shape=coordinate,label=below:$c_{1,\true}^0$] at (cOneTrueZero) {};
  \node[shape=coordinate,label=below:$r_1$] at (rOne) {};
\end{tikzpicture}
\end{center}
\vspace{-10pt}
\end{wrapfigure}

\paragraph{Clause gadgets} This gadget encodes whether a partial assignment satisfies a clause. We set for $i \in \{0,1\}$
\begin{align*}
  &c_{1,\true}^i := \big(i/3, \tfrac12 - \eps\big), \quad \;\;\, c_{1,\false}^i := \big(i/3, \tfrac12 + \eps\big),  \\
  &c_{2,\true}^i := \big(i/3, -\tfrac12 + \eps\big), \quad c_{2,\false}^i := \big(i/3, -\tfrac12 - \eps\big).
\end{align*}
Let $k \in \{1,2\}$. For any partial assignment $a_k \in A_k$ and clause $C_i$, $i \in [\mphi]$, we construct a clause gadget consisting of a single point,
$$ CG(a_k,i) := c_{k,\sat(a_k,C_i)}^{i \mod 2}. $$
Thus, if assignment $a_k$ satisfies clause $C_i$ then the corresponding clause gadget is nearer to the clause gadgets associated with $A_{3-k}$.
Explicitly calculating all pairwise distances of these points, we obtain the following lemma.

\begin{lemma} \label{lem:CG}
  Let $a_k \in A_k$, $k \in \{1,2\}$, and $i,j \in [\mphi]$. If $i \equiv j \pmod 2$ and $\sat(a_1,C_i) \vee \sat(a_2,C_j) = \true$ then $\|CG(a_1,i) - CG(a_2,j)\| \le 1$. Otherwise $\|CG(a_1,i) - CG(a_2,j)\| \ge 1+2\eps$.
\end{lemma}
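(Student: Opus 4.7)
The plan is a direct case analysis guided by the geometric picture. The eight points $c_{k,s}^\ell$ form two horizontal quadruples, one clustered around $y = +1/2$ and the other around $y = -1/2$, each confined to a band of half-width $\eps$. By construction, $CG(a_1,i)$ always sits in the upper band at $x$-coordinate $(i \bmod 2)/3$, and $CG(a_2,j)$ in the lower band at $x$-coordinate $(j \bmod 2)/3$, with the vertical $\pm\eps$ perturbation determined by whether the relevant clause is satisfied by the relevant partial assignment. So the lemma reduces entirely to reading off signs and a single Pythagorean bound.

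First I would dispose of the case $i \not\equiv j \pmod{2}$. The $x$-coordinates then differ by exactly $1/3$ and the $y$-coordinates differ by at least $1 - 2\eps$, so
$$ \|CG(a_1,i) - CG(a_2,j)\| \;\ge\; \sqrt{(1/3)^2 + (1 - 2\eps)^2}, $$
which for $\eps = 1/1000$ is safely larger than $1 + 2\eps$. This already yields the ``otherwise'' conclusion whenever the parities disagree, irrespective of the truth values.

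For $i \equiv j \pmod{2}$ the $x$-coordinates coincide, so the distance equals the vertical gap. Enumerating the four sign combinations of $\sat(a_1,C_i)$ and $\sat(a_2,C_j)$ gives vertical gaps $1 - 2\eps$, $1$, $1$, and $1 + 2\eps$, where the last value occurs exactly in the configuration $CG(a_1,i) = c_{1,\false}^{i \bmod 2}$, $CG(a_2,j) = c_{2,\false}^{j \bmod 2}$, i.e.\ exactly when both $\sat(a_1,C_i)$ and $\sat(a_2,C_j)$ are $\false$. Matching this against the hypothesis $\sat(a_1,C_i) \vee \sat(a_2,C_j) = \true$ gives both directions of the claim. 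I expect no real obstacle: the argument is pure bookkeeping, and the constants $1/3$ and $\eps$ in the coordinates of $c_{k,s}^\ell$ were engineered precisely so that the parity mismatch dominates in one case and the satisfaction bit provides a clean $\pm 2\eps$ gap in the other.
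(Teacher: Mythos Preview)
Your proof is correct and is exactly the explicit distance computation the paper alludes to; the paper simply states ``Explicitly calculating all pairwise distances of these points, we obtain the following lemma'' without spelling out the case analysis, and you have filled in precisely those details.
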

%\begin{proof}
%  Using $\eps=1/1000$, for all $i \in \{0,1\}$ and $x,y \in \{\true,\false\}$ we have
%  \begin{align*}
%    &\|c_{1,x}^i - c_{2,y}^i\| \in \{1, 1 - 2\eps\} \quad \text{ for } x \vee y = \true,  \\
%    &\|c_{1,x}^i - c_{2,y}^i\| = 1 + 2\eps \quad \text{ for } x \vee y = \false,  \\    
%    &\|c_{1,x}^i - c_{2,y}^{1-i}\| \ge \|c_{1,\true}^0 - c_{2,\true}^1\| = \sqrt{(1/3)^2 + (1-2\eps)^2} > 1+2\eps.
%  \end{align*} 
%\end{proof}

\begin{wrapfigure}{r}{0.2\textwidth}
\vspace{-10pt}
\begin{center}
\begin{tikzpicture}[every node/.style={fill, circle, inner sep = 1pt},scale=3]
  \node (rTwo) at (-1/3,-1/2) {};
  \node (cTwoTrueZero) at (0,-1/2+1/50) {};
  \node (cTwoFalseZero) at (0,-1/2-1/50) {};
  \node (cTwoTrueOne) at (1/3,-1/2+1/50) {};
  \node (cTwoFalseOne) at (1/3,-1/2-1/50) {};
  
  \node (rOne) at (-1/3,1/2) {};
  \node[fill=gray!45] (cOneTrueZero) at (0,1/2-1/50) {};
  \node (cOneFalseZero) at (0,1/2+1/50) {};
  \node (cOneTrueOne) at (1/3,1/2-1/50) {};
  \node[fill=gray!45] (cOneFalseOne) at (1/3,1/2+1/50) {};
  
  \path[->] (rTwo) edge (cTwoTrueZero)
        (cTwoTrueZero) edge (cTwoFalseOne)
        (cTwoFalseOne) edge (cTwoFalseZero)
        (cTwoFalseZero) edge (cTwoTrueOne);
  
  \path[->,dotted] (rOne) edge (cOneFalseZero)
        (cOneFalseZero) edge (cOneTrueOne);
  
\end{tikzpicture}
\end{center}
\vspace{-10pt}
\end{wrapfigure}

\paragraph{Assignment gadgets} This gadget consists of clause gadgets and encodes the set of satisfied clauses for an assignment. We set 
$$ r_1 := (-\tfrac13, \tfrac12), \quad r_2 := (-\tfrac13, -\tfrac12). $$
The assignment gadget for any $a_k \in A_k$ consists the starting point $r_k$ followed by all clause gadgets of $a_k$,
$$ AG(a_k) := r_k \circ \bigcirc_{i \in [\mphi]} CG(a_k,i), $$
(recall the definition of $\circ$ in \secref{preliminaries}).
The figure to the right shows an assignment gadget on $\mphi=2$ clauses at the top and an assignment gadget on $\mphi=4$ clauses at the bottom. The arrows indicate the order in which the segments are traversed.

\begin{lemma} \label{lem:AG}
  Let $a_k \in A_k$, $k \in \{1,2\}$. If $(a_1,a_2)$ is a satisfying assignment of $\varphi$ then $\ddfr(AG(a_1),AG(a_2)) \le 1$. 
  If $(a_1,a_2)$ is not satisfying then $\ddfr(AG(a_1),AG(a_2)) > 1+\eps$, and we even have $\ddfr(AG(a_1) \circ \pi_1, AG(a_2) \circ \pi_2) > 1+\eps$ for any curves $\pi_1,\pi_2$.
\end{lemma}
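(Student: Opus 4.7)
The plan is to prove the two claims of the lemma separately; the second is the harder direction.

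For the first claim, I would exhibit the natural ``diagonal'' discrete traversal that advances both curves in lockstep, visiting the vertex pairs $(r_1,r_2), (CG(a_1,1),CG(a_2,1)), \ldots, (CG(a_1,\mphi),CG(a_2,\mphi))$ in order. The first pair has Euclidean distance exactly $\|r_1-r_2\|=1$, and each subsequent pair has distance $\le 1$ by \lemref{CG}, since $k \equiv k \pmod 2$ and $(a_1,a_2)$ being satisfying gives $\sat(a_1,C_k) \vee \sat(a_2,C_k) = \true$. This yields $\ddfr(AG(a_1),AG(a_2)) \le 1$.

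For the second claim, let $i$ be the smallest index with $\sat(a_1,C_i) = \sat(a_2,C_i) = \false$; such $i$ exists by assumption. Suppose for contradiction that a discrete traversal of $(AG(a_1) \circ \pi_1, AG(a_2) \circ \pi_2)$ of width $\le 1+\eps$ exists; view it as a monotone sequence of visited vertex-index pairs. The key structural claim is a forced synchronisation: for every $k \in \{0,1,\ldots,i-1\}$ the traversal visits the state $\sigma_k$, where $\sigma_0 := (r_1,r_2)$ and $\sigma_k := (CG(a_1,k),CG(a_2,k))$ for $k \ge 1$.

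I would prove this claim by induction on $k$. The base case $k=0$ is the starting state. For the inductive step, any single-coordinate advance from $\sigma_k$ with $k \ge 1$ produces a vertex pair whose $x$-coordinates differ by $1/3$ (since the $x$-coordinates of consecutive clause gadgets alternate between $0$ and $1/3$) while the vertical gap is at least $1 - 2\eps$, yielding Euclidean distance at least $\sqrt{1/9 + (1-2\eps)^2} > 1 + \eps$ for $\eps = 1/1000$; and the single-coordinate advances from $\sigma_0$ pair some $r_k$ with a clause gadget of the other curve, which is even farther apart by direct computation. Hence the only allowed progress is the diagonal step $\sigma_k \to \sigma_{k+1}$; this is feasible for $k+1 < i$ because then $C_{k+1}$ is satisfied and \lemref{CG} gives distance $\le 1$. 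The traversal therefore reaches $\sigma_{i-1}$. From there the only diagonal neighbour is $(CG(a_1,i), CG(a_2,i))$, but by \lemref{CG} this pair has distance $1+2\eps > 1+\eps$ because both gadgets are of ``false''-type and of the same parity. All monotone neighbours are thus forbidden, so the traversal cannot progress to the endpoints~--- a contradiction. Hence $\ddfr(AG(a_1) \circ \pi_1, AG(a_2) \circ \pi_2) > 1+\eps$.

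The main obstacle is that the suffixes $\pi_1, \pi_2$ might a priori allow the traversal to escape the forced synchronisation. This is handled by observing that every vertex of $\pi_k$ lies strictly after $CG(a_k,\mphi)$ in the monotone ordering of $AG(a_k) \circ \pi_k$; since $i \le \mphi$, the entire induction (and the resulting contradiction at $\sigma_{i-1}$) takes place within the $AG$-prefixes, where only the vertices $r_k, CG(a_k,1),\ldots,CG(a_k,\mphi)$ are ever relevant. Consequently the choice of $\pi_1, \pi_2$ never enters the argument.
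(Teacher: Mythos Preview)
Your proof is correct and follows essentially the same approach as the paper's: both argue that from $(r_1,r_2)$ the only width-$(1+\eps)$ step is the diagonal one, and inductively force the traversal through the synchronized states $(CG(a_1,k),CG(a_2,k))$ until an unsatisfied clause yields the contradiction. Your version is slightly more explicit in stopping the induction at the first unsatisfied clause and in justifying why the suffixes $\pi_1,\pi_2$ are irrelevant, whereas the paper simply cites \lemref{CG} (which already covers the off-parity case you recompute by hand) and runs the induction over all $i\in[\mphi]$.
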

\begin{proof}
  If $(a_1,a_2)$ is satisfying then the parallel traversal 
  $$(r_1,r_2),(CG(a_1,1),CG(a_2,1)),\ldots,(CG(a_1,\mphi),CG(a_2,\mphi))$$
  has width 1 by \lemref{CG}. 
  
  Assume for the sake of contradiction that $(a_1,a_2)$ is not satisfying but there is a traversal of $(AG(a_1) \circ \pi_1, AG(a_2) \circ \pi_2)$ with width $1+\eps$.
  Observe that $\|r_1 - r_2\| = 1$ and $\|r_k - c_{3-k,x}^i\| \ge 1+2\eps$ for any $k \in \{1,2\}, i \in \{0,1\}, x \in \{\true,\false\}$.
  Thus, the traversal has to start at positions $(r_1,r_2)$ and then step to positions $(CG(a_1,1),CG(a_2,1))$, as advancing in only one of the curves leaves us in distance larger than $1+\eps$. Inductively and using \lemref{CG}, the same argument shows that in the $i$-th step we are at positions $(CG(a_1,i),CG(a_2,i))$ for any $i \in [\mphi]$. Since there is an unsatisfied clause $C_i$, so that $\|CG(a_1,i) - CG(a_2,i)\| \ge 1+2\eps$ by \lemref{CG}, we obtain a contradiction.
\end{proof}

\begin{wrapfigure}{r}{0.23\textwidth}
\vspace{-0pt}
\begin{center}
\begin{tikzpicture}[every node/.style={fill, circle, inner sep = 1pt},scale=3]
  \node (rTwo) at (-1/3+1/100,-1/2) {};
  \node (cTwoTrueZero) at (0,-1/2+1/50) {};
  \node (cTwoFalseZero) at (0,-1/2-1/50) {};
  \node (cTwoTrueOne) at (1/3-1/100,-1/2+1/50) {};
  \node (cTwoFalseOne) at (1/3-1/100,-1/2-1/50) {};
  \node (sTwo) at (-1/3-1/50,0) {};
  \node (tTwo) at (1/3+1/50,0) {};
  \node (sTwoStar) at (-1/3-1/100,-4/5) {};
  \node (tTwoStar) at (1/3+1/100,-4/5) {};
  
  \node (rOne) at (-1/3,1/2) {};
  \node (cOneTrueZero) at (0,1/2-1/50) {};
  \node (cOneFalseZero) at (0,1/2+1/50) {};
  \node (cOneTrueOne) at (1/3,1/2-1/50) {};
  \node (cOneFalseOne) at (1/3,1/2+1/50) {};
  \node (sOne) at (-1/3-1/50,1/5) {};
  \node (tOne) at (1/3+1/50,1/5) {};
  
  \path[->] (sTwo) edge (sTwoStar)
        (sTwoStar) edge (rTwo)
        (rTwo) edge (cTwoTrueZero)
        (cTwoTrueZero) edge (cTwoFalseOne)
        (cTwoFalseOne) edge (rTwo)
        (rTwo) edge (cTwoFalseZero)
        (cTwoFalseZero) edge (cTwoTrueOne)
        (cTwoTrueOne) edge (tTwoStar)
        (tTwoStar) edge (tTwo);
  
  \path[->,dotted] (sOne) edge (rOne)
        (rOne) edge (cOneFalseZero)
        (cOneFalseZero) edge (cOneTrueOne)
        (cOneTrueOne) edge (tOne)
        (tOne) edge (sOne)
        (sOne) edge (rOne)
        (rOne) edge (cOneTrueZero)
        (cOneTrueZero) edge (cOneFalseOne)
        (cOneFalseOne) edge (tOne);
  
  \node[shape=coordinate,label=left:$s_1$] at (sOne) {};
  \node[shape=coordinate,label=left:$s_2$] at (sTwo) {};
  \node[shape=coordinate,label=left:$s_2^*$] at (sTwoStar) {};
  \node[shape=coordinate,label=right:$t_1$] at (tOne) {};
  \node[shape=coordinate,label=right:$t_2$] at (tTwo) {};
  \node[shape=coordinate,label=right:$t_2^*$] at (tTwoStar) {};
  
  \node[shape=coordinate,shift={(0.0,-0.2)},label=right:$c_{2,\false}^1$] at (cTwoFalseOne) {};
  \node[shape=coordinate,shift={(0.0,0.2)},label=right:$c_{2,\true}^1$] at (cTwoTrueOne) {};
  \node[shape=coordinate,label=below:$c_{2,\false}^0$] at (cTwoFalseZero) {};
  \node[shape=coordinate,label=above:$c_{2,\true}^0$] at (cTwoTrueZero) {};
  \node[shape=coordinate,shift={(-0.1,0)},label=left:$r_2$] at (rTwo) {};
  
  \node[shape=coordinate,shift={(0.0,0.2)},label=right:$c_{1,\false}^1$] at (cOneFalseOne) {};
  \node[shape=coordinate,shift={(0.0,-0.2)},label=right:$c_{1,\true}^1$] at (cOneTrueOne) {};
  \node[shape=coordinate,label=above:$c_{1,\false}^0$] at (cOneFalseZero) {};
  \node[shape=coordinate,label=below:$c_{1,\true}^0$] at (cOneTrueZero) {};
  \node[shape=coordinate,shift={(-0.05,0)},label=left:$r_1$] at (rOne) {};
  
\end{tikzpicture}
\end{center}
\vspace{-10pt}
\end{wrapfigure}

\paragraph{Construction of the curves} 
The curve $P_k$ will consist of all assignment gadgets for assignments $A_k$, $k \in \{1,2\}$, plus some additional points. The additional points implement an OR-gadget over the assignment gadgets, by enforcing that any traversal of $(P_1,P_2)$ with width $1+\eps$ has to traverse two assignment gadgets in parallel, and traversing one pair of assignment gadgets in parallel suffices.  

We define the following control points,
\begin{align*}
  &s_1 := (-\tfrac13,\tfrac15), \quad t_1 := (\tfrac13,\tfrac15), \\
  &s_2 := (-\tfrac13,0), \quad t_2 := (\tfrac13,0), \quad s_2^* := (-\tfrac13,-\tfrac45), \quad t_2^* := (\tfrac 13,-\tfrac45).
\end{align*}

Finally, we set 
\begin{align*}
  &P_1 := \bigcirc_{a_1 \in A_1} \big( s_1 \circ AG(a_1) \circ t_1 \big),  \\
  &P_2 := s_2 \circ s_2^* \circ \Big( \bigcirc_{a_2 \in A_2} AG(a_2) \Big) \circ t_2^* \circ t_2.
\end{align*}
The figure to the right shows $P_1$ (dotted) and $P_2$ (solid) in an example with $\mphi=2$ clauses and (unrealistically) only two assignments.

Let $Q_k$ be the vertices that may appear in $P_k$, i.e., $Q_1 = \{s_1,t_1,r_1,c_{1,\false}^0,c_{1,\true}^0,c_{1,\false}^1,c_{1,\true}^1\}$ and $Q_2 = \{s_2,t_2, r_2,s_2^*, t_2^*, c_{2,\false}^0,c_{2,\true}^0, c_{2,\false}^1,c_{2,\true}^1\}$.
Explicitly calculating all pairwise distances of all points, we obtain the following lemma.
\begin{lemma} \label{lem:distances}
No pair $(q_1,q_2) \in Q_1 \times Q_2$ has $\|q_1 - q_2\| \in (1,1+\eps]$. Moreover, the set $\{(q_1,q_2) \in Q_1 \times Q_2 \mid \|q_1 - q_2\| \le 1\}$ consists of the following pairs:
\begin{align*}
  &(q,s_2), (q,t_2) \text{ for any } q \in Q_1, \\
  &(s_1,q) \text{ for any } q \in Q_2 \setminus \{t_2^*\}, \\
  &(t_1,q) \text{ for any } q \in Q_2 \setminus \{s_2^*\}, \\
  &(r_1,r_2),  \\
  &(c_{1,x}^i,c_{2,y}^i) \text{ for } x \vee y = \true \text{ where } i \in \{0,1\}, x,y \in \{\true,\false\}.
\end{align*}
\end{lemma}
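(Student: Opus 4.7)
My plan is to prove the statement by organized case analysis, since the claim is a finite assertion about pairwise distances between the point sets $Q_1$ and $Q_2$. The key structural observation is that every point in $Q_1 \cup Q_2$ has $x$-coordinate in $\{-1/3, 0, 1/3\}$, so the horizontal component of any distance is one of $0$, $1/3$, $2/3$, and the $y$-coordinates take only a handful of values, namely $\{1/5, 1/2, 1/2 \pm \eps\}$ in $Q_1$ and $\{0, -1/2, -1/2 \pm \eps, -4/5\}$ in $Q_2$. Grouping pairs by their coordinate types reduces the $63$ individual pairs to a small number of essentially different cases.

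For each pair in the listed close set I would compute the squared distance directly and confirm it is at most $1$. The pairs $(q, s_2)$ and $(q, t_2)$ with $q \in Q_1$ are covered by the uniform bound $(2/3)^2 + (1/2 + \eps)^2 < 1$. The pairs $(s_1, q)$ and $(t_1, q)$ with $q \in Q_2 \setminus \{s_2^*, t_2^*\}$ are covered by $(2/3)^2 + (7/10 + \eps)^2 < 1$, while the endpoint pairs $(s_1, s_2^*)$ and $(t_1, t_2^*)$ have distance exactly $1$ because their horizontal coordinates coincide. The pair $(r_1, r_2)$ has distance exactly $1$, and the clause-gadget pairs $(c_{1,x}^i, c_{2,y}^i)$ with $x \vee y = \true$ are precisely the content of \lemref{CG}.

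For pairs outside the close list I would verify that the squared distance strictly exceeds $(1+\eps)^2 = 1 + 2\eps + \eps^2$. The tight cases are those where the vertical gap is close to $1$, namely $\{r_1\} \cup \{c_{1,x}^i\}$ paired with $\{r_2\} \cup \{c_{2,y}^j\}$ or with $\{s_2^*, t_2^*\}$. In each such case either the horizontal gap vanishes but the vertical gap already equals $1 + 2\eps$ (the pair $(c_{1,\false}^i, c_{2,\false}^i)$) or equals $13/10$ (against $s_2^*, t_2^*$), or the horizontal gap is at least $1/3$; in the last situation the squared distance is at least $1/9 + (1-\eps)^2$, which exceeds $(1+\eps)^2$ because $1/9 > 4\eps$ for $\eps = 1/1000$. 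All remaining pairs have vertical gap bounded away from $1$ and are clearly far beyond $1 + \eps$.

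The main concern is clerical rather than conceptual: one must ensure the case split is exhaustive. I would mitigate this by tabulating the $x$- and $y$-coordinates of the two point sets and sweeping through combinations systematically. The quantitative backbone of the separation statement is the margin $1/9 - 4\eps > 0$: it guarantees that the tiny $\eps$-perturbations used in the clause gadgets never interact with the coordinate-level horizontal separations of order $1/3$ that arise elsewhere in the construction.
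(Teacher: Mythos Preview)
Your approach is the same as the paper's: the paper simply asserts that the lemma follows by ``explicitly calculating all pairwise distances of all points,'' and you carry out that calculation in an organized way. Two small clerical slips to fix. First, in the horizontal-gap-$\ge 1/3$ case the minimum vertical gap among the tight far pairs is $1-2\eps$ (realized by $(c_{1,\true}^0,c_{2,\true}^1)$), not $1-\eps$; the corrected inequality $1/9+(1-2\eps)^2>(1+\eps)^2$, i.e.\ $1/9>6\eps-3\eps^2$, still holds comfortably at $\eps=1/1000$. Second, the two far pairs $(s_1,t_2^*)$ and $(t_1,s_2^*)$ have vertical gap exactly $1$, so they do not fall under your catch-all ``vertical gap bounded away from $1$''; include them in the tight analysis, where their horizontal gap $2/3$ gives distance $\sqrt{4/9+1}>1.2$.
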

%\begin{proof}
%  For example, $s_2$ has $x$-difference at most $\tfrac23$ and $y$-difference at most $\tfrac12 + \eps$ to any point in $Q_1$, which totals to a distance less than 1. Partially, the statement follows from \lemref{CG}. We omit the remaining calculations.
%\end{proof}

\paragraph{Correctness}
We show that if $\varphi$ is satisfiable then $\ddfr(P_1,P_2) \le 1$, while otherwise $\ddfr(P_1,P_2) > 1+\eps$. 

\begin{lemma} \label{lem:constIf}
  If $\ddfr(P_1,P_2) \le 1+\eps$ then $A_1 \times A_2$ contains a satisfying assignment.
\end{lemma}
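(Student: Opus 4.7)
The plan is to show that any discrete traversal of $(P_1, P_2)$ with width at most $1+\eps$ implicitly contains a sub-traversal that parallels $AG(a_1^*)$ with $AG(a_2^*)$ for some $(a_1^*, a_2^*) \in A_1 \times A_2$, and then to invoke the contrapositive of \lemref{AG} to conclude that $(a_1^*, a_2^*)$ satisfies $\varphi$.

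To locate $a_1^*$, I would first exploit the $s_2^*$ bottleneck. By \lemref{distances}, the only vertices of $Q_1$ within distance $1+\eps$ of $s_2^*$ are the copies of $s_1$. Since consecutive copies of $s_1$ in $P_1 = \bigcirc_{a_1 \in A_1}(s_1 \circ AG(a_1) \circ t_1)$ are separated by a block of non-$s_1$ vertices (namely $r_1$, clause gadgets, and $t_1$), the monotonicity of $\phi_1$ forces it to remain at a single copy of $s_1$ throughout the entire time interval during which $\phi_2 = s_2^*$. I take $a_1^* \in A_1$ to be the assignment whose gadget $AG(a_1^*)$ immediately follows this copy of $s_1$ in $P_1$.

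To locate $a_2^*$, I would then consider the time step $\tau$ at which $\phi_1$ first advances from the $r_1$-vertex of $AG(a_1^*)$ into its first clause gadget $CG(a_1^*,1)$; this must happen eventually since $\phi_1$ covers all of $P_1$. At time $\tau-1$, $\phi_1$ sits at the $r_1$-vertex of $AG(a_1^*)$, so by \lemref{distances} the only candidates for $\phi_2(\tau-1)$ in $Q_2$ are $s_2$, some copy of $r_2$, or $t_2$. The option $s_2$ is ruled out by monotonicity because $\phi_2$ has already passed through $s_2^*$. The option $t_2$ is also ruled out: reaching $t_2$ would require $\phi_2$ to have traversed $t_2^*$, which by \lemref{distances} forces $\phi_1$ to be at some copy of $t_1$ at that moment; but throughout the relevant window $\phi_1$ sits at an $s_1$- or $r_1$-vertex, a contradiction. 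Hence $\phi_2(\tau-1)$ is the $r_2$-vertex that starts $AG(a_2^*)$ for some $a_2^* \in A_2$.

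Finally, the sub-traversal from time $\tau-1$ onward begins exactly at the first vertices of $AG(a_1^*)$ and $AG(a_2^*)$ and constitutes a discrete traversal of $(AG(a_1^*) \circ \pi_1, AG(a_2^*) \circ \pi_2)$ of width at most $1+\eps$, where $\pi_1, \pi_2$ are the suffixes of $P_1, P_2$ following the respective gadgets. The contrapositive of \lemref{AG} then forces $(a_1^*, a_2^*)$ to be a satisfying assignment of $\varphi$. I expect the main obstacle to be the exclusion of $\phi_2(\tau-1) = t_2$ in the third paragraph, which requires tracking the interaction of $\phi_1$ and $\phi_2$ carefully through the $s_2^* \to t_2^*$ portion of $P_2$; everything else reduces to direct case analysis against the close-pair list of \lemref{distances}.
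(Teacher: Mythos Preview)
Your proposal is correct and follows essentially the same approach as the paper: anchor at $s_2^*$ to pin down a unique copy of $s_1$ (hence $a_1^*$), then examine the position in $P_2$ when $\phi_1$ is at the $r_1$-vertex of $AG(a_1^*)$, ruling out $s_2$ by monotonicity past $s_2^*$ and $t_2$ because $t_2^*$ is unreachable while $\phi_1\in\{s_1,r_1\}$, and finally invoke \lemref{AG} on the remaining suffixes. The only cosmetic difference is that the paper looks at the \emph{first} time $\phi_1$ reaches $r_1$, whereas you look at the \emph{last} such time before advancing to $CG(a_1^*,1)$; either choice works for the same reasons.
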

\begin{proof}
  By \lemref{distances} any traversal with width $1+\eps$ also has width 1.
  Consider any traversal of $(P_1,P_2)$ with width 1. Consider any time step $T$ at which we are at position $s_2^*$ in $P_2$. The only point in~$P_1$ that is within distance $1$ of $s_2^*$ is $s_1$, say we are at the copy of $s_1$ that comes right before assignment gadget $AG(a_1)$, $a_1 \in A_1$. Following time step $T$, we have to start traversing $AG(a_1)$, so consider the first time step $T'$ where we are at the point $r_1$ in $AG(a_1)$. The only points in $P_2$ within distance $1$ of $r_1$ are $s_2, t_2,$ and~$r_2$. Note that we already passed $s_2^*$ in $P_2$ by time $T$, so we cannot be in~$s_2$ at time~$T'$. Moreover, in between $T$ and $T'$ we are only at $s_1$ and $r_1$ in $P_1$, which have distance larger than 1 to $t_2^*$. Thus, we cannot pass $t_2^*$, and we cannot be at $t_2$ at time $T'$. Hence, we are at $r_2$, say at the copy of $r_2$ in assignment gadget $AG(a_2)$ for some $a_2 \in A_2$. The yet untraversed remainder of $P_k$ is of the form $AG(a_k) \circ \pi_k$ for $k \in \{1,2\}$. Since our traversal of $(P_1,P_2)$ has width~1, we obtain $\ddfr(AG(a_1) \circ \pi_1, AG(a_2) \circ \pi_2) \le 1$. By \lemref{AG}, $(a_1,a_2)$ forms a satisfying assignment of~$\varphi$.
\end{proof}

\begin{lemma} \label{lem:constOnlyIf}
  If $A_1 \times A_2$ contains a satisfying assignment then $\ddfr(P_1,P_2) \le 1$.
\end{lemma}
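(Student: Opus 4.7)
Given a satisfying assignment $(a_1^*, a_2^*) \in A_1 \times A_2$ of $\varphi$, the plan is to exhibit a discrete traversal of $(P_1, P_2)$ of width at most $1$ whose ``real work'' is the parallel walk of $(AG(a_1^*), AG(a_2^*))$ guaranteed by the first direction of \lemref{AG}, while all other assignment gadgets are bypassed by parking one pointer at a suitable ``hub'' vertex. The enabling fact is \lemref{distances}: the vertices $s_2, t_2$ are within distance $1$ of \emph{every} vertex of $Q_1$, and symmetrically $s_1$ is within distance $1$ of every vertex of $Q_2 \setminus \{t_2^*\}$ while $t_1$ is within distance $1$ of every vertex of $Q_2 \setminus \{s_2^*\}$. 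Hence the only pairs we must avoid are $(s_1, t_2^*)$ and $(t_1, s_2^*)$, and the whole argument amounts to sequencing the traversal so these two configurations never occur.

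Concretely I would describe the traversal in seven phases. (i) Starting from $(s_1, s_2)$, keep $P_2$ pinned at $s_2$ and advance $P_1$ through each copy of $s_1 \circ AG(a_1) \circ t_1$ for $a_1$ preceding $a_1^*$, stopping at the $s_1$ immediately before $AG(a_1^*)$; each vertex $q \in Q_1$ visited satisfies $\|q - s_2\| \le 1$. (ii) Keeping $P_1$ at that $s_1$, advance $P_2$ from $s_2$ through $s_2^*$ and through each $AG(a_2)$ for $a_2$ preceding $a_2^*$, up to the vertex $r_2$ that opens $AG(a_2^*)$; every intermediate vertex lies in $Q_2 \setminus \{t_2^*\}$ and is therefore within distance $1$ of $s_1$. (iii) Advance $P_1$ one step to $r_1$ of $AG(a_1^*)$; since $\|r_1 - r_2\| = 1$ this is fine. (iv) Apply the width-$1$ parallel traversal of $(AG(a_1^*), AG(a_2^*))$ from \lemref{AG}. (v) Advance $P_1$ one step to the $t_1$ following $AG(a_1^*)$; the current $P_2$-vertex is the last clause gadget of $AG(a_2^*)$, which lies in $Q_2 \setminus \{s_2^*\}$, so the width stays at most $1$. (vi) Keeping $P_1$ at $t_1$, finish traversing $P_2$ through the remaining assignment gadgets and then through $t_2^*$ and $t_2$; every intermediate vertex lies in $Q_2 \setminus \{s_2^*\}$. (vii) Finally, keeping $P_2$ at $t_2$, finish traversing $P_1$, which is fine because every $q \in Q_1$ satisfies $\|q - t_2\| \le 1$.

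The only genuine obstacle is making sure we never hit the two forbidden configurations $(s_1, t_2^*)$ or $(t_1, s_2^*)$; the phase ordering is designed precisely to avoid them, since $P_2$ reaches $s_2^*$ only during phase (ii) when $P_1$ is parked at $s_1$ (and $(s_1, s_2^*)$ is a \emph{permitted} pair), and $P_2$ reaches $t_2^*$ only during phase (vi) after $P_1$ has moved past $AG(a_1^*)$ to $t_1$. All remaining width-$\le 1$ checks are immediate from \lemref{distances} (for the parking phases) and \lemref{AG} (for phase (iv)), which together yield $\ddfr(P_1, P_2) \le 1$.
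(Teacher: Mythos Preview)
Your proposal is correct and follows essentially the same approach as the paper: park at $s_2$ while sweeping $P_1$ to the right copy of $s_1$, park at $s_1$ while sweeping $P_2$ to the right copy of $r_2$, run the parallel traversal of the two assignment gadgets from \lemref{AG}, then park at $t_1$ to finish $P_2$ and at $t_2$ to finish $P_1$. Your explicit seven-phase decomposition and the emphasis on avoiding the two forbidden pairs $(s_1,t_2^*)$ and $(t_1,s_2^*)$ make the argument slightly more structured than the paper's narrative version, but the content is the same.
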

\begin{proof}
  Let $(a_1,a_2) \in A_1 \times A_2$ be a satisfying assignment of $\varphi$. We describe a traversal through $P_1,P_2$ with width 1. We start at $s_2 \in P_2$ and the first point of $P_1$. We stay at $s_2$ and follow $P_1$ until we arrive at the copy of $s_1$ that comes right before $AG(a_1)$ (note that $s_2$ has distance 1 to any point in $P_1$). Then we stay at $s_1$ and follow $P_2$ until we arrive at the copy of $r_2$ in $AG(a_2)$ (note that the only point that is too far away from $s_1$ is $t_2^*$, but this point comes after all assignment gadgets in $P_2$). In the next step we go to positions $(r_1,r_2)$ (in $AG(a_1),AG(a_2)$). Then we follow the clause gadgets $(CG(a_1,i),CG(a_2,i))$ in parallel, always staying within distance~1 by \lemref{CG}. In the next step we stay at $CG(a_2,\mphi)$ and go to $t_1$ in $P_1$ (which has distance 1 to any point in $P_2$ except for $s_2^*$, which we will never encounter again). We stay at $t_1$ in $P_1$ and follow $P_2$ completely until we arrive at its endpoint~$t_2$. Since $t_2$ has distance 1 to any point in $P_1$, we can now stay at $t_2$ in~$P_2$ and follow $P_1$ to its end.
\end{proof}

\paragraph{Proof of \thmref{main}, discrete case}
Note that we have 
$$n = \max\{|P_1|,|P_2|\} = \Oh(\mphi) \cdot \max\{|A_1|,|A_2|\} = \Oh(\mphi\cdot 2^{\nphi/2}).$$
Moreover, the instance $(P_1,P_2)$ can be constructed in time $\Oh(\nphi \mphi 2^{\nphi/2})$. Any $(1+\eps)$-approximation can decide whether $\ddfr(P_1,P_2) \le 1$ or $\ddfr(P_1,P_2) > 1+\eps$, which by \lemrefs{constIf}{constOnlyIf} yields an algorithm that decides whether $\varphi$ is satisfiable. If such an algorithm runs in time $\Oh(n^{2-\delta})$ for any small $\delta > 0$, then the resulting \CNFSAT\ algorithm runs in time $\Oh(\mphi^2 2^{(1-\delta/2)\nphi})$, contradicting \CNFSETH.

\subsection{Continuous case} \label{sec:nondiscrete}

The construction from the last section also works for the continuous \Fr distance. However, for unsatisfiable formulas it becomes tedious to argue that continuous traversals are not much better than discrete traversals. For instance, we have to argue that we cannot stay at a fixed point between the clause gadgets $c_{1,\true}^0$ and $c_{1,\true}^1$ while traversing more than one clause gadget in~$P_2$.

We adapt the proof from the last section on the same curves $P_1,P_2$ to work for the continuous \Fr distance. To this end, we have to reprove \lemrefs{constIf}{constOnlyIf}. We will make use of the following property. Here, we set $\sym(CG(a_1,i)) := CG(a_2,i)$ and $\sym(r_1) := r_2$ and interpolate linearly between them to obtain a symmetric point in $AG(a_2)$ for every point in $AG(a_1)$ (for any fixed $a_1 \in A_1$, $a_2 \in A_2$). We also set $\sym(\sym(p_1)) := p_1$, to obtain a symmetric point in $AG(a_1)$ for every point in $AG(a_2)$.

\begin{lemma} \label{lem:nondiscAG}
  Consider any points $p_k$ in $AG(a_k)$, $k \in \{1,2\}$, with $\|p_1 - p_2\| \le 1+\eps$. Then we have $\|p_2 - \sym(p_1)\| \le \tfrac19$ and $\|\sym(p_2) - p_1\| \le \tfrac19$. 
\end{lemma}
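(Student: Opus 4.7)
The plan is to exploit the fact that $AG(a_1)$ and $AG(a_2)$ are confined to two very thin horizontal strips near $y = \tfrac12$ and $y = -\tfrac12$ of thickness $2\eps$, so that the upper bound $\|p_1 - p_2\| \le 1 + \eps$ forces $p_1$ and $p_2$ to have nearly identical $x$-coordinates, while $\sym$ preserves $x$-coordinates exactly and maps each strip to the other.

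First I would record the structural observation that every point of $AG(a_1)$ lies in the strip $S_1 := \R \times [\tfrac12 - \eps, \tfrac12 + \eps]$ and every point of $AG(a_2)$ lies in $S_2 := \R \times [-\tfrac12 - \eps, -\tfrac12 + \eps]$, since all vertices do and segments are convex combinations of their endpoints. Consequently the vertical component of $p_1 - p_2$ has absolute value at least $1 - 2\eps$, and combining this with $\|p_1 - p_2\|^2 \le (1+\eps)^2$ gives the horizontal bound
\[ (x_{p_1} - x_{p_2})^2 \;\le\; (1+\eps)^2 - (1 - 2\eps)^2 \;=\; 6\eps - 3\eps^2. \]

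Next I would analyze the map $\sym$. The key point is that corresponding vertices share $x$-coordinates: both $r_1$ and $r_2$ sit at $x = -\tfrac13$, and both $CG(a_1, i)$ and $CG(a_2, i)$ sit at $x = (i \bmod 2)/3$. Since $\sym$ is defined by linearly interpolating between corresponding pairs of vertices along corresponding segments, it preserves the $x$-coordinate exactly. Hence $\sym(p_1)$ has $x$-coordinate $x_{p_1}$ and its $y$-coordinate lies in $S_2$, so $|y_{\sym(p_1)} - y_{p_2}| \le 2\eps$.

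Plugging both bounds into the Euclidean distance then yields
\[ \|p_2 - \sym(p_1)\|^2 \;\le\; (x_{p_1} - x_{p_2})^2 + (2\eps)^2 \;\le\; 6\eps + \eps^2, \]
which for $\eps = 1/1000$ is strictly less than $1/81$, so $\|p_2 - \sym(p_1)\| \le \tfrac19$. The companion inequality $\|\sym(p_2) - p_1\| \le \tfrac19$ follows by the symmetric argument, using that $\sym$ is an involution exchanging the two gadgets. I do not anticipate any real obstacle: the entire argument is a short coordinate computation whose only substantive input is the table of vertex locations defining the clause and assignment gadgets, and the crucial fact that this table was arranged so that $\sym$ acts as a vertical reflection preserving $x$-coordinates.
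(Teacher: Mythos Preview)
Your proof is correct and follows essentially the same approach as the paper: bound the vertical separation by $1-2\eps$ from the strip structure, deduce that the horizontal separation is small from $\|p_1-p_2\|\le 1+\eps$, observe that $\sym$ preserves the $x$-coordinate and lands in the opposite strip, and combine. The only cosmetic difference is that you bound $(x_{p_1}-x_{p_2})^2$ directly by $(1+\eps)^2-(1-2\eps)^2$, whereas the paper argues by contradiction that $|x_1-x_2|\le \tfrac19-2\eps$; both routes yield the same conclusion.
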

\begin{proof}
  Let $p_k = (x_k,y_k)$ and note that we have $|y_1-y_2| \ge 1-2\eps$. Thus, if $|x_1-x_2| > \tfrac19-2\eps$ then we have (recall that $\eps = 1/1000$)
  $$ \|p_1 - p_2\| > \sqrt{ (\tfrac19-2\eps)^2 + (1-2\eps)^2 } > 1+\eps, $$
  a contradiction.
  Since $\sym(p_1) = (x_1,y_1')$ with $|y_1'-y_2| \le 2\eps$, we obtain
  $$ \|p_2 - \sym(p_1)\| \le \sqrt{(\tfrac19-2\eps)^2 + (2\eps)^2} \le \tfrac19.$$
  and the same bound holds for $\|\sym(p_2) - p_1\|$.
\end{proof}

\begin{lemma} \label{lem:constIfnd}
  \emph{(Analogue of \lemref{constIf})} If $\dfr(P_1,P_2) \le 1+\eps = 1.001$ then $A_1 \times A_2$ contains a satisfying assignment.
\end{lemma}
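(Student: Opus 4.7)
The plan is to mirror the argument of \lemref{constIf} while handling continuous interpolation with the aid of \lemref{nondiscAG}.

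First, I would localize the traversal at the moment $P_2$ passes through the vertex $s_2^*$. Let $t^*$ satisfy $P_2(\phi_2(t^*)) = s_2^*$. By explicit distance computations (extending \lemref{distances} from vertices to whole segments of $P_1$), the set of points on $P_1$ within distance $1+\eps$ of $s_2^*$ consists only of short arcs around the $s_1$-vertices. Thus $P_1(\phi_1(t^*))$ lies in a small neighborhood of some specific copy of $s_1$, which we associate with the assignment gadget $AG(a_1)$ coming immediately after it in $P_1$, for some $a_1 \in A_1$.

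Next, let $t^{**} > t^*$ be the first time at which $P_1(\phi_1(t^{**}))$ equals the vertex $r_1$ of $AG(a_1)$. As in the discrete proof, the locus of points in $P_2$ within distance $1+\eps$ of $r_1$ consists of small neighborhoods of the vertices $s_2$, $t_2$, and the various $r_2$-vertices. Monotonicity of $\phi_2$ rules out $s_2$ (we have already passed $s_2^*$), and the fact that during $[t^*, t^{**}]$ the point $P_1$ remains on the segment $s_1 \to r_1$ (which is too far from $t_2^*$ for $\phi_2$ to pass $t_2^*$) rules out $t_2$. So $P_2(\phi_2(t^{**}))$ is at or very close to the $r_2$-vertex of some assignment gadget $AG(a_2)$, $a_2 \in A_2$.

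Now I would argue that the remainder of the traversal forces $(a_1, a_2)$ to satisfy $\varphi$. By \lemref{nondiscAG}, whenever $P_1(\phi_1(t)) \in AG(a_1)$ and $P_2(\phi_2(t)) \in AG(a_2)$ simultaneously, their $x$-coordinates differ by less than $1/9$. Since the clause gadgets have $x$-coordinates alternating between $0$ and $1/3$ while the intermediate segments fully traverse this distance $1/3$ in $x$, a monotonicity argument shows that $P_1$ and $P_2$ must advance through their clause gadgets in lockstep: at the time $P_1$ reaches $CG(a_1, \ell)$, the position of $P_2$ lies on a segment incident to $CG(a_2, \ell)$. If $(a_1, a_2)$ failed to satisfy some clause $C_\ell$, then both $CG(a_1, \ell)$ and $CG(a_2, \ell)$ would lie at ``false'' positions with $\|CG(a_1, \ell) - CG(a_2, \ell)\| = 1 + 2\eps$, and a direct distance computation (analogous to those in the proof of \lemref{nondiscAG}) shows that for every possible position of $P_2$ on a segment incident to $CG(a_2, \ell)$ the width $\|P_1 - P_2\|$ at this time exceeds $1 + \eps$, contradicting the assumed width bound.

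The main obstacle will be the lockstep argument of the third step: one must verify that neither curve can ``jump ahead'' of the other by two or more clause gadgets without violating the $1/9$ bound on $|x_1 - x_2|$ at some intermediate moment when one curve is mid-segment between gadgets at $x \in \{0, 1/3\}$ while the other is still at an earlier gadget. The alternating structure of $x$-coordinates $-1/3, 0, 1/3, 0, 1/3, \ldots$ in both assignment gadgets makes this verification a combinatorial induction on the clause index, together with an accounting of the boundary moments at the $r$-vertices and the terminal clause gadget, where \lemref{nondiscAG} does not directly apply and the distance computations must be made explicitly.
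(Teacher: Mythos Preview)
Your proposal is correct and follows essentially the same route as the paper: localize at $s_2^*$ to pin down $a_1$, then at $r_1$ to pin down $a_2$, use \lemref{nondiscAG} in an induction on the clause index to force lockstep traversal of the clause gadgets, and finish with an explicit distance bound at an unsatisfied clause. The paper phrases the inductive invariant as ``within distance $1/9$ of $CG(a_k,i)$ \emph{along the curve} $P_k$'' (which is precisely your monotonicity/combinatorial-induction concern) and closes the final step with the point-to-line distance from $c_{1,\false}^0$ to the line through $c_{2,\false}^0$ and $c_{2,\true}^1$; your worry about the $r$-vertices is unnecessary since $r_k$ is part of $AG(a_k)$ and \lemref{nondiscAG} already applies there.
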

\begin{proof} 
  In this proof, we say that two points $p_1 = (x_1,y_1)$, $p_2 = (x_2,y_2)$ have \emph{$y$-distance} $D$ if $|y_1 - y_2| \le D$.
  
  Consider any traversal of $(P_1,P_2)$ with width $1+\eps$. Consider any time step $T$ where we are at position $s_2^*$ in $P_2$. The only points in $P_1$ that are within distance $1+\eps$ of $s_2^*$ are within distance $1/20$ and $y$-distance $\eps$ of $s_1$ (since no point in $P_1$ has lower $y$-value than $s_1$ and $\sqrt{1+(1/20)^2} > 1+\eps$). Say we are near the copy of $s_1$ that comes right before assignment gadget $AG(a_1)$, $a_1 \in A_1$. Following time step $T$, we have to start traversing $AG(a_1)$, so consider the first time step $T'$ where we are at the point $r_1$ in $AG(a_1)$. The only points in~$P_2$ within distance $1+\eps$ of $r_1$ are near $s_2, t_2,$ or~$r_2$. Note that we already passed $s_2^*$ in $P_2$ by time $T$, so we cannot be near $s_2$ at time~$T'$. Moreover, in between $T$ and $T'$ we are always near $s_1$ or between $s_1$ and $r_1$ in $P_1$, so we are always above and to the left of $s_1 + (1/20,0)$, which has distance larger than $1+\eps$ to~$t_2^*$. Thus, we cannot pass~$t_2^*$, and we cannot be near~$t_2$ at time $T'$. Hence, we are near $r_2$, more precisely, we are in distance 1/20 and $y$-distance $\eps$ of $r_2$ (this is the same situation as for $s_1$ and $s_2^*$).
  After that, the traversal has to further traverse $AG(a_1)$ and/or $AG(a_2)$. Consider the first time step at which we are at $CG(a_1,1)$ or $CG(a_2,1)$, say we reach $CG(a_1,1)$ first. By \lemref{nondiscAG}, we are within distance 1/9 of $CG(a_2,1)$. Since we were near~$r_2$ at time $T'$, we now passed $r_2$, and since we did not pass $CG(a_2,1)$ yet, we are even within distance 1/9 of $CG(a_2,1)$ \emph{along the curve $P_2$}. 
  This proves the induction base of the following inductive claim.
  
  %say near the copy of $r_2$ in assignment gadget $AG(a_2)$ for some $a_2 \in A_2$. The untraversed remainder of of the curves is of the form $\sigma_k \circ AG(a_k) \circ \pi_k$, where $\sigma_k$ is contained in a ball of radius $\eps$ around $r_k$. Since our traversal of $(P_1,P_2)$ has width~$1+\eps$, we obtain $\dfr(\sigma_1 \circ AG(a_1) \circ \pi_1, \sigma_2 \circ AG(a_2) \circ \pi_2) \le 1+\eps$. By \lemref{AGnd}, $(a_1,a_2)$ forms a satisfying assignment of~$\varphi$.
  
  \begin{claim}
    Let $T_i$ be the first step in time at which the traversal is at $CG(a_1,i)$ or $CG(a_2,i)$, $i \in [M]$. At time $T_i$ the traversal is within distance 1/9 of $CG(a_k,i)$ along the curve $P_k$ for both $k \in\{1,2\}$.
  \end{claim}
  \begin{proof}
    Note that at all times $T_i$ (and in between) \lemref{nondiscAG} is applicable, so we clearly are within distance 1/9 of $CG(a_k,i+1)$ at time $T_{i+1}$ for any $i \in [\mphi]$, $k\in\{1,2\}$. Since $\|CG(a_k,i) - CG(a_k,i+1)\| \ge 1/3$, points within distance 1/9 of $CG(a_k,i)$ are not within distance 1/9 of $CG(a_k,i+1)$. Hence, if we are within distance 1/9 of $CG(a_k,i)$ along $P_k$ for both $k \in \{1,2\}$ at time $T_i$, then at time $T_{i+1}$ we passed $CG(a_k,i)$ and did not pass $CG(a_k,i+1)$ yet (by definition of $T_{i+1}$), so that we are within distance 1/9 of $CG(a_k,i+1)$ along $P_k$ for both $k \in \{1,2\}$.
  \end{proof}
  
  Finally, we show that the above claim implies that $(a_1,a_2)$ is a satisfying assignment. Assume for the sake of contradiction that some clause $C_i$ is not satisfied by both $a_1$ and $a_2$. Say at time~$T_i$ we are at $CG(a_1,i)$ (if we are at $CG(a_2,i)$ instead, then a symmetric argument works). At the same time we are at some point $p$ in $AG(a_2)$. By the above claim, $p$ is within distance 1/9 of $CG(a_2,i)$ along $P_2$. Note that $p$ lies on any of the line segments $c_{2,\true}^0 \circ c_{2,\false}^1$, $c_{2,\false}^0 \circ c_{2,\true}^1$, $c_{2,\false}^0 \circ c_{2,\false}^1$, or $r_2 \circ c_{2,\false}^0$, since $\sat(a_{2},C_i) = \false$. In any case, the current distance $\|p - CG(a_1,i)\|$ is at least the distance from the point $c_{1,\false}^0$ to the line through $c_{2,\false}^0$ and $c_{2,\true}^1$. We compute this distance as
  $$ \frac{\tfrac13 (1+2\eps)}{\sqrt{(\tfrac13)^2 + (2\eps)^2}} > 1+\eps, $$
  which contradicts the traversal having width $1+\eps$.
\end{proof}

\begin{lemma} \label{lem:constOnlyIfnd}
  \emph{(Analogue of \lemref{constOnlyIf})} If $A_1 \times A_2$ contains a satisfying assignment then $\dfr(P_1,P_2) \le 1$.
\end{lemma}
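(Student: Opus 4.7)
The plan is to obtain the continuous bound essentially for free by invoking the discrete analogue. Specifically, \lemref{constOnlyIf} already yields $\ddfr(P_1,P_2) \le 1$ under the very same hypothesis that $A_1 \times A_2$ contains a satisfying assignment, and \secref{preliminaries} records the general inequality $\dfr(P_1,P_2) \le \ddfr(P_1,P_2)$ from \cite{EiterM94}. Chaining the two facts gives $\dfr(P_1,P_2) \le 1$ immediately, so there is nothing further to prove. Note that this is the ``easy direction'' in any case: it only requires exhibiting one low-width traversal, whereas the genuinely delicate continuous argument was ruling out low-width traversals when $\varphi$ is unsatisfiable, which was just handled by \lemref{constIfnd}.

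If one prefers a self-contained continuous construction, I would lift the discrete traversal from the proof of \lemref{constOnlyIf} by turning each ``hold vertex $v$ on one curve while the other curve advances along a single segment from $p$ to $q$'' step into a continuous phase in which one reparameterization is constant and the other is linear. In such a phase the width equals $\sup_{\lambda \in [0,1]} \|v - (1-\lambda)p - \lambda q\|$, which by convexity of the Euclidean norm is bounded by $\max(\|v-p\|,\|v-q\|)$; both endpoint distances were already shown to be at most $1$ in the discrete argument, using \lemref{distances} and \lemref{CG}. The only phases not of this form are the simultaneous transitions $(CG(a_1,i),CG(a_2,i)) \to (CG(a_1,i+1),CG(a_2,i+1))$, which we parameterize at equal speed; their width is again controlled by the endpoint distances via convexity together with \lemref{CG}.

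I do not anticipate any real obstacle, since continuous traversals are strictly more permissive than discrete ones and we already possess a discrete witness. The cleanest write-up is therefore a one-line appeal to \lemref{constOnlyIf} and the inequality $\dfr \le \ddfr$.
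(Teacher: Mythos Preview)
Your proposal is correct and matches the paper's proof exactly: the paper's argument is the one-line appeal to \lemref{constOnlyIf} together with the general inequality $\dfr(P_1,P_2) \le \ddfr(P_1,P_2)$. Your optional self-contained lift of the discrete traversal is fine but unnecessary here.
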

\begin{proof}
  Follows from \lemref{constOnlyIf} and the general inequality
  $\dfr(P_1,P_2) \le \ddfr(P_1,P_2)$.
\end{proof}

\subsection{Generalization to imbalanced numbers of vertices} \label{sec:nm}

Assume that the input curves $P_1,P_2$ have different numbers of vertices $n=|P_1|$, $m=|P_2|$ with $n \ge m$. We show that there is no $\Oh((n m)^{1-\delta})$ algorithm for the \Fr distance for any $\delta > 0$, even for any polynomial restriction of $1 \le m \le n$. More precisely, for any $\delta \le \gamma \le 1-\delta$ we show that there is no $\Oh((nm)^{1-\delta})$ algorithm for the \Fr distance restricted to instances with $n^{\gamma-\delta} \le m \le n^{\gamma+\delta}$.

To this end, given a \CNFSAT\ instance $\varphi$ we partition its variables $x_1,\ldots,x_\nphi$ into\footnote{For the impatient reader: we will set $\ell := \nphi / (\gamma+1)$ (rounded in any way).} $V_1' := \{x_1,\ldots,x_\ell\}$ and $V_2' := \{x_{\ell+1},\ldots,x_\nphi\}$ and let $A_k'$ be all assignments of~$V_k'$, $k \in \{1,2\}$. Note that $|A_1'| = 2^{|V_1'|} = 2^\ell$ and $|A_2'| = 2^{\nphi-\ell}$. 
Now we use the same construction as in \secref{discrete} but replace $V_k$ by $V_k'$ and $A_k$ by $A_k'$. Again we obtain that any $1.001$-approximation for the \Fr distance of the constructed curves $P_1,P_2$ decides satisfiability of $\varphi$. Observe that the constructed curves contain a number of points of 
$$ n = |P_1| = \Theta(\mphi \cdot |A_1'|), \quad m = |P_2| = \Theta(\mphi \cdot |A_2'|). $$
Hence, any $1.001$-approximation of the \Fr distance with runtime $\Oh((nm)^{1-\delta})$ for any small $\delta>0$ yields an algorithm for \CNFSAT\ with runtime $\Oh(\mphi^2 (2^\ell 2^{\nphi-\ell})^{1-\delta}) = \Oh(\mphi^2 2^{(1-\delta)\nphi})$, contradicting \CNFSETH. 

Finally, we set $\ell := \nphi / (\gamma+1)$ (rounded in any way) so that $|A_1'| = \Theta(2^{\nphi/(\gamma+1)})$ and $|A_2'| = \Theta(2^{\nphi \gamma/(\gamma+1)})$. Using \lemref{seth} we can assume that $1 \le \mphi \le 2^{\delta \nphi / 4}$. Hence, we have 
\begin{align*}
  \Omega( 2^{\nphi/(\gamma+1)} ) \le \,&n \le \Oh(2^{\nphi/(\gamma+1) + \delta\nphi/4}),  \\
  \Omega( 2^{\nphi \gamma/(\gamma+1)} ) \le \,&m \le \Oh(2^{\nphi \gamma/(\gamma+1) + \delta\nphi/4}),
\end{align*}
which implies
$\Omega(n^{\gamma - \delta/2}) \le m \le \Oh(n^{\gamma + \delta/2})$.
For sufficiently large $n$, we obtain the desired polynomial restriction $n^{\gamma - \delta} \le m \le n^{\gamma + \delta}$. This proves \thmref{const}.

\section{Realistic inputs: $c$-packed curves} \label{sec:cpacked}

\subsection{Constant factor approximations} \label{sec:cpackedconst}

The curves constructed in \secref{discrete} are highly packed, since all assignment gadgets lie roughly in the same area. Specifically they are not $o(n)$-packed. In this section we want to construct $c$-packed instances and show that there is no $1.001$-approximation with runtime $\Oh((cn)^{1-\delta})$ for any $\delta>0$ for the \Fr distance unless \CNFSETH\ fails, not even restricted to instances with $n^{\gamma-\delta} \le c \le n^{\gamma+\delta}$ for any $\delta \le \gamma \le 1-\delta$.
This proves \thmref{cpacked}.

To this end, we again consider a \CNFSAT\ instance $\varphi$, partition its variables $x_1,\ldots,x_\nphi$ into two sets $V_1,V_2$ of size $N/2$, and consider the set $A_k$ of all assignments of \true\ and \false\ to the variables in~$V_k$. Now we partition $A_k$ into sets $A_k^1,\ldots,A_k^\ell$ of size $\Theta(2^{\nphi/2}/\ell)$, where we fix $1 \le \ell \le 2^{\nphi/2}$ later. Formula $\varphi$ is satisfiable if and only if for some pair $(j_1,j_2) \in [\ell]^2$ the set $A_1^{j_1} \times A_2^{j_2}$ contains a satisfying assignment. 
This suggests to use the construction of \secref{discrete} after replacing $A_1$ by $A_1^{j_1}$ and $A_2$ by $A_2^{j_2}$, yielding a pair of curves $(P_1^{j_1j_2},P_2^{j_1j_2})$. Now, $\varphi$ is satisfiable if and only if $\dfr(P_1^{j_1j_2},P_2^{j_1j_2}) \le 1$ for some $(j_1,j_2) \in [\ell]^2$. For the sake of readability, we rename the constructed curves slightly so that we have curves $(P_1^j,P_2^j)$ for $j \in [\ell^2]$.

\begin{wrapfigure}{r}{0.16\textwidth}
\vspace{-0pt}
\begin{center}
\begin{tikzpicture}[every node/.style={fill, circle, inner sep = 1pt},scale=1.7]
  \drawUL;
  \drawUR;
  
  \defpts;
  
  \node[shape=coordinate,label={[label distance=0.07cm]right:$U_L$}] at (UL4) {};
  \node[shape=coordinate,label={[label distance=0.07cm]left:$U_R$}] at (UR2) {};
  
  \path[dotted] (UL5) edge node[above,style={fill=white}] {$U$} (UR1);
\end{tikzpicture}
\end{center}
\vspace{-10pt}
\end{wrapfigure}

\paragraph{OR-gadget} 
In the whole section we let $\rho := 1/\sqrt{2}$.
We present an OR-construction over the gadgets $(P_1^j,P_2^j)$ that is not too packed, in contrast to the OR-construction over assignment gadgets that we used in \secref{discrete}. 
We start with two building blocks, where for any $j \in \N$ we set
\begin{align*}
  U_L(j) &:= (j\rho,0) \circ ((j-1)\rho,\rho) \circ ((j-1)\rho,3\rho) \circ ((j-1)\rho,2\rho) \circ ((j-1)\rho,\rho), \\
  U_R(j) &:= ((j+1)\rho,\rho) \circ ((j+1)\rho,2\rho) \circ ((j+1)\rho,3\rho) \circ ((j+1)\rho,\rho) \circ (j\rho,0).
\end{align*}

Moreover, we set $U(j) := U_L(j) \circ U_R(j)$.
For a curve $\pi$ and $z \in \R$ we let $\Tr_z(\pi)$ be the curve~$\pi$ translated by $z$ in $x$-direction.
The OR-gadget now consists of the following two curves,
\begin{align*}
  R_1 &:= \bigcirc_{j=1}^{\ell^2} \big( U_L(2j) \circ \Tr_{2j\rho}(P_1^{j}) \circ U_R(2j)\big),  \\
  R_2 &:= U(1) \circ \bigcirc_{j=1}^{\ell^2} \big(  \Tr_{2j\rho}(P_2^{j}) \circ U(2j+1) \big).
\end{align*}

\begin{wrapfigure}{r}{0.4\textwidth}
\vspace{-10pt}
\begin{center}
\begin{tikzpicture}[every node/.style={fill, circle, inner sep = 0.5pt},scale=0.85]
  
  \foreach \j in {0,2,4,6}{
  \begin{scope}[xshift=\j*\myrho cm,every path/.style={dotted}]
    \drawUL;
    \drawPOne;
    \drawUR;
    
    \defpts;
    \draw (UL5) -- (sOne);
    \draw (tOne) -- (UR1);
  \end{scope}
  }
  \draw[dotted] (0,0) -- (6*\myrho,0);
  
  \foreach \j in {0,2,4,6}{
  \begin{scope}[xshift=\j*\myrho cm]
    \begin{scope}[xshift=-\myrho cm]
      \drawU;
    \end{scope}
    
    \defpts
    \draw (-\myrho,0) -- (sTwo);
    \draw (tTwo) -- (\myrho,0);
    
    \drawPTwo
  
    \begin{scope}[xshift=\myrho cm]
      \drawU;
    \end{scope}
  \end{scope}
  }
\end{tikzpicture}
\end{center}
\vspace{-10pt}
\end{wrapfigure}

\noindent
The figure to the right shows $R_1$ (dotted) and $R_2$ (solid) for $\ell^2=4$, see below for a figure showing $\ell^2=1$ with more details visible.

We denote by $R_1^j$ the $j$-th ``summand'' of $R_1$, i.e., $R_1^j = U_L(2j) \circ \Tr_{2j\rho}(P_1^{j}) \circ U_R(2j)$. Informally, we will use the term \emph{$U$-shape} for the subcurves $R_1^j$ and $U(2j+1)$, since they resemble the letter~U. Moreover, we consider ``summands'' of $R_2$, namely $R_2^j := U(2j-1) \circ \Tr_{2j\rho}(P_2^{j}) \circ ((2j+1)\rho,0)$ and $\tilde R_2^j := ((2j-1)\rho,0) \circ \Tr_{2j\rho}(P_2^{j}) \circ U(2j+1)$.

\begin{wrapfigure}{r}{0.32\textwidth}
\vspace{-2pt}
\begin{center}
\begin{tikzpicture}[every node/.style={fill, circle, inner sep = 1pt},scale=1.7]
  
  \begin{scope}[every path/.style={dotted}]
    \drawUL;
    \drawPOne;
    \drawUR;
    
    \defpts;
    \draw (UL5) -- (sOne);
    \draw (tOne) -- (UR1);
  \end{scope}
  
  \begin{scope}
    \begin{scope}[xshift=-\myrho cm]
      \drawU;
    \end{scope}
    
    \defpts
    \draw (-\myrho,0) -- (sTwo);
    \draw (tTwo) -- (\myrho,0);
    
    \drawPTwo
  
    \begin{scope}[xshift=\myrho cm]
      \drawU;
    \end{scope}
  \end{scope}
\end{tikzpicture}
\end{center}
\vspace{-20pt}
\end{wrapfigure}

\paragraph{Intuition}
Considering traversals that stay within distance 1, we can traverse one $U$-shape in~$R_1$ and one neighboring $U$-shape in $R_2$ together. Such traversals can be stitched together to a traversal of any number $j$ of neighboring $U$-shapes in both curves. So far we can only traverse the same number of $U$-shapes in both curves, but $R_2$ has one more $U$-shape than $R_1$. We will show that we can traverse two $U$-shapes in $R_2$ while traversing only one $U$-shape in $R_1$, if these parts contain a satisfying assignment.

In the unsatisfiable case, essentially we show that we cannot traverse two $U$-shapes in $R_2$ while traversing only one $U$-shape in $R_1$, which implies a contradiction since the number of $U$-shapes in $R_2$ is larger than in $R_1$. We make this intuition formal in the remainder of this section.

\paragraph{Analysis}
In order to be able to replace the curves $P_1^j,P_2^j$ constructed above by other curves in the next section, we analyse the OR-gadget in a rather general way. To this end, we first specify a set of properties and show that the curves $P_1^j,P_2^j$ constructed above satisfy these properties. Then we analyse the OR-gadget using only these properties of $P_1^j,P_2^j$.

\begin{property} \label{prop:PG}
  \begin{enumerate}[(i)] \itemsep0em 
    \item If $\varphi$ is satisfiable then for some $j \in [\ell^2]$ we have $\ddfr(P_1^j,P_2^j) \le 1$.
    \item If $\varphi$ is not satisfiable then for all $j \in [\ell^2]$ and curves $\sigma_1,\sigma_2,\pi_1,\pi_2$ such that $\sigma_1$ stays to the left and above $(-\rho,\rho)$ and $\pi_1$ stays to the right and above $(\rho,\rho)$,  we have $\dfr(\sigma_1 \circ P_1^j \circ \pi_1, \sigma_2 \circ P_2^j \circ \pi_2) > \beta$, for some $\beta > 1$.
    \item $P_k^j$ is $\Theta(c)$-packed for some $c\ge1$ for all $j \in [\ell^2]$, $k \in \{1,2\}$.
    \item $(0,\rho)$ is within distance 1 of any point in $P_1^j$ for all $j \in [\ell^2]$.
    \item $(0,0)$ is within distance 1 of any point in $P_2^j$ for all $j \in [\ell^2]$.
  \end{enumerate}
\end{property}

\begin{lemma} \label{lem:propConst}
  The curves $(P_1^j,P_2^j)$ constructed above satisfy \propref{PG} with $\beta = 1.001$ and $c = \Theta(\mphi \cdot 2^{\nphi/2}/\ell)$. Moreover, we have $|P_k^j| = \Theta(\mphi \cdot 2^{\nphi/2}/\ell)$ for all $j \in [\ell^2]$, $k \in \{1,2\}$.
\end{lemma}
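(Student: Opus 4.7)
The plan is to verify (i)--(v) by combining the lemmas of \secref{discrete} and \secref{nondiscrete} with elementary geometric calculations. For the size bound and property (i), each assignment gadget contributes $\Theta(\mphi)$ vertices and $|A_k^{j_k}| = \Theta(2^{\nphi/2}/\ell)$, so a direct count yields $|P_k^j| = \Theta(\mphi \cdot 2^{\nphi/2}/\ell)$. If $\varphi$ is satisfiable, then any satisfying pair $(a_1,a_2) \in A_1 \times A_2$ lies in some partition class $A_1^{j_1} \times A_2^{j_2}$, and \lemref{constOnlyIf} applied to this class yields $\ddfr(P_1^j,P_2^j) \le 1$ for the corresponding $j$.

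\textbf{Property (ii).} Assume $\varphi$ is unsatisfiable, so no pair in $A_1^{j_1}\times A_2^{j_2}$ is satisfying; the plan is to extend \lemref{constIfnd} to handle arbitrary $\sigma_2,\pi_2$ and the spatially constrained $\sigma_1,\pi_1$. The key new input is that by the assumption on $\sigma_1$ and $\pi_1$, every point on these curves has $y$-coordinate at least $\rho$, and hence Euclidean distance at least $\rho + 4/5 > 1.5$ from both $s_2^* = (-1/3,-4/5)$ and $t_2^* = (1/3,-4/5)$. Consequently, at any time step $T$ at which the second curve is at $s_2^*$, the first curve's position must lie in $P_1^j$; from here the argument of \lemref{constIfnd} proceeds essentially verbatim. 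The one additional check is that the second curve cannot escape into $\pi_2$ during the critical interval $[T,T']$: in that interval the first curve is confined to the segment from (near) $s_1$ to $r_1$, i.e.\ $x \in [-1/3-1/20,\,-1/3+1/20]$ and $y \in [1/5-\eps,1/2]$, and every such point has distance at least $\sqrt{(37/60)^2 + 1} > 1.001$ from $t_2^*$, so the second curve cannot pass $t_2^*$ before time $T'$. The remainder of \lemref{constIfnd} then produces an unsatisfied clause $C_i$ and gives the contradiction $\dfr > 1+\eps$. I expect this extension to be the main (but still mild) obstacle.

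\textbf{Properties (iii)--(v).} Every vertex of $P_1^j$ has $x \in \{-1/3,0,1/3\}$ and $y \in \{1/5,1/2-\eps,1/2,1/2+\eps\}$, so its distance from $(0,\rho)$ is at most $\sqrt{(1/3)^2 + (\rho - 1/5)^2} < 1$; since the distance from a fixed point to a point on a line segment is a convex function of the parameter and thus attains its maximum at an endpoint, this bound transfers to all of $P_1^j$, establishing (iv). The analogous computation for (v) gives distance at most $\sqrt{(1/3)^2 + (4/5)^2} = 13/15 < 1$ from $(0,0)$ to every point of $P_2^j$. For (iii), $P_k^j$ has total length $\Theta(\mphi \cdot |A_k^{j_k}|) = \Theta(c)$ in a region of constant diameter, so a ball of radius $\Theta(1)$ covers $\Theta(c)$ length and rules out $\beta c$-packedness for sufficiently small $\beta$. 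Conversely, each of the $|A_k^{j_k}|$ gadgets consists of $\Theta(\mphi)$ segments of constant length, so any ball $B(q,r)$ with $r \le 1$ meets each gadget in length $O(\mphi r)$, summing to $O(cr)$ over all gadgets; for $r \ge 1$ the total length bound already suffices, proving that $P_k^j$ is $\Theta(c)$-packed.
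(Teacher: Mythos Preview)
Your proposal is correct.  For the size bound and for properties (i), (iii), (iv), (v) your arguments coincide with the paper's (you simply carry out the explicit distance checks that the paper leaves implicit).

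For property~(ii) you take a genuinely different route.  You re-enter the proof of \lemref{constIfnd} and run it on the extended curves $\sigma_1 \circ P_1^j \circ \pi_1$ and $\sigma_2 \circ P_2^j \circ \pi_2$, inserting the extra checks that (a) at the moment the second curve is at $s_2^*$ the first curve cannot be in $\sigma_1$ or $\pi_1$ (since those lie at $y\ge\rho$, hence at $y$-distance $>1.5$ from $s_2^*$), and (b) during the interval $[T,T']$ the second curve cannot have reached $\pi_2$ because the first curve is confined to the $s_1$--$r_1$ region, which is at distance $>1.001$ from $t_2^*$.  The paper instead uses a clean retraction argument: from any width-$(1+\eps)$ traversal of the extended curves it extracts a width-$(1+\eps)$ traversal of $(P_1^j,P_2^j)$ by collapsing $\sigma_k\mapsto s_k$ and $\pi_k\mapsto t_k$; this does not increase the width because $s_2,t_2$ are within distance~$1$ of all of $P_1^j$, and because any point of $\sigma_1$ (lying above and to the left of $(-\rho,\rho)$, hence of $s_1$) is coordinatewise farther from every point of $P_2^j$ than $s_1$ is, and symmetrically for $t_1,\pi_1$.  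Then \lemref{constIfnd} applies verbatim.  The paper's reduction is more modular---it never reopens the clause-gadget analysis---while your direct approach is self-contained but requires you to re-verify the geometric steps of \lemref{constIfnd} in the presence of the extra prefix/suffix curves.
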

\begin{proof}
  \propref{PG}.(i) follows from \lemref{constOnlyIf}, since at least one pair $(A_1^{j_1},A_2^{j_2})$ contains a satisfying assignment. 
  Properties (iv) and (v) can be verified by considering all points in the construction in \secref{discrete}. 
  
  Observe that $|P_k^j| = \Theta(\mphi \cdot 2^{\nphi/2} / \ell)$, since $P_k^j$ consists of $|A_k^j| = \Theta( 2^{\nphi/2} / \ell)$ assignment gadgets of size $\Theta(\mphi)$. The upper bound of (iii) follows since any polygonal curve with at most $m$ segments is $m$-packed. The lower bound of (iii) follows from $P_k^j$ being contained in a ball of radius 1 (by (iv) and (v)) and every segment of $P_k^j$ having constant length.
  
  For (ii), note that from any traversal of $(\sigma_1 \circ P_1^j \circ \pi_1, \sigma_2 \circ P_2^j \circ \pi_2)$ with width $1.001$ one can extract a traversal of $(P_1^j,P_2^j)$ with width $1.001$, by mapping any point in $\sigma_k$ to the starting point~$s_k$ of $P_k^j$ and any point in $\pi_k$ to the endpoint $t_k$ of $P_k^j$, $k \in \{1,2\}$. This does not increase the width, since (1) $s_2$ and $t_2$ are within distance~1 to all points in $P_1^j$, 
  and (2) $s_1$ has smaller distance to any point in $P_2^j$ than any point in $\sigma_1$ has, since $\sigma_1$ stays above and to the left of~$s_1$ while all points of $P_1^j$ lie below and to the right of $s_1$. A similar statement holds for $t_1$ and $\pi_1$. Property (ii) now follows from \lemref{constIfnd}.
\end{proof}

In the following lemma we analyse the OR-gadget.

\begin{lemma} \label{lem:OR}
  For any curves $(P_1^j,P_2^j)$ that satisfy \propref{PG}, the OR-gadget $(R_1,R_2)$ satisfies:
  \begin{enumerate}[(i)] \itemsep0em 
    \item $|R_k| = \Theta\big(\sum_{j=1}^{\ell^2} |P_k^j|\big)$ for $k \in \{1,2\}$.
    \item $R_1$ and $R_2$ are $\Theta(c)$-packed,
    \item If $\varphi$ is satisfiable then $\dfr(R_1,R_2) \le \ddfr(R_1,R_2) \le 1$,
    \item If $\varphi$ is not satisfiable then $\ddfr(R_1,R_2) \ge \dfr(R_1,R_2) > \min\{\beta,1.2\}$.
  \end{enumerate}
\end{lemma}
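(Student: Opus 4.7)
The plan is to establish the four claims of the lemma in turn, with (iv) the technical heart of the proof.

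Parts (i) and (ii) are essentially bookkeeping. For (i), I will write $|R_k| = 10\ell^2 + \sum_j |P_k^j|$; since each $|P_k^j|\ge 1$, the $\ell^2$ term is dominated. For (ii), I will estimate the length of $R_k \cap B(q,r)$ for an arbitrary ball: by PG(iv)--(v) each $\Tr_{2j\rho}(P_k^j)$ lies in a unit ball on the $2\rho$-spaced grid, so only $O(r+1)$ translated P-shapes meet $B(q,r)$, and each contributes at most $O(cr)$ length by the $\Theta(c)$-packedness in PG(iii); the U-pieces have constant length and only $O(r+1)$ of them meet the ball. Summing gives the $O(cr)$ upper bound; the matching lower bound follows because a single $P_k^j$ already witnesses $\Omega(c)$ length in some ball (and this ball, after translation, serves $R_k$).

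For part (iii), let $j^\ast$ be the index given by PG(i) with $\ddfr(P_1^{j^\ast},P_2^{j^\ast}) \le 1$. I will glue together a discrete width-$1$ traversal of $(R_1,R_2)$ in three stages. For $j<j^\ast$: sync $U_L(2j)$ of $R_1$ with the left half of $V_j := U(2j-1)$ of $R_2$ in width $\rho$ (the vertical legs sit at horizontal offset $\rho$); then park $R_2$ at the top horizontal segment of $V_j$ while $R_1$ sweeps $\Tr_{2j\rho}(P_1^j)$ (every point of which is within $1$ of $(2j\rho,\rho)$ by PG(iv)); then sync $U_R(2j)$ with the right half of $V_j$; and finally park $R_1$ near $(2j\rho,0)$ while $R_2$ traverses $\Tr_{2j\rho}(P_2^j)$, using PG(v). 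At $j=j^\ast$ splice in the known width-$1$ traversal of $(P_1^{j^\ast},P_2^{j^\ast})$, using the extra U-shape $V_{j^\ast+1}$ to realign with $U_R(2j^\ast)$. For $j>j^\ast$ apply the mirror image of the pre-$j^\ast$ procedure; endpoints coincide by construction.

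For part (iv), I argue the contrapositive: any traversal of width $D := \min\{\beta,1.2\}$ forces $\varphi$ to be satisfiable. The key tool is a pigeon-hole on the peaks at height $3\rho$. In total $R_2$ visits $2\ell^2+2$ peaks (two per $V_j$, across $\ell^2+1$ U-shapes), while $R_1$ has only $2\ell^2$ vertical-leg visits. A short geometric calculation using $\rho=1/\sqrt 2$ shows that whenever $R_2$ is at a peak indexed by $k$, the matched point of $R_1$ must sit on a vertical leg with $m\in\{k-1,k\}$ or inside some $\Tr_{2j\rho}(P_1^j)$ with $|j-k|\le 1$. Combined with the monotonicity of the traversal, this forces a ``shift'': there exists an index $j^\ast$ such that the entire $R_2$-segment $V_{j^\ast}\circ \Tr_{2j^\ast\rho}(P_2^{j^\ast})\circ V_{j^\ast+1}$ is traversed while $R_1$ remains inside the $j^\ast$-th block $U_L(2j^\ast)\circ \Tr_{2j^\ast\rho}(P_1^{j^\ast})\circ U_R(2j^\ast)$. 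Restricting to the sub-interval where $R_2$ is in $\Tr_{2j^\ast\rho}(P_2^{j^\ast})$, undoing the translation, and packaging the surrounding portions of $R_k$ as prefix/suffix curves yields a width-$D$ traversal of $(\sigma_1\circ P_1^{j^\ast}\circ\pi_1,\,\sigma_2\circ P_2^{j^\ast}\circ\pi_2)$, where $\sigma_1$ lies above and left of $(-\rho,\rho)$ (it is a tail of $U_L(2j^\ast)$ after undoing translation) and $\pi_1$ lies above and right of $(\rho,\rho)$. By PG(ii) this width exceeds $\beta\ge D$, a contradiction.

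The main obstacle is the pigeon-hole step of (iv): precisely formalising how an arbitrary width-$D$ traversal induces a ``matching'' of $R_2$'s U-shapes to $R_1$'s blocks (accounting for possible dwelling inside $\Tr_{2j\rho}(P_1^j)$), and then carving out a sub-traversal whose prefix $\sigma_1$ and suffix $\pi_1$ satisfy the above-left/above-right positioning hypothesis of PG(ii). The value $\rho=1/\sqrt 2$ is exactly tuned: at horizontal distance $\rho$ two adjacent peaks can match within any $D<3\rho$, while peaks at horizontal distance $3\rho$ cannot, and the unit-ball constraint on $\Tr_{2j\rho}(P_k^j)$ prevents a P-shape from reaching any $R_2$ peak more than one neighbour-block away within width $1.2$.
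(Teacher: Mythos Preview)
Your plan for (i) and (iii) matches the paper. In (ii) there is a small slip: summing ``$O(r{+}1)$ shapes times $O(cr)$ each'' gives $O(cr^2)$ for large $r$, not $O(cr)$. The paper splits into $r\le 1$ and $r>1$; for $r>1$ one uses that each $P_k^j$, being $c$-packed and contained in a ball of radius~$1$ by \propref{PG}(iv)--(v), has \emph{total} length at most~$c$, so $O(r)$ shapes contribute $O(cr)$ in aggregate.

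For (iv) your route differs from the paper's, and the extraction step has a real gap. The paper does not argue by a global pigeon-hole on peaks. Instead it proves, by induction on $j$, two interlocking claims: (a)~when $R_2$ reaches the left peak $(2j\rho,3\rho)$ of $U(2j{+}1)$, $R_1$ has already finished the block $R_1^j$ and sits strictly above the line $y=\rho$; and (b)~when $R_1$ reaches the right peak $((2j{+}1)\rho,3\rho)$ of $R_1^j$, $R_2$ is still inside $U(2j{-}1)$. Claim~(a) at $j=\ell^2$ is the contradiction (after $R_1^{\ell^2}$ the curve $R_1$ never rises above $y=\rho$ again). To prove~(b), one assumes it fails at some~$j$, takes $R_1'$ to be the subcurve of $R_1^j$ from the \emph{first} occurrence of $((2j{-}1)\rho,\rho)$ up to the right peak $((2j{+}1)\rho,3\rho)$, and lets $R_2'$ be the portion of $R_2$ traversed alongside. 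By construction $R_1'=\sigma_1\circ\Tr_{2j\rho}(P_1^j)\circ\pi_1$ with $\sigma_1$ lying entirely on the vertical line $x=(2j{-}1)\rho$ at height $\ge\rho$ and $\pi_1$ on $x=(2j{+}1)\rho$ at height $\ge\rho$---precisely the positional hypothesis of \propref{PG}(ii). Claim~(a) for $j{-}1$ forces $R_2'$ to start before $P_2^j$, and the assumed failure of~(b) forces $R_2'$ to end after it, so $R_2'=\sigma_2\circ\Tr_{2j\rho}(P_2^j)\circ\pi_2$ and \propref{PG}(ii) yields the contradiction.

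Your extraction runs the other way: you fix the sub-interval where $R_2$ lies in $\Tr_{2j^\ast\rho}(P_2^{j^\ast})$ and assert that the matched $R_1$-portion has the form $\sigma_1\circ P_1^{j^\ast}\circ\pi_1$ with $\sigma_1$ a tail of $U_L(2j^\ast)$ above-left of $(-\rho,\rho)$ and $\pi_1$ a head of $U_R(2j^\ast)$ above-right of $(\rho,\rho)$. But your ``shift'' hypothesis only says $R_1$ stays somewhere in the block $U_L(2j^\ast)\circ P_1^{j^\ast}\circ U_R(2j^\ast)$; it does not force $R_1$ to start in $U_L$, end in $U_R$, fully cover $P_1^{j^\ast}$ in between, nor to avoid the initial diagonal edge $(2j^\ast\rho,0)\to((2j^\ast{-}1)\rho,\rho)$ of $U_L(2j^\ast)$, which after undoing the translation lies below and to the right of $(-\rho,\rho)$ and would violate the hypothesis on~$\sigma_1$. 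The paper's device---pin down the $R_1$-subcurve so that $\sigma_1,\pi_1$ automatically satisfy the positional constraints, then argue about what $R_2$ must cover---is exactly what makes \propref{PG}(ii) applicable. Your pigeon-hole intuition is sound and can be reorganised into the paper's inductive statement, but as written the extraction does not close.
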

\begin{proof}
  (i) Precisely, we have $|R_k| = \sum_{j=1}^{\ell^2} (|P_k^j| + 10) + 10(k-1)$ for $k \in \{1,2\}$. %The statement follows from $|P_k^j| \ge 1$ and $\ell \ge 1$.
  \vspace{0.9em}
  
  (ii) Let $k \in \{1,2\}$ and consider any ball $B = B(q,r)$. If $r \le 1$ then $B$ hits $\Oh(1)$ of the curves $P_k^j$. Since these curves are $c$-packed, their contribution to the total length of $R_k$ in~$B$ is at most $\Oh(c r)$. Moreover, $B$ hits $\Oh(1)$ segments of $U$ or $U_L,U_R$, and the connecting segments to $P_k^j$. Each of these segments has length at most $2 r$ inside $B$. This yields a total length of $R_k$ in $B$ of $\Oh((c+1) r)$.
  
  Similarly, if $r > 1$ then $B$ hits $\Oh(r)$ of the curves $P_k^j$. Note that the total length of $P_k^j$ is at most $c$, since the curve is $c$-packed and contained in a ball of radius~1 around $(0,0)$ or $(0,\rho)$ by \propref{PG}. Hence, the total length of of the curves $P_k^j$ in $B$ is $\Oh(cr)$. Moreover, $B$ hits $\Oh(r)$ segments of $U,U_L,U_R$, and the connectors to $P_k^j$, each of constant length. This yields a total length of $R_k$ in $B$ of $\Oh((c+1) r)$.
  
  In total, the curve $R_k$ is $\Oh(c+1)$-packed. 
  As $c \ge 1$, it is also $\Oh(c)$-packed. 
  Since for some $\alpha>0$ the curve $P_k^j$ is not $\alpha c$-packed, also $R_k$ is not $\alpha c$-packed, so $R_k$ is even $\Theta(c)$-packed.
  \vspace{0.9em}
  
  (iii) Note that $\dfr(R_1,R_2) \le \ddfr(R_1,R_2)$ holds in general, so we only have to show that if $\varphi$ is satisfiable then $\ddfr(R_1,R_2) \le 1$. 
  First we show that we can traverse one $U$-shape in~$R_1$ and one neighboring $U$-shape in $R_2$ together.
  
  \begin{claim}
    For any $j \in [\ell^2]$, we have $\ddfr(R_1^j, U(2j-1)) \le 1$ and $\ddfr(R_1^j, U(2j+1)) \le 1$.
  \end{claim}
  \begin{proof}
    We only show the first inequality, the second is similar. We start by traversing $U_L(2j)$ and the left half of $U(2j-1)$ in parallel, being at the $i$-th point of $U_L(2j)$ and $U(2j-1)$ at the same time. At any point in time we are within distance $\rho$. Now we step to $(2j\rho,\rho)$ in $U(2j-1)$. We stay there while traversing $\Tr_{2j\rho}(P_1^{j})$ in $R_1^j$, staying within distance~1 by \propref{PG}.(iv). Finally, we traverse $U_R(2j)$ and the second half of $U(2j-1)$ in parallel, where again the largest encountered distance is $\rho$.
  \end{proof}
  We can stitch these traversals together so that we traverse any number $j$ of neighboring $U$-shapes in both curves together, because the parts in between the $U$-shapes are near to a single point, as shown by the following claim. Note that $(2j\rho,0) \circ ((2j+2)\rho,0)$ is the connecting segment in $R_1$ between $U_R(2j)$ and $U_L(2j+2)$, while $((2j-1)\rho,0) \circ \Tr_{2j\rho}(P_2^{j}) \circ ((2j+1)\rho,0)$ is the part in~$R_2$ between $U(2j-1)$ and $U(2j+1)$.
  \begin{claim}
    For any $j \in [\ell^2]$,
    \begin{align*}
      &\ddfr((2j\rho,0) \circ ((2j+2)\rho,0), ((2j+1)\rho,0)) \le 1,  \\
      &\ddfr((2j\rho,0), ((2j-1)\rho,0) \circ \Tr_{2j\rho}(P_2^{j}) \circ ((2j+1)\rho,0)) \le 1.
    \end{align*}
  \end{claim}
  \begin{proof}
    The first claim is immediate. The second follows from \propref{PG}.(v).
  \end{proof}
  Thus, we can stitch together traversals of $U$-shapes in both curves. However, so far we can only traverse the same number of $U$-shapes in both curves, but $R_2$ has one more $U$-shape than $R_1$. 
  Consider $J \in [\ell^2]$ with $\ddfr(P_1^J,P_2^J) \le 1$, which exists since $\varphi$ is satisfiable, see \propref{PG}.(i). 
  Consider the two subcurves (also see the above figure)
  \begin{align*}
    R_1' &:= R_1^J = U_L(2J) \circ \Tr_{2J\rho}(P_1^{J}) \circ U_R(2J),  \\
    R_2' &:= U(2J-1) \circ \Tr_{2J\rho}(P_2^{J}) \circ U(2J+1).
  \end{align*}
  We show that $\ddfr(R_1',R_2') \le 1$, i.e., we can traverse two $U$-shapes in $R_2$ while traversing only one $U$-shape in $R_1$, using $\ddfr(P_1^J,P_2^J) \le 1$. Adding simple traversals of $U$-shapes before and after $(R_1',R_2')$, we obtain a traversal of $(R_1,R_2)$ with width $1$, proving $\ddfr(R_1,R_2) \le 1$. It is left to show the following claim.
  \begin{claim}
    $\ddfr(R_1',R_2') \le 1$.
  \end{claim}
  \begin{proof}
    We traverse $U_L(2J)$ and $U(2J-1)$ in parallel until we are at point $((2J-1)\rho,2\rho)$ in $U_L(2J)$. We stay in this point and follow $U(2J-1)$ until its second-to-last point. In the next step we can finish traversing $U_L(2J)$ and $U(2J-1)$. In the next step we go to the first positions of (the translated) $P_1^J$ and $P_2^J$. We follow any traversal of $(P_1^J,P_2^J)$ with width 1. Finally, we use a traversal symmetric to the one of $(U_L(2J),U(2J-1))$ to traverse $(U_R(2J),U(2J+1))$. 
  \end{proof}
  \vspace{0.5em}
  
  (iv) Note that the inequality $\ddfr(R_1,R_2) \ge \dfr(R_1,R_2)$ holds in general, so we only have to show that if $\varphi$ is not satisfiable then $\dfr(R_1,R_2) > \min\{\beta,1.2\}$. Assume for the sake of contradiction that there is a traversal of $(R_1,R_2)$ with width $\min\{\beta,1.2\}$. Essentially we show that it cannot traverse 2 $U$-shapes in $R_2$ while traversing only one $U$-shape in $R_1$, which implies a contradiction since the number of $U$-shapes in $R_2$ is larger than in $R_1$.
  
  Let $Y_\rho$ be the line $\{(x,y) \in \R^2 \mid y=\rho\}$.
  We inductively prove the following claims.
  
  \begin{claim}
    \begin{enumerate}
      \item[(i)] For any $0 \le j \le \ell^2$, when the traversal is in $R_2$ at the left highest point $(2j\rho,3\rho)$ of $U(2j+1)$, then in $R_1$ we fully traversed $R_1^j$ and are above the line~$Y_\rho$.
      \item[(ii)] For any $1 \le j \le \ell^2$, when the traversal is in $R_1$ at the right highest point $((2j+1)\rho,3\rho)$ of $R_1^j$, then in $R_2$ it is in $U(2j-1)$.
    \end{enumerate}
  \end{claim}
  
  Note that claim (i) for $j=\ell^2$ yields the desired contradiction, since after traversing $R_1^{\ell^2}$ the curve $R_1$ has ended (at the point $(2 \ell^2 \rho,0)$), so that we cannot go above the line $Y_\rho$ anymore.
  
  \begin{proof}
    (i) Note that we have to be above the line $Y_\rho$ because all points below $Y_\rho$ have distance at least $2 \rho >1.2$ to the point $(2j\rho,3\rho)$.
    For $j=0$, claim (i) holds immediately, since there is no subcurve $R_1^0$ (so this part of the statement disappears). In general, claim (i) for any $1 \le j \le \ell^2$ follows from claim (ii) for $j$: 
    When we are at $z_1 := ((2j+1)\rho,3\rho)$ in $R_1^j$, we are still in $U(2j-1)$.
    Once we reach the endpoint $z_2 := ((2j-1)\rho,0)$ of $U(2j-1)$, in $R_1$ we are at a point $p_1$ below the line $Y_\rho$, since all points in $R_2$ that follow $z_1$ and lie above $Y_\rho$ have distance more than $2\rho > 1.2$ to $z_2$. Now we follow $R_2$ until we reach $p_2 := (2j\rho,3\rho)$ in $U(2j+1)$. At this point we have to be above the line $Y_\rho$ in $R_1$, but all points in $R_1^j$ following~$p_1$ lie below~$Y_\rho$. Thus, at this point we have fully traversed $R_1^j$ (and have to be in~$R_1^{j+1}$).
    
    (ii) This claim for any $1 \le j \le \ell^2$ follows from claim (i) for $j-1$. Assume for the sake of contradiction that claim (ii) for some $j$ does not hold. Consider the subcurve $R_1'$ of $R_1^j$ between (the first occurrence of) $((2j-1)\rho,\rho)$ and $((2j+1)\rho,3\rho)$. Let $R_2'$ be the subcurve of $R_2$ that the traversal traverses together with~$R_1'$. Since $(R_1',R_2')$ forms a subtraversal of the traversal of $(R_1,R_2)$, which has width $\min\{\beta,1.2\}$, we have $\dfr(R_1',R_2') \le \min\{\beta,1.2\}$ (*). By claim (i) for $j-1$, the starting point of $R_2'$ lies before $\Tr_{2j\rho}(P_2^j)$ along $R_2$, since we reach $((2j-2)\rho,3\rho)$ in $U(2j-1)$ only after being in the starting point of $R_1'$. Moreover, the endpoint of $R_2'$ lies after $\Tr_{2j\rho}(P_2^j)$ along $R_2$. Indeed, while being at the endpoint $((2j+1)\rho,3\rho)$ of $R_1'$, we cannot be in $U(2j-1)$ since we assumed that claim (ii) is wrong for $j$. We can also not be in $\Tr_{2j\rho}(P_2^j)$, since by \propref{PG}.5 all points in this curve lie in a ball of radius 1 around $(2j\rho,0)$, so their distance to $((2j+1)\rho,3\rho)$ is at least $\|((2j+1)\rho,3\rho) - (2j\rho,0)\|-1 = \sqrt{5}-1 > 1.2$. Hence, we already passed $\Tr_{2j\rho}(P_2^j)$, and $R_2'$ is of the form $\sigma_2 \circ \Tr_{2j\rho}(P_2^j) \circ \pi_2$ for any curves $\sigma_2, \pi_2$. Note that $R_1'$ is of the form $\sigma_1 \circ \Tr_{2j\rho}(P_1^j) \circ \pi_1$ with $\sigma_1$ staying above and to the left of $((2j-1)\rho,\rho)$ and $\pi_1$ staying above and to the right of $((2j+1)\rho,\rho)$. Thus, after translation \propref{PG}.(ii) applies, proving $\dfr(R_1',R_2') > \beta$, a contradiction to (*).
  \end{proof}
\end{proof}

\paragraph{Proof of \thmref{cpacked}} 
Finally, we use the OR-gadget (\lemref{OR}) together with the curves $P_1^j,P_2^j$ we obtained from \secref{discrete} (\lemref{propConst}) to prove a runtime bound for $c$-packed curves: Any $1.001$-approximation for the (discrete or continuous) \Fr distance of $(R_1,R_2)$ decides satisfiability of $\varphi$. Note that $R_1$ and $R_2$ are $c$-packed with 
$$ c = \Theta(\mphi \cdot 2^{\nphi/2} / \ell), \quad 
n = \max\{|R_1|,|R_2|\} = \Theta(\ell^2 \mphi \cdot 2^{\nphi/2} / \ell). $$
Thus, any $\Oh((cn)^{1-\delta})$ algorithm for the \Fr distance implies a 
$\Oh(\mphi^2 2^{(1-\delta)\nphi})$ algorithm for \CNFSAT, contradicting \CNFSETH. Moreover, using \lemref{seth} we can assume that $1 \le \mphi \le 2^{\delta \nphi/4}$. Setting $\ell := \Theta(2^{\frac{1-\gamma}{1+\gamma} \nphi/2})$ for any $0 \le \gamma \le 1$ we obtain
\begin{align*} 
  \Omega( 2^{\frac{2}{1+\gamma} \nphi/2} ) \le \, &n \le \Oh( 2^{(\frac{2}{1+\gamma} + \delta/2) \nphi/2} ),  \\
  \Omega( 2^{\frac{2\gamma}{1+\gamma} \nphi/2} ) \le \, &c \le \Oh( 2^{(\frac{2\gamma}{1+\gamma} + \delta/2) \nphi/2} ).
\end{align*}
From this it follows that $\Omega(n^{\gamma-\delta/2}) \le c \le \Oh(n^{\gamma+\delta/2})$, which implies the desired polynomial restriction $n^{\gamma-\delta} \le c \le n^{\gamma+\delta}$ for sufficiently large $n$.

\subsection{Approximation schemes} \label{sec:cpackedFive}

In this section, we consider the dependence on $\eps$ of the runtime of a $(1+\eps)$-approximation for the \Fr distance on $c$-packed curves. We show that in $\R^d$ with $d \ge 5$ there is no such algorithm with runtime $\Oh(\min\{cn/\sqrt{\eps}, n^2\}^{1-\delta})$ for any $\delta>0$ unless \CNFSETH\ fails (\thmref{cpackedFive}). This matches the dependence on $\eps$ of the fastest known algorithm up to a polynomial. The result holds for sufficiently small $\eps>0$ and any polynomial restriction of $1 \le c \le n$ and $\eps \le 1$.

We will reuse the OR-gadget from the last section, embedded into the first two dimensions of~$\R^5$. Specifically, we will reuse \lemref{OR}. However, we adapt the curves $P_1^j,P_2^j$, essentially by embedding the same set of points in a different way. 

\begin{wrapfigure}{r}{0.3\textwidth}
\vspace{-10pt}
\begin{center}
\begin{tikzpicture}[every node/.style={fill, circle, inner sep = 1pt},scale=1.2,cross/.style={fill=white,inner sep = 2pt,path picture={ 
  \draw[black]
(path picture bounding box.south east) -- (path picture bounding box.north west) (path picture bounding box.south west) -- (path picture bounding box.north east);
}}]
  \foreach \x in {0,...,10}{
    \node at (45+90*\x/10:1) {};
  }
  \node[cross] at (0,0) {};
  
%  \coordinate[label={[label distance=0.07cm]below:{dimensions 1 and 2}}] at (0,0);
  
  \begin{scope}[xshift=2cm]
  \foreach \x in {0,...,10}{
    \node[cross] at (45+90*\x/10:1) {};
  }
  \node at (0,0) {};
  \end{scope}
 
  \draw (1,1.3) -- (1,-0.4);
\end{tikzpicture}
\end{center}
\vspace{-10pt}
\end{wrapfigure}

In this new embedding we make use of the fact that in $\R^4$ there are point sets $Z_1,Z_2$ of arbitrary size such that any pair of points $(z_1,z_2) \in Z_1 \times Z_2$ has distance 1. For an example, see the figure to the right,  where the left picture shows the projection onto the first two dimensions and the right picture shows the projection onto the last two dimensions. Here, $Z_1$ (circles) is placed along a quarter-circle in the $(1,2)$-plane and $Z_2$ (crosses) is placed along a quarter-circle in the $(3,4)$-plane. %Thus, the distance from any point in $Z_1$ to any point in $Z_2$ is equal.

\paragraph{Construction}
As usual, consider a \CNFSAT\ instance $\varphi$, partition its variables $x_1,\ldots,x_\nphi$ into two sets $V_1,V_2$ of size $N/2$, and consider any set $A_k$ of assignments of \true\ and \false\ to the variables in~$V_k$. Fix any enumeration $\{a_k^1,\ldots,a_k^{|A_k|}\}$ of $A_k$. Again set $\rho := 1/\sqrt{2}$. For $h \in [|A_k|]$ and $i \in \{0,\ldots,\mphi+1\}$ let
\begin{align*}
  \rot(a_1^h,i) &:= \big(\rho \sin\big(\tfrac\pi 4 + \tfrac\pi 2 \tfrac{h (\mphi +2) + i}{|A_1| \cdot (\mphi +2) }\big), \rho \cos\big(\tfrac\pi 4 + \tfrac\pi 2 \tfrac{h (\mphi +2)  + i}{|A_1| \cdot (\mphi +2) }\big),0,0,0\big),  \\
  \rot(a_2^h,i) &:= \big(0,0,\rho \sin\big(\tfrac\pi 4 + \tfrac\pi 2 \tfrac{h (\mphi +2)  + i}{|A_2| \cdot (\mphi +2) }\big), \rho \cos\big(\tfrac\pi 4 + \tfrac\pi 2 \tfrac{h (\mphi +2)  + i}{|A_2| \cdot (\mphi +2) }\big),0\big).
\end{align*}
Note that these points are placed along a quarter-circle in the $(1,2)$-plane or $(3,4)$-plane, respectively, as in the above figure. In particular, $\| \rot(a_1^h,i) - \rot(a_2^{h'},i')\| = 1$ for all $h,h',i,i'$.
Moreover, let $e_5$ be the vector $(0,0,0,0,\rho)$. For $a_k \in A_k$ and $i \in [\mphi]$ we set 
\begin{align*}
  CG(a_k,i) := \begin{cases} (1 - 2\eps) \,\rot(a_k,i) + (i \mod 2) \cdot 8 \sqrt{\eps} e_5,  &\text{if } \sat(a_k,C_i) = \true  \\
  (1 + \eps) \,\rot(a_k,i) + (i \mod 2) \cdot 8 \sqrt{\eps} e_5,  &\text{if } \sat(a_k,C_i) = \false \end{cases}
\end{align*}
Thus, we align the clause gadgets of $A_1$ roughly along a quarter-circle in the $(1,2)$-plane, and similarly the clause gadgets of $A_2$ roughly along a quarter-circle in the $(3,4)$-plane.
Moreover, for $a_k \in A_k$ we set
\begin{align*}
  &r_k(a_k) := \rot(a_k,0) - 8 \sqrt{\eps} e_5,  \\
  &s_1(a_1) := (1 - 400 \eps) \, \rot(a_1,0) + 10 \sqrt \eps e_5,  \\
  &t_1(a_1) := (1 - 400 \eps) \, \rot(a_1,\mphi+1) - 10 \sqrt \eps e_5,  \\
  &s_2 = t_2 := (0,0,0,0,0),  \\
  &s_2^* := (1 + 9 \sqrt \eps) e_5, \quad t_2^* := -(1+ 9 \sqrt \eps) e_5.
\end{align*}
We define assignment gadgets and the curves $P_1,P_2$ as in \secref{discrete}, i.e.,
\begin{align*}
  AG(a_k) &:= r_k(a_k) \circ \bigcirc_{i \in [\mphi]} CG(a_k,i),  \\
  P_1 &:= \bigcirc_{a_1 \in A_1} \big( s_1(a_1) \circ AG(a_1) \circ t_1(a_1) \big),  \\
  P_2 &:= s_2 \circ s_2^* \circ \Big( \bigcirc_{a_2 \in A_2} AG(a_2) \Big) \circ t_2^* \circ t_2.
\end{align*}

\paragraph{Analysis}
Again, we split the considered points into $Q_1,Q_2$, depending on whether they may appear on $P_1$ or $P_2$, i.e., $Q_1 := \{s_1(a_1), t_1(a_1), r_1(a_1), CG(a_1,i) \mid a_1 \in A_1, i \in [\mphi] \}$ and $Q_2 := \{s_2, t_2, s_2^*, t_2^*, r_2(a_2), CG(a_2,i) \mid a_2 \in A_2, i \in [\mphi] \}$. It is easy, but tedious to verify that the constructed points behave as follows. 

\begin{lemma} \label{lem:last}
  The following pairs of points have distance at most 1 for any $a_k \in A_k$:
  \begin{align*}
    &(q,s_2), (q,t_2) \text{ for any } q \in Q_1,  \\
    &(s_1(a_1),q) \text{ for any } q \in Q_2 \setminus \{t_2^*\},  \\
    &(t_1(a_1),q) \text{ for any } q \in Q_2 \setminus \{s_2^*\},  \\
    &(r_1(a_1),r_2(a_2)),  \\
    &(CG(a_1,i),CG(a_2,i)) \text{ if assignment $(a_1,a_2)$ satisfies clause $C_i$}.
  \end{align*}
  Moreover, the following pairs of points have distance more than $1+\eps$ for any $a_k \in A_k$:
  \begin{align*}
    &(q,s_2^*) \text{ for any } q \in Q_1 \setminus \{ s_1 \},  \\
    &(q,t_2^*) \text{ for any } q \in Q_1 \setminus \{ t_1 \},  \\
    &(r_1(a_1), CG(a_2,i)) \text{ for any } i \in [\mphi],  \\
    &(CG(a_1,i), r_2(a_2)) \text{ for any } i \in [\mphi],  \\
    &(CG(a_1,i), CG(a_2,j)) \text{ for any } i,j \in [\mphi], i \not\equiv j \bmod 2,  \\
    &(CG(a_1,i),CG(a_2,i)) \text{ if assignment $(a_1,a_2)$ does not satisfy clause $C_i$}.
  \end{align*}
\end{lemma}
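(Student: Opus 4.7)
The plan is to exploit the fact that every point in $Q_1 \cup Q_2$ decomposes into three mutually orthogonal contributions: a vector in the $(1,2)$-plane $W_1$ (where every $\rot(a_1,\cdot)$ sits), a vector in the $(3,4)$-plane $W_2$ (where every $\rot(a_2,\cdot)$ sits), and a scalar multiple of $e_5$. For any two points $p,q$ the squared distance equals the sum of squared distances of their $W_1$-, $W_2$-, and $e_5$-components. The one identity doing all the heavy lifting is
$$ \|\rot(a_1,i) - \rot(a_2,j)\|^2 \;=\; \|\rot(a_1,i)\|^2 + \|\rot(a_2,j)\|^2 \;=\; 2\rho^2 \;=\; 1, $$
valid for every $a_1,a_2,i,j$ by orthogonality of $W_1$ and $W_2$, together with the fact that $\|\rot(a_k,i) - \rot(a_k,j)\| \le 2\rho$ since both vectors lie on a circle of radius $\rho$. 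I would first tabulate, for every point in $Q_1\cup Q_2$, its three components (a possibly scaled rotation vector plus a scalar along $e_5$), so that each pair becomes a three-term Pythagorean sum.

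Second, I would dispatch the seven $\le 1$ families by direct substitution. The pairs $(q,s_2)$ and $(q,t_2)$ reduce, via $s_2=t_2=0$, to $\|q\|\le 1$, which for a clause gadget becomes $(1\pm 2\eps)^2\rho^2 + 64\eps\rho^2 \le 1$; the other cases plug into the master identity with small $e_5$ corrections. For a satisfying clause pair $(CG(a_1,i),CG(a_2,i))$, at least one of the two coefficients is $(1-2\eps)$, so the distance squared is at most $\tfrac12[(1-2\eps)^2+(1+\eps)^2] = 1-\eps+O(\eps^2) \le 1$. For $(r_1(a_1),r_2(a_2))$ the $e_5$-components coincide, leaving only the rotation part, which gives exactly distance $1$.

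Third, I would verify each $>1+\eps$ family using one of two separation mechanisms. Mechanism (a), coefficient mismatch: for an unsatisfied clause pair both coefficients are $(1+\eps)$ and the $e_5$-components agree, so the squared distance is exactly $(1+\eps)^2\cdot 1$. Mechanism (b), $e_5$-offset: any pair whose $e_5$-components differ by $\Theta(\sqrt\eps)\rho$ contributes $\Theta(\eps)$ to the squared distance, which easily dominates the $2\eps+\eps^2$ slack between $1$ and $(1+\eps)^2$. The pairs $(CG(a_1,i),CG(a_2,j))$ with $i\not\equiv j\pmod 2$, the pairs $(r_1,CG)$ and $(CG,r_2)$ (opposite sign of $e_5$-offset), and all pairs involving $s_2^*$ or $t_2^*$ with any $Q_1$-point other than the matching $s_1$ or $t_1$ fall under mechanism (b); the choice of exponent $\sqrt\eps$ in the offsets $8\sqrt\eps,\;9\sqrt\eps,\;10\sqrt\eps$ is precisely calibrated so this $\Theta(\eps)$ gain overtakes $2\eps$ with room to spare.

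The main obstacle is purely the number of cases and the arithmetic bookkeeping --- roughly a dozen distinct pairs each require assembling two or three orthogonal components and checking one inequality in $\eps$. There is no conceptual difficulty; the nontrivial design choice (using $\sqrt\eps$-scale $e_5$-offsets against $\eps$-scale radial perturbations) is visible from the outset and makes every inequality hold with $\Omega(\sqrt\eps)$ slack for sufficiently small $\eps$, so no borderline cases arise. The excluded pairs $(s_1,t_2^*)$ and $(t_1,s_2^*)$, as well as the $400\eps$ radial shrink of $s_1(a_1),t_1(a_1)$, are introduced solely to make the $e_5$-direction asymmetric so that these two pairs land on the $>1+\eps$ side while $(s_1,s_2^*)$ and $(t_1,t_2^*)$ land on the $\le 1$ side; I would check these four cases last and explicitly.
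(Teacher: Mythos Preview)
Your proposal is correct and follows essentially the same approach as the paper: decompose every point into its $W_1$-, $W_2$-, and $e_5$-components, use the orthogonality identity $\|\rot(a_1,i)-\rot(a_2,j)\|^2=2\rho^2=1$, and then check each family by a leading-order computation in $\eps$. The paper's own proof is in fact much terser than yours---it only remarks that one needs the leading $\eps$-term and works out a single representative case (\,$\|s_1(a_1)-r_2(a_2)\|$\,)---so your write-up is strictly more detailed, and your organisation into ``mechanism (a)'' (radial coefficient) and ``mechanism (b)'' ($e_5$-offset of order $\sqrt\eps$) makes explicit the design principle the paper leaves implicit.

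One small observation worth recording when you write it up: under mechanism~(a), the unsatisfied pair $(CG(a_1,i),CG(a_2,i))$ has squared distance exactly $(1+\eps)^2$, hence distance \emph{equal to} $1+\eps$ rather than strictly greater. This is your computation, not an error in it; the lemma as stated asks for ``more than $1+\eps$'', so either the inequality there should be read as $\ge 1+\eps$, or the radial coefficient $1+\eps$ in the $\false$ case should be nudged to, say, $1+2\eps$. Either fix is harmless for the downstream use (distinguishing $\le 1$ from $\ge 1+\eps$ suffices for Lemma~\ref{lem:lasttt}), but you should flag it.
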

\begin{proof}
  Using that $\eps$ is sufficiently small, we only have to compute the largest order term of~$\eps$ for all distances. E.g., for all $a_k \in A_k$
  $$\|s_1(a_1) - r_2(a_2)\| = \sqrt{ \rho^2( (1-400 \eps)^2 + 1 + (18 \sqrt \eps)^2) } = \sqrt{ 1 - 476 \eps + \Oh(\eps^2) } \le 1. $$
\end{proof}

Now we use these curves in the OR-gadget from the last section. To this end, again partition the set of all assignments of $V_k$ into sets $A_k^1,\ldots,A_k^\ell$ of size $\Theta(2^{\nphi/2}/\ell)$, where we fix $1 \le \ell \le 2^{\nphi/2}$ later. Use the above construction of $P_1,P_2$ after replacing $A_1$ by $A_1^{j_1}$ and $A_2$ by $A_2^{j_2}$ for any $j_1,j_2 \in [\ell]$ to obtain curves $P_1^{j_1j_2}, P_2^{j_1j_2}$. Slightly rename these curves so that we have curves $(P_1^j,P_2^j)$ for $j \in [\ell^2]$. Then these curves satisfy \propref{PG}.

\begin{lemma} \label{lem:lasttt}
  The curves $P_1^j,P_2^j$ satisfy \propref{PG} with $c = \Theta(1+\sqrt \eps \mphi |A_k|)$ and $\beta = 1+\eps$. Moreover, $|P_k^j| = \Theta( \mphi 2^{\nphi/2} / \ell)$ for any $j \in [\ell^2]$, $k \in \{1,2\}$.
\end{lemma}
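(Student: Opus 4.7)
The proof verifies \propref{PG} by adapting \lemref{propConst}, \lemref{constOnlyIf}, and \lemref{constIfnd} to the $5$-dimensional embedding, using \lemref{last} as the analogue of \lemref{distances}.

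Properties (iv) and (v) follow from direct computation: every vertex of $P_1^j$ equals $c\cdot\rot(a_1,i)$ with $|c-1|=\Oh(\eps)$ plus an $e_5$-offset of magnitude $\Oh(\sqrt{\eps})$; the underlying quarter-arc $\rot(a_1,i)$ has all points within distance $1$ of $(0,\rho,0,0,0)$, and the $\Oh(\sqrt{\eps})$ perturbation preserves this bound for sufficiently small $\eps$. Intermediate points on segments inherit the bound by convexity. Property (v) is analogous with $(0,0,0,0,0)$. For property (i), follow the discrete-traversal template of \lemref{constOnlyIf}: for $j=(j_1,j_2)$ with $A_1^{j_1}\times A_2^{j_2}$ containing a satisfying $(a_1,a_2)$, hold $s_2$ while advancing $P_1^j$ to the copy of $s_1(a_1)$ preceding $AG(a_1)$, hold $s_1(a_1)$ while advancing $P_2^j$ to $r_2(a_2)$ in $AG(a_2)$, walk $(CG(a_1,i),CG(a_2,i))_{i=1}^M$ in parallel, then mirror the opening through $t_1(a_1)$ and $t_2$; each pair has distance at most $1$ by \lemref{last}.

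Property (ii) is the main technical step and parallels \lemref{constIfnd}. Assume for contradiction a continuous traversal of $(\sigma_1\circ P_1^j\circ\pi_1,\sigma_2\circ P_2^j\circ\pi_2)$ of width $1+\eps$. From \lemref{last}, when the traversal is at $s_2^*$ in $P_2^j$ its position in $P_1^j$ must be near some copy of $s_1(a_1)$ (all other $Q_1$ vertices are at distance $>1+\eps$ from $s_2^*$, and the position constraints on $\sigma_1$ rule out $\sigma_1$-points); tracking already-traversed portions of $P_2^j$, the first time we reach $r_1(a_1)$ we must be near $r_2(a_2)$ in $AG(a_2)$ for some $a_2\in A_2^j$. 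An inductive argument analogous to the claim inside \lemref{constIfnd} then shows that the first moment the traversal reaches $CG(a_1,i)$ or $CG(a_2,i)$, it is within $\Oh(\sqrt{\eps})$ of both along their respective curves. The key rigidity facts are that every pair $(\rot(a_1,i),\rot(a_2,j))$ has distance exactly $1$ (so the scalings $(1\pm2\eps)$ perturb pairwise distances only by $\Oh(\eps)$) and that the alternating $e_5$-offset of $\pm 8\sqrt{\eps}$ between consecutive clause gadgets produces an $\eps$-scale separation that forbids a single traversal position from being within $1+\eps$ of two consecutive clause gadgets on the opposite curve. Since some $C_i$ is unsatisfied by $(a_1,a_2)$, \lemref{last} yields $\|CG(a_1,i)-CG(a_2,i)\|>1+\eps$, contradicting the assumed width. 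The main obstacle is proving this rigidity — essentially a $5$D analogue of the concluding planar distance computation at the end of \lemref{constIfnd}.

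For (iii) and the vertex count: $|P_k^j|=\Theta(M\cdot 2^{N/2}/\ell)$, since each of the $|A_k^j|=\Theta(2^{N/2}/\ell)$ assignment gadgets contributes $M+1$ vertices (and the connectors contribute $\Oh(|A_k^j|)$). Consecutive vertices within an assignment gadget differ by $\Theta(\sqrt{\eps})$, dominated by the alternating $e_5$-offset while the angular shift contributes only $\Oh(1/(M|A_k^j|))$, so $P_k^j$ has total length $\Theta(\sqrt{\eps}M|A_k^j|)+\Oh(|A_k^j|)$ and lies in a ball of radius $\Oh(1)$, giving packedness $\Omega(1+\sqrt{\eps}M|A_k^j|)$. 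For the matching upper bound, any ball $B(q,r)$ with $r\le 1$ meets $\Oh(1+r|A_k^j|M)$ consecutive vertices of the thin $\Oh(\sqrt{\eps})$-wide tube around the arc (using the $\Theta(1/(M|A_k^j|))$ angular spacing), each contributing a segment of length $\Oh(\sqrt{\eps})$, for total length $\Oh(r(1+\sqrt{\eps}M|A_k^j|))=\Oh(cr)$; larger radii are handled by the total-length bound. Thus $c=\Theta(1+\sqrt{\eps}M|A_k|)$ as claimed.
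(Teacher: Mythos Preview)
Your proposal is correct and follows essentially the same approach as the paper: rerun the arguments of \secref{discrete} and \secref{nondiscrete} with \lemref{last} playing the role of \lemref{distances}, verify (iv)--(v) directly from the quarter-circle placement, and establish (iii) by the angular-spacing/segment-length count. The only slip is in the packedness upper bound for radii $r<\sqrt\eps$, where each intersected segment contributes $\Oh(r)$ rather than $\Oh(\sqrt\eps)$; the paper handles this by bounding the per-segment contribution as $\Oh(\min\{r,\sqrt\eps+1/(\mphi|A_k^j|)\})$, which you should do as well.
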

\begin{proof}
  Using \lemref{last}, we can follow the proof in \secref{discrete}, since everything that we used about $P_1,P_2$ is captured by this lemma. This proves that if $\varphi$ is satisfiable then $\ddfr(P_1^j,P_2^j) \le 1$ for some $j \in [\ell^2]$, and if $\varphi$ is not satisfiable then $\ddfr(P_1^j,P_2^j) > 1+\eps$ for all $j \in [\ell^2]$, i.e., \propref{PG}.(i) and (ii) in the discrete case. The same adaptations as in \secref{nondiscrete} allow to prove correctness in the continuous case, we omit the details.
  
  It is easy to see that all constructed points lie within distance 1 of $(0,0,0,0,0)$, showing~(iv). For (v) we use that we placed the points along the upper quarter-circle, and not the full circle. This way, all points in $P_1^j$ have a distance to $(0,\rho,0,0,0)$ of at most $\|(0,\rho) - (\tfrac12,\tfrac12)\| + \Oh(\sqrt \eps) < 1$, for sufficiently small $\eps$. 
  
  For (iii) observe that all segments of $P_k^j$ (except for the finitely many segments incident to $s_2^*,t_2^*$) have length $\Theta(\sqrt \eps + 1/(\mphi |A_k^j|))$, $k \in \{1,2\}$. Moreover, the $\Theta(\mphi |A_k^j|)$ segments of $P_k^j$ are spread along a quarter-circle. Hence, any ball $B(q,r)$ intersects $\Oh(1 + \min\{1, r\} \mphi |A_k^j|)$ segments of $P_k^j$. Since each of these segments has length $\Oh(\min\{r, \sqrt \eps + 1/(\mphi |A_k^j|)\})$ in $B(q,r)$, the total length of $P_k^j$ in $B(q,r)$ is $\Oh(r(1+\sqrt \eps \mphi |A_k^j|))$. Thus, $P_k^j$ is $\Oh(1+\sqrt \eps \mphi |A_k^j|)$-packed. It is also $\Theta(1+\sqrt \eps \mphi |A_k^j|)$-packed, since all $\Theta(\mphi |A_k^j|)$ segments of length $\Theta(\sqrt \eps + 1/(\mphi |A_k^j|))$ lie in a ball of radius 1 around $(0,0,0,0,0)$ or $(0,\rho,0,0,0)$ by (iv) and (v). Finally, note that $|A_k^j| = 2^{\nphi/2}/\ell$.
\end{proof}

\paragraph{Proof of \thmref{cpackedFive}}
The above \lemref{lasttt} allows to apply \lemref{OR}, which constructs curves $R_1,R_2$ such that any $(1+\eps)$-approximation for the \Fr distance of $(R_1,R_2)$ decides satisfiability of $\varphi$. Since $R_1$ and $R_2$ are $c$-packed with
$$ c = \Theta(1 + \sqrt \eps \mphi 2^{\nphi/2} / \ell), \quad n = \max\{|R_1|,|R_2|\} = \Theta(\ell \mphi 2^{\nphi/2}), $$
we obtain that any $(1+\eps)$-approximation for the \Fr distance with runtime $\Oh((c n / \sqrt \eps)^{1-\delta})$ yields an algorithm for \CNFSAT\ with runtime
$\Oh( \mphi^2 2^{(1-\delta)\nphi} )$,
as long as $\ell = \Oh(\sqrt \eps \mphi 2^{\nphi/2})$. This contradicts \CNFSETH. 

Moreover, using \lemref{seth} we can assume that $1 \le \mphi \le 2^{\delta \nphi / 4}$. Setting $\ell := \Theta(\eps^{\frac 1{2(1+\gamma)}} 2^{\frac{1-\gamma}{1+\gamma} \nphi/2})$ for any $0 \le \gamma \le 1$, we obtain
\begin{align*}
  \eps^{\frac 1{2(1+\gamma)}} 2^{\frac{2}{1+\gamma} \nphi /2} \le \,&n \le \eps^{\frac 1{2(1+\gamma)}} 2^{(\frac{2}{1+\gamma} + \delta/2) \nphi /2},  \\
  \eps^{\frac \gamma{2(1+\gamma)}} 2^{\frac{2\gamma}{1+\gamma} \nphi/2} \le \,&c \le \eps^{\frac \gamma{2(1+\gamma)}} 2^{(\frac{2\gamma}{1+\gamma} + \delta/2) \nphi/2}.
\end{align*}
From this it follows that $\Omega(n^{\gamma - \delta/2}) \le c \le \Oh(n^{\gamma+\delta})$, which implies the desired polynomial restriction $n^{\gamma - \delta} \le c \le n^{\gamma + \delta}$ for sufficiently large $n$.
Note that this works as long as 
$$1 \le \ell \le \Oh(\sqrt \eps \mphi 2^{\nphi/2}).$$ 
Since $\ell = \Theta\big( \big( \sqrt \eps n / c \big)^{1/2} \big)$, the first inequality is equivalent to %$\eps \ge (c/n)^2$, or 
$cn/\sqrt \eps \le n^2$, which is a natural condition, since otherwise the exact algorithm for general curves is faster.
Plugging in the definition of $\ell = \Theta(\eps^{\frac 1{2(1+\gamma)}} 2^{\frac{1-\gamma}{1+\gamma} \nphi/2})$, the second inequality becomes $1/\eps \le \big(2^{\nphi} M^{(1+\gamma)/\gamma}\big)^2$. Since $(1+\gamma)/\gamma \ge 2$, $n = \Oh(\ell \mphi 2^{\nphi/2}) \le \Oh( M^2 2^\nphi)$, and $c \ge 1$, this is implied by the first condition $c n / \sqrt \eps \le n^2$. Hence, we may choose any sufficiently small $\eps = \eps(n)$ with $cn/\sqrt \eps \le n^2$.

\section{Conclusion} \label{sec:conclusion}

We presented strong evidence that the (continuous or discrete) \Fr distance has no strongly subquadratic algorithms, by relating this problem to the Strong Exponential Time Hypothesis.

Our extensions of this main result include approximation algorithms and realistic input curves ($c$-packed curves). 
These extensions leave three particularly interesting open questions, asking for new algorithms or improved lower bounds. Here, we use $\tilde \Oh$ to ignore any polylogarithmic factors in $n$, $c$, and $1/\eps$.

\begin{enumerate}
  \item Is there a strongly subquadratic $\Oh(1)$-approximation for the \Fr distance on general curves? %Or even a $3$-approximation?
  \item In any dimension $d \in \{2,3,4\}$, is there a $(1+\eps)$-approximation with runtime $\tilde \Oh(c n)$ for the \Fr distance on $c$-packed curves? Or is there even an exact algorithm with runtime $\tilde \Oh(c n)$?
  \item In any dimension $d \ge 5$, is there a $(1+\eps)$-approximation with runtime $\tilde \Oh(c n / \sqrt{\eps})$ for the \Fr distance on $c$-packed curves?
\end{enumerate}

\paragraph{Acknowledgements}
The author wants to thank Wolfgang Mulzer for introducing him to the problem, Marvin K\"unnemann for useful discussions, Nabil H. Mustafa for his enthusiasm and encouragement, and an anonymous referee for helpful comments and pointers to the orthogonal vectors problem.


\begin{thebibliography}{36}
\providecommand{\natexlab}[1]{#1}
\providecommand{\url}[1]{\texttt{#1}}
\expandafter\ifx\csname urlstyle\endcsname\relax
  \providecommand{\doi}[1]{doi: #1}\else
  \providecommand{\doi}{doi: \begingroup \urlstyle{rm}\Url}\fi

\bibitem[Abboud and Vassilevska~Williams(2014)]{AbboudVW14}
A.~Abboud and V.~Vassilevska~Williams.
\newblock Popular conjectures imply strong lower bounds for dynamic problems.
\newblock \emph{Preprint arXiv:1402.0054}, 2014.

\bibitem[Agarwal et~al.(2013)Agarwal, Avraham, Kaplan, and
  Sharir]{AgarwalBAKS13}
P.~Agarwal, R.~B. Avraham, H.~Kaplan, and M.~Sharir.
\newblock Computing the discrete {Fr\'echet} distance in subquadratic time.
\newblock In \emph{Proc.\ 24th Annu. ACM-SIAM Sympos. Discrete Algorithms
  (SODA'13)}, pages 156--167. SIAM, 2013.

\bibitem[Alt(2009)]{Alt09}
H.~Alt.
\newblock The computational geometry of comparing shapes.
\newblock In \emph{Efficient Algorithms}, volume 5760 of \emph{LNCS}, pages
  235--248. Springer, 2009.

\bibitem[Alt and Buchin(2010)]{AltB10}
H.~Alt and M.~Buchin.
\newblock Can we compute the similarity between surfaces?
\newblock \emph{Discrete \& Computational Geometry}, 43\penalty0 (1):\penalty0
  78--99, 2010.

\bibitem[Alt and Godau(1995)]{AltG95}
H.~Alt and M.~Godau.
\newblock Computing the {Fr\'echet} distance between two polygonal curves.
\newblock \emph{Internat. J. Comput. Geom. Appl.}, 5\penalty0 (1--2):\penalty0
  78--99, 1995.

\bibitem[Alt et~al.(2004)Alt, Knauer, and Wenk]{AltKW04}
H.~Alt, C.~Knauer, and C.~Wenk.
\newblock Comparison of distance measures for planar curves.
\newblock \emph{Algorithmica}, 38\penalty0 (1):\penalty0 45--58, 2004.

\bibitem[Aronov et~al.(2006)Aronov, Har-Peled, Knauer, Wang, and
  Wenk]{AronovHPKW06}
B.~Aronov, S.~Har-Peled, C.~Knauer, Y.~Wang, and C.~Wenk.
\newblock {Fr\'echet} distance for curves, revisited.
\newblock In \emph{Proc.\ 14th Annu. European Symp. Algorithms (ESA'06)}, pages
  52--63. Springer, 2006.

\bibitem[Avraham et~al.(2013)Avraham, Filtser, Kaplan, Katz, and
  Sharir]{AvrahamFKKS13}
R.~B. Avraham, O.~Filtser, H.~Kaplan, M.~J. Katz, and M.~Sharir.
\newblock The discrete {Fr\'echet} distance with shortcuts via approximate
  distance counting and selection techniques.
\newblock \emph{Preprint arXiv:1310.5245}, 2013.

\bibitem[Brakatsoulas et~al.(2005)Brakatsoulas, Pfoser, Salas, and
  Wenk]{BrakatsoulasPSW05}
S.~Brakatsoulas, D.~Pfoser, R.~Salas, and C.~Wenk.
\newblock On map-matching vehicle tracking data.
\newblock In \emph{Proc.\ 31st International Conf. Very Large Data Bases
  (VLDB'05)}, pages 853--864, 2005.

\bibitem[Buchin et~al.(2007)Buchin, Buchin, Knauer, Rote, and
  Wenk]{BuchinBKRW07}
K.~Buchin, M.~Buchin, C.~Knauer, G.~Rote, and C.~Wenk.
\newblock How difficult is it to walk the dog?
\newblock In \emph{Proc.\ 23rd Euro. Workshop on Comput. Geom. (EuroCG'07)},
  pages 170--173, 2007.

\bibitem[Buchin et~al.(2009)Buchin, Buchin, and Wang]{BuchinBW09}
K.~Buchin, M.~Buchin, and Y.~Wang.
\newblock Exact algorithms for partial curve matching via the {Fr\'echet}
  distance.
\newblock In \emph{Proc.\ 20th Annu. ACM-SIAM Symp. Discrete Algorithms
  (SODA'09)}, pages 645--654. SIAM, 2009.

\bibitem[Buchin et~al.(2011)Buchin, Buchin, Gudmundsson, L{\"o}ffler, and
  Luo]{BuchinBGLL11}
K.~Buchin, M.~Buchin, J.~Gudmundsson, M.~L{\"o}ffler, and J.~Luo.
\newblock Detecting commuting patterns by clustering subtrajectories.
\newblock \emph{Internat. J. Comput. Geom. Appl.}, 21\penalty0 (3):\penalty0
  253--282, 2011.

\bibitem[Buchin et~al.(2014)Buchin, Buchin, Meulemans, and Mulzer]{BuchinBMM14}
K.~Buchin, M.~Buchin, W.~Meulemans, and W.~Mulzer.
\newblock Four soviets walk the dog - with an application to {A}lt's
  conjecture.
\newblock In \emph{Proc.\ 25th Annu. ACM-SIAM Sympos. Discrete Algorithms
  (SODA'14)}, pages 1399--1413. SIAM, 2014.

\bibitem[Calabro et~al.(2006)Calabro, Impaglaizzo, and Paturi]{CalabroIP06}
C.~Calabro, R.~Impaglaizzo, and R.~Paturi.
\newblock A duality between clause width and clause density for {SAT}.
\newblock In \emph{Proc.\ 21st Annu. IEEE Conf. Computational Complexity
  (CCC'06)}, pages 252--260. IEEE, 2006.

\bibitem[Chambers et~al.(2010)Chambers, Colin~de Verdi{\`e}re, Erickson,
  Lazard, Lazarus, and Thite]{ChambersETAL10}
E.~W. Chambers, {\'E}.~Colin~de Verdi{\`e}re, J.~Erickson, S.~Lazard,
  F.~Lazarus, and S.~Thite.
\newblock Homotopic {Fr\'echet} distance between curves or, walking your dog in
  the woods in polynomial time.
\newblock \emph{Computational Geometry}, 43\penalty0 (3):\penalty0 295--311,
  2010.

\bibitem[Chen et~al.(2011)Chen, Driemel, Guibas, Nguyen, and Wenk]{ChenDGNW11}
D.~Chen, A.~Driemel, L.~J. Guibas, A.~Nguyen, and C.~Wenk.
\newblock Approximate map matching with respect to the {Fr\'echet} distance.
\newblock In \emph{Proc.\ 13th Workshop on Algorithm Engineering and
  Experiments (ALENEX'11)}, pages 75--83. SIAM, 2011.

\bibitem[Cook and Wenk(2010)]{Wenk2010geodesic}
A.~F. Cook and C.~Wenk.
\newblock Geodesic {Fr\'echet} distance inside a simple polygon.
\newblock \emph{ACM Transactions on Algorithms}, 7\penalty0 (1):\penalty0
  193--204, 2010.

\bibitem[Cygan et~al.(2012)Cygan, Dell, Lokshtanov, Marx, Nederlof, Okamoto,
  Paturi, Saurabh, and Wahlstr{\"o}m]{CyganETAL12}
M.~Cygan, H.~Dell, D.~Lokshtanov, D.~Marx, J.~Nederlof, Y.~Okamoto, R.~Paturi,
  S.~Saurabh, and M.~Wahlstr{\"o}m.
\newblock On problems as hard as {CNF-SAT}.
\newblock In \emph{Proc.\ 2012 IEEE Conf. Computational Complexity (CCC'12)},
  pages 74--84, 2012.

\bibitem[Dantsin and Hirsch(2009)]{DantsinH09}
E.~Dantsin and E.~A. Hirsch.
\newblock Worst-case upper bounds.
\newblock \emph{Handbook of Satisfiability}, 185:\penalty0 403--424, 2009.

\bibitem[Driemel and Har-Peled(2013)]{DriemelHP13}
A.~Driemel and S.~Har-Peled.
\newblock Jaywalking your dog: computing the {Fr\'echet} distance with
  shortcuts.
\newblock \emph{SIAM Journal on Computing}, 42\penalty0 (5):\penalty0
  1830--1866, 2013.

\bibitem[Driemel et~al.(2012)Driemel, Har-Peled, and Wenk]{DriemelHPW12}
A.~Driemel, S.~Har-Peled, and C.~Wenk.
\newblock Approximating the {Fr\'echet} distance for realistic curves in near
  linear time.
\newblock \emph{Discrete \& Computational Geometry}, 48\penalty0 (1):\penalty0
  94--127, 2012.

\bibitem[Eiter and Mannila(1994)]{EiterM94}
T.~Eiter and H.~Mannila.
\newblock Computing discrete {Fr\'echet} distance.
\newblock Technical Report CD-TR 94/64, Christian Doppler Laboratory for Expert
  Systems, TU Vienna, Austria, 1994.

\bibitem[Gajentaan and Overmars(1995)]{GajentaanO95}
A.~Gajentaan and M.~H. Overmars.
\newblock On a class of {$O(n^2)$} problems in computational geometry.
\newblock \emph{Comput. Geom. Theory Appl.}, 5\penalty0 (3):\penalty0 165--185,
  1995.

\bibitem[Godau(1991)]{Godau91}
M.~Godau.
\newblock A natural metric for curves - computing the distance for polygonal
  chains and approximation algorithms.
\newblock In \emph{{Proc. 8th Sympos. Theoret. Aspects Comput. Sci.
  (STACS'91)}}, volume 480 of \emph{LNCS}, pages 127--136. Springer, 1991.

\bibitem[Gr{\o}nlund and Pettie(2014)]{GronlundP14}
A.~Gr{\o}nlund and S.~Pettie.
\newblock Threesomes, degenerates, and love triangles.
\newblock In \emph{Proc.\ 55th Annu. IEEE Sympos. Foundations of Computer
  Science (FOCS'14)}, 2014.
\newblock To appear.

\bibitem[Har-Peled and Raichel(2011)]{HarPeledR11}
S.~Har-Peled and B.~Raichel.
\newblock The {Fr\'echet} distance revisited and extended.
\newblock In \emph{Proc.\ 27th Annu. Symp. Comp. Geometry (SoCG'11)}, pages
  448--457, New York, NY, USA, 2011. ACM.

\bibitem[Impagliazzo and Paturi(2001)]{ImpagliazzoP01}
R.~Impagliazzo and R.~Paturi.
\newblock On the complexity of {k-SAT}.
\newblock \emph{Journal of Computer and System Sciences}, 62:\penalty0
  367--375, 2001.

\bibitem[Impagliazzo et~al.(2001)Impagliazzo, Paturi, and
  Zane]{ImpagliazzoPZ01}
R.~Impagliazzo, R.~Paturi, and F.~Zane.
\newblock Which problems have strongly exponential complexity?
\newblock \emph{J. Comput. Syst. Sci.}, 63\penalty0 (4):\penalty0 512--530,
  2001.

\bibitem[Indyk(2002)]{Indyk02}
P.~Indyk.
\newblock Approximate nearest neighbor algorithms for {Fr\'echet} distance via
  product metrics.
\newblock In \emph{Proc.\ 18th Annu. Symp. Comp. Geometry (SoCG'02)}, pages
  102--106. ACM, 2002.

\bibitem[Lokshtanov et~al.(2011)Lokshtanov, Marx, and Saurabh]{LokshtanovMS11}
D.~Lokshtanov, D.~Marx, and S.~Saurabh.
\newblock Lower bounds based on the exponential time hypothesis.
\newblock \emph{Bulletin of the EATCS}, 105:\penalty0 41--72, 2011.

\bibitem[Maheshwari et~al.(2011)Maheshwari, Sack, Shahbaz, and
  Zarrabi-Zadeh]{MaheshwariSSZ11}
A.~Maheshwari, J.-R. Sack, K.~Shahbaz, and H.~Zarrabi-Zadeh.
\newblock {Fr\'echet} distance with speed limits.
\newblock \emph{Computational Geometry}, 44\penalty0 (2):\penalty0 110--120,
  2011.

\bibitem[Munich and Perona(1999)]{MunichP99}
M.~E. Munich and P.~Perona.
\newblock Continuous dynamic time warping for translation-invariant curve
  alignment with applications to signature verification.
\newblock In \emph{Proc.\ 7th Intl. Conf. Comp. Vision}, volume~1, pages
  108--115. IEEE, 1999.

\bibitem[P{\u{a}}tra{\c{s}}cu and Williams(2010)]{PatrascuW10}
M.~P{\u{a}}tra{\c{s}}cu and R.~Williams.
\newblock On the possibility of faster {SAT} algorithms.
\newblock In \emph{Proc.\ 21nd Annu. ACM-SIAM Symp. Discrete Algorithms
  (SODA'10)}, pages 1065--1075. SIAM, 2010.

\bibitem[Roditty and Vassilevska~Williams(2013)]{RodittyVW13}
L.~Roditty and V.~Vassilevska~Williams.
\newblock Fast approximation algorithms for the diameter and radius of sparse
  graphs.
\newblock In \emph{Proc.\ 45th Annu. ACM Symp. Theory of Computing (STOC'13)},
  pages 515--524. ACM, 2013.

\bibitem[Williams(2004)]{Williams04}
R.~Williams.
\newblock A new algorithm for optimal constraint satisfaction and its
  implications.
\newblock In \emph{Proc.\ 31th Int. Colloq. Automata, Languages, and
  Programming ({ICALP}'04)}, volume 3142 of \emph{{LNCS}}, pages 1227--1237.
  Springer, 2004.

\bibitem[Williams and Yu(2014)]{WilliamsY14}
R.~Williams and H.~Yu.
\newblock Finding orthogonal vectors in discrete structures.
\newblock In \emph{Proc.\ 25th Annu. ACM-SIAM Sympos. Discrete Algorithms
  (SODA'14)}, pages 1867--1877. SIAM, 2014.

\end{thebibliography}
\end{document}